\documentclass[twoside,11pt]{article}

\usepackage{blindtext}

%

%
%
%

\usepackage{jmlr2e}



\usepackage[final]{microtype}

\usepackage{amssymb,
amsthm,
amsmath,mathrsfs,bm,color}
\usepackage{graphicx}
\usepackage{subcaption}
\usepackage{multirow}
\usepackage{booktabs}
\usepackage{epsfig}
\usepackage{natbib}
\usepackage{setspace}
\usepackage{enumerate}
\usepackage{threeparttable}

\usepackage{url}
\usepackage{doi}
\usepackage{algorithm,algorithmic}
%
%
%
%
%
%
%
\numberwithin{equation}{section}

\newcommand{\defn}{\stackrel{\mbox{{\tiny def}}}{=}}
\newcommand{\trans}{^{\mbox{\tiny{T}}}}
\newtheorem{theo}{\bf Theorem}
\newtheorem{lemm}{\bf Lemma}

\def\I{{\bf I}}

\def\x{{\bf x}}
\def\y{{\bf y}}

\def\calD{\mathcal{D}}

\def\wh{\widehat}
\def\wb{\overline}
\def\wt{\widetilde}
\def\cov{\textrm{cov}}

\def\tr{\textrm{tr}}
\def\dist{\mathrm{dist}}

\def\beqr{\begin{eqnarray}}
\def\eeqr{\end{eqnarray}}
\def\beqrs{\begin{eqnarray*}}
\def\eeqrs{\end{eqnarray*}}
\def\bb{\mbox{\boldmath$\beta$}}
\def\ba{\mbox{\boldmath$\alpha$}}

\def\bSig{\mathbf{\Sigma}}
\def\beps{\mbox{\boldmath$\varepsilon$}}

\def\mR{\mathbb{R}}

\newcommand{\M}{\mathbf{M}}

\newcommand{\X}{\mathbf{X}}

\newcommand{\diag}{\mathrm{diag}}

\newcommand{\strans}{^{*\mbox{\tiny{T}}}}

\newcommand{\bPhi}{\mathbf{\Phi}}

\newcommand{\bT}{\mathbf{T}}

\newcommand{\Y}{\mathbf{Y}}

\newcommand{\B}{\mathbf{B}}
\newcommand{\bC}{\mathbf{C}}
\newcommand{\V}{\mathbf{V}}
\newcommand{\U}{\mathbf{U}}
\newcommand{\D}{\mathbf{D}}
\newcommand{\bv}{\mathbf{v}}
\newcommand{\bu}{\mathbf{u}}
\newcommand{\A}{\mathbf{A}}

\usepackage{url}
\DeclareFontFamily{U}{mathx}{\hyphenchar\font45}
\DeclareFontShape{U}{mathx}{m}{n}{
	<5> <6> <7> <8> <9> <10>
	<10.95> <12> <14.4> <17.28> <20.74> <24.88>
	mathx10
}{}
\DeclareSymbolFont{mathx}{U}{mathx}{m}{n}
\DeclareFontSubstitution{U}{mathx}{m}{n}
\DeclareMathAccent{\widecheck}{0}{mathx}{"71}
\DeclareMathAccent{\wideparen}{0}{mathx}{"75}



\newcommand{\bhSig}{\widehat{\bSig}}

\newcommand{\argmin}{\mathop{\rm arg\min}}
\newcommand{\argmax}{\mathop{\rm arg\max}}

\def\Z{{\bf Z}}

\newcommand{\pool}{\mathrm{pool}}
\newcommand{\norm}[1]{\Vert#1\Vert}
\newcommand{\Norm}[1]{\left\Vert#1\right\Vert}
\newcommand{\mnorm}[1]{{\vert\kern-0.25ex\vert\kern-0.25ex\vert #1 
    \vert\kern-0.25ex\vert\kern-0.25ex\vert}}
\newcommand{\Mnorm}[1]{{\left\vert\kern-0.25ex\left\vert\kern-0.25ex\left\vert #1 
    \right\vert\kern-0.25ex\right\vert\kern-0.25ex\right\vert}}
\newcommand{\inner}[2]{\langle #1, #2 \rangle}
\newcommand{\Inner}[2]{\left\langle #1, #2 \right\rangle}
\newcommand{\abs}[1]{\vert#1\vert}
\newcommand{\Abs}[1]{\left\vert#1\right\vert}

\newtheorem{coro}{\bf Corollary}




\usepackage{caption}

\usepackage{tikz}

\definecolor{myblue}{HTML}{0072BD}
\definecolor{mygreen}{HTML}{77AC30}
\definecolor{myred}{HTML}{D95319}
\definecolor{mypurple}{HTML}{7E2F8E}
\DeclareRobustCommand\full  {\tikz[baseline=-0.6ex]{\draw[thick,myred] (0.2,0)--(0.3,0);
\draw[thick,myred] (0.25,-0.15)--(0.25,0.15);
\draw[thick,myred] (0.15,-0.1)--(0.35,0.1);
\draw[thick,myred] (0.15,0.1)--(0.35,-0.1);
\draw[thick,myred] (0,0)--(0.5,0);}}
\DeclareRobustCommand\dottedcircle{\tikz[baseline=-0.6ex]{\draw[thick, densely  dotted,mypurple](0,0)--(0.5,0); \draw[mypurple, solid, thick] (0.25,0) circle (1.2mm);}}

\DeclareRobustCommand\dashed{\tikz[baseline=-0.6ex]{\draw[thick,solid,myblue](2.mm,-1.2mm) rectangle (4mm,1.2mm);\draw[thick,dashed,myblue] (0,0)--(0.54,0);}}
\DeclareRobustCommand\chain {\tikz[baseline=-0.6ex]{\draw[thick,solid,mygreen] (0.35em,0.2em) --(0.95em,0.2em) -- (0.65em,-0.4em) -- (0.35em,0.2em);\draw[thick,dash dot dot,mygreen] (0,0)--(0.54,0);}}

\newcommand{\bphi}{\bm{\phi}}
\newcommand{\bpsi}{\bm{\psi}}
\newcommand{\bPsi}{\bm{\Psi}}

\newcommand{\bH}{\mathbf{H}}
\newcommand{\bhH}{\wh\bH}
\newcommand{\g}{\mathbf{g}}
\newcommand{\W}{\mathbf{W}}

\newcommand{\bhv}{\wh\bv}
\newcommand{\bSigx}{\bSig_{\x}}
\newcommand{\bSigy}{\bSig_{\y}}
\newcommand{\bSigxy}{\bSig_{\x, \y}}
\newcommand{\bhSigx}{\widehat{\bSig}_{\x}}
\newcommand{\bhSigy}{\widehat{\bSig}_{\y}}
\newcommand{\bhSigxy}{\widehat{\bSig}_{\x, \y}}
\newcommand{\rhob}{\overline{\rho}}

\newcommand{\bmr}{\bm{r}}
\newcommand{\bmtr}{\wt\bmr}
\newcommand{\bQ}{\mathbf{Q}}
\newcommand{\bR}{\mathbf{R}}

\usepackage{cleveref}
\crefname{theo}{Theorem}{Theorems}
\crefname{lemm}{Lemma}{Lemmas}
\crefname{coro}{corollary}{Corollaries}
\usepackage{xargs}                      
\usepackage[colorinlistoftodos,prependcaption,textsize=tiny]{todonotes}
\newcommandx{\unsure}[2][1=]{\todo[linecolor=red,backgroundcolor=red!25,bordercolor=red,#1]{#2}}
\newcommandx{\change}[2][1=]{\todo[linecolor=blue,backgroundcolor=blue!25,bordercolor=blue,#1]{#2}}
\newcommandx{\info}[2][1=]{\todo[linecolor=OliveGreen,backgroundcolor=OliveGreen!25,bordercolor=OliveGreen,#1]{#2}}
\newcommandx{\improvement}[2][1=]{\todo[linecolor=Plum,backgroundcolor=Plum!25,bordercolor=Plum,#1]{#2}}
\newcommandx{\thiswillnotshow}[2][1=]{\todo[disable,#1]{#2}}



\usepackage{lastpage}
\jmlrheading{23}{2022}{1-\pageref{LastPage}}{1/21; Revised 5/22}{9/22}{21-0000}{Author One and Author Two}


\ShortHeadings{Distributed Estimation and Gap-Free  Analysis of Canonical Correlations}{Chen and Zhu}
\firstpageno{1}

\begin{document}

\title{Distributed Estimation and Gap-Free  Analysis of Canonical Correlations}

\author{\name Canyi Chen \email canyic@umich.edu \\
       \addr Department of Biostatistics\\
       University of Michigan\\
       Ann Arbor, MI 48109, USA
       \AND
       \name Liping
  Zhu \email zhu.liping@ruc.edu.cn \\
       \addr        Center for Applied Statistics\\
       Institute of Statistics and Big Data\\
       Renmin University of China\\
       Beijing 100872, China}

\editor{My editor}

\maketitle
\begin{abstract}
Massive data analysis calls for distributed algorithms and theories.  We design a multi-round distributed algorithm for canonical correlation analysis. We construct principal directions through convex formulation of canonical correlation analysis, and use the shift-and-invert preconditioning iteration to expedite the convergence rate. This distributed algorithm is communication-efficient.  The resultant estimate achieves the same convergence rate as if all observations were pooled together, but does not impose stringent restrictions on the number of machines. We take a gap-free analysis to bypass the widely used yet unrealistic assumption of an explicit gap between the successive canonical correlations in the canonical correlation analysis. 
Extensive simulations and applications to three benchmark image data are conducted to demonstrate the empirical performance of our proposed algorithms and theories. 
\end{abstract}

\begin{keywords}
  Canonical correlation, distributed analysis,  gap-free bound.
\end{keywords}

\section{Introduction}

Rapid advances in information technology allow people to collect data of unprecedented size in various scientific areas such as genomics, neuroscience and physics. These datasets are quite often simultaneously characterized by high dimensions and large sample sizes.  Analyzing massive datasets in a single machine through conventional in-memory statistical algorithms is perhaps prohibitive due to constraints such as privacy concerns, limited memory and storage space  \citep{fan2014ChallengesBigData}.  Easing these constraints calls for distributed algorithms and theories with low communication costs. 

We are concerned with canonical correlation analysis  \citep{hotelling1936RelationsTwoSets} that has already grown into a powerful arsenal for revealing mutual variability in multiple views of two sets of random vectors.   It has seen many successful applications in regression, clustering and word embedding \citep{dhillon2011MultiviewLearningWord}, among many others.  In the past two decades many efforts have been made to establish theoretical properties of  canonical correlation analysis in the regimes of finite- \citep{anderson1999AsymptoticTheoryCanonical} and  infinite-dimensions \citep{hardoon2004CanonicalCorrelationAnalysis,bao2014CanonicalCorrelationCoefficients, gao2015MinimaxEstimationSparse,gao2017SparseCCAAdaptive, ma2020SubspacePerspectiveCanonical}. Existing works implicitly assume that the observations are stored and processed in a single machine.

Canonical correlation analysis of massive data calls for distributed algorithms and theories. Towards this goal, many efforts have been made in a star-networked distributed setting \citep{zaharia2016ApacheSparkUnified}.  These algorithms can be classified into two classes: single- or multiple-round.  In particular, in the first  class, \cite{lv2020OneshotDistributedAlgorithm} developed a  one-shot divide-and-conquer algorithm for canonical correlation analysis, which proceeds as follows.  Each local machine conducts canonical correlation analysis and transmits the local results to the central machine over which all local estimates are aggregated to produce a final solution.  This algorithm indeed generalizes the procedure proposed by  \cite{fan2019DistributedEstimationPrincipal} to recover the eigenspace of principal component analysis. This generalization, however, suffers from at least two essential issues.  First, the local estimates of canonical correlations are no longer unbiased. Therefore, for the divide-and-conquer algorithm to achieve an optimal convergence rate, a few undesirable and even unrealistic restrictions on the number of local machines have to be imposed to eliminate the biases of the local estimates \citep{fan2019DistributedEstimationPrincipal,chen2021DistributedEstimationPrincipal}.  In a sensor network, the number of local machines usually violates the restrictions. Second, for the canonical correlation analysis to fulfill the orthogonal constraints, the divide-and-conquer algorithm requires each local machine to transmit the full covariance matrices to the central one in order to perform matrix decomposition. This additional procedure introduces an undesirable and even unbearable communication cost, particularly when the random vectors are high or even ultrahigh dimensional.  Many  multiple-round  distributed algorithms,  which fall into the second class, are also designed to relax the restrictions on the number of machines that are usually required by  distributed principal component analysis.	 To be precise,  \cite{garber2017CommunicationefficientAlgorithmsDistributeda}  presented two algorithms to estimate the leading principal component, which were later extended to a general number of principal components by \cite{chen2021DistributedEstimationPrincipal}. 

The  above methodologies designed for distributed principal component analysis, however, cannot be adapted directly  to  distributed canonical correlation analysis because the latter involves inversions of two large covariance matrices. Using the pooled inversion would result in significant communication costs, while using the local one would result in non-negligible estimation biases. In addition, calculating  inversions of   large covariance matrices usually require very  expensive computation costs.

In the present article, we design a  distributed algorithm for canonical correlation analysis that greatly mitigates  the restrictions on the number of local machines and simultaneously eases the communication cost by avoiding communicating the full covariance matrices. The resultant distributed estimate attains an optimal convergence rate after a finite number of communications and fully respects the orthogonal constraints in the canonical correlation analysis. To be precise, we construct the principal directions through convex formulation of canonical correlation analysis, and use the shift-and-invert preconditioning  \citep{golub1983MatrixComputations}, together with quadratic programming, to expedite the convergence rate of our proposed distributed algorithm.  Solving canonical correlation analysis through shift-and-invert preconditioning dates back to \cite{golub1983MatrixComputations} and \cite{golub1995CanonicalCorrelationsMatrix}. It is an iterative method that transforms the canonical correlation analysis into a series of least squares problems in order to produce a sequence of increasingly accurate estimates. 
\cite{ma2015FindingLinearStructure}, \cite{wang2016EfficientGloballyConvergent}, \cite{allen-zhu2017DoublyAcceleratedMethods} and \cite{gao2019StochasticCanonicalCorrelation}  use this concept to solve the pooled canonical component analysis, where the least squares problems are solved by stochastic techniques. However, when this concept is implemented in the distributed canonical component analysis,  numerous iterations may be involved and consequently,  unbearable communication costs are to be introduced.  We shall show that,  the distributed estimate of our proposed algorithm achieves the same convergence rate as if all observations are pooled together  after a finite number of communications.  This merely requires to transmit vectors and thus reduces the communication cost substantially. By contrast, the divide-and-conquer method proposed by  \cite{lv2020OneshotDistributedAlgorithm} requires to transmit large dimensional matrices, which results in non-negligible  communication cost.

We further make an important theoretical contribution to the literature by deriving a gap-free error bound for the distributed canonical correlation analysis.  Most existing convergence analyses depend upon the assumption of an explicit canonical correlation gap, $\rho_L - \rho_{L + 1}> B_L$, where $\rho_\ell$ is the $\ell$-th largest canonical correlation, $L$ is a user-specified integer and $B_L$ is a theoretical lower bound. Examples of requiring this assumption include, but not limited to,   \cite{wang2016EfficientGloballyConvergent} and \cite{gao2019StochasticCanonicalCorrelation}. This assumption is widely used though, it is likely unrealistic. We demonstrate this phenomenon through three benchmark image data sets,  MEDIAMILL \citep{snoek2006ChallengeProblemAutomated}, MNIST \citep{lecun1998GradientbasedLearningApplied}, and MFEAT \citep{Dua:2019}.  The Wilks' lambda test  \citep{anderson2003IntroductionMultivariateStatistical} indicates that the top $50$ canonical correlations are significant with p-values all less than $0.01$. In \Cref{fig:gap_plot},  we plot the histogram of the first-order differences, $\rho_{\ell + 1} - \rho_{\ell}$, of the top $50$ canonical correlations. We also mark the theoretical lower bound of $B_L$   derived by \cite{cheng2021TacklingSmallEigengaps} with a small diamond on the horizontal axis.  It can be clearly seen that, all the first-order differences of the top $50$ canonical correlations are substantially smaller than the lower bounds, violating the theoretical requirement apparently.  We take a gap-free analysis to bypass this widely used and yet unrealistic requirement in the distributed canonical correlation analysis.

\graphicspath{{figs/}}
\begin{figure}[!htbp]
	\centerline{
		\begin{tabular}{ccc}
		\psfig{figure=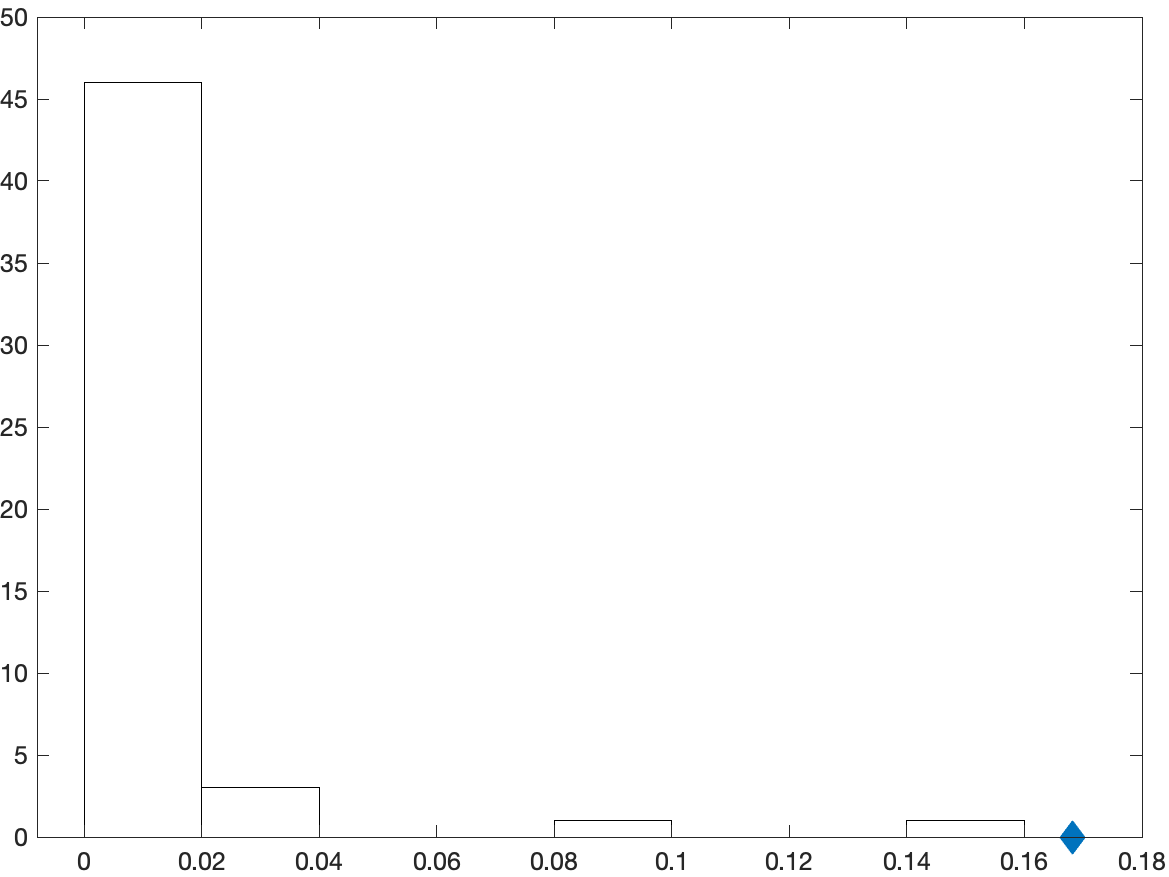,width=1.8in,angle=0} & \psfig{figure=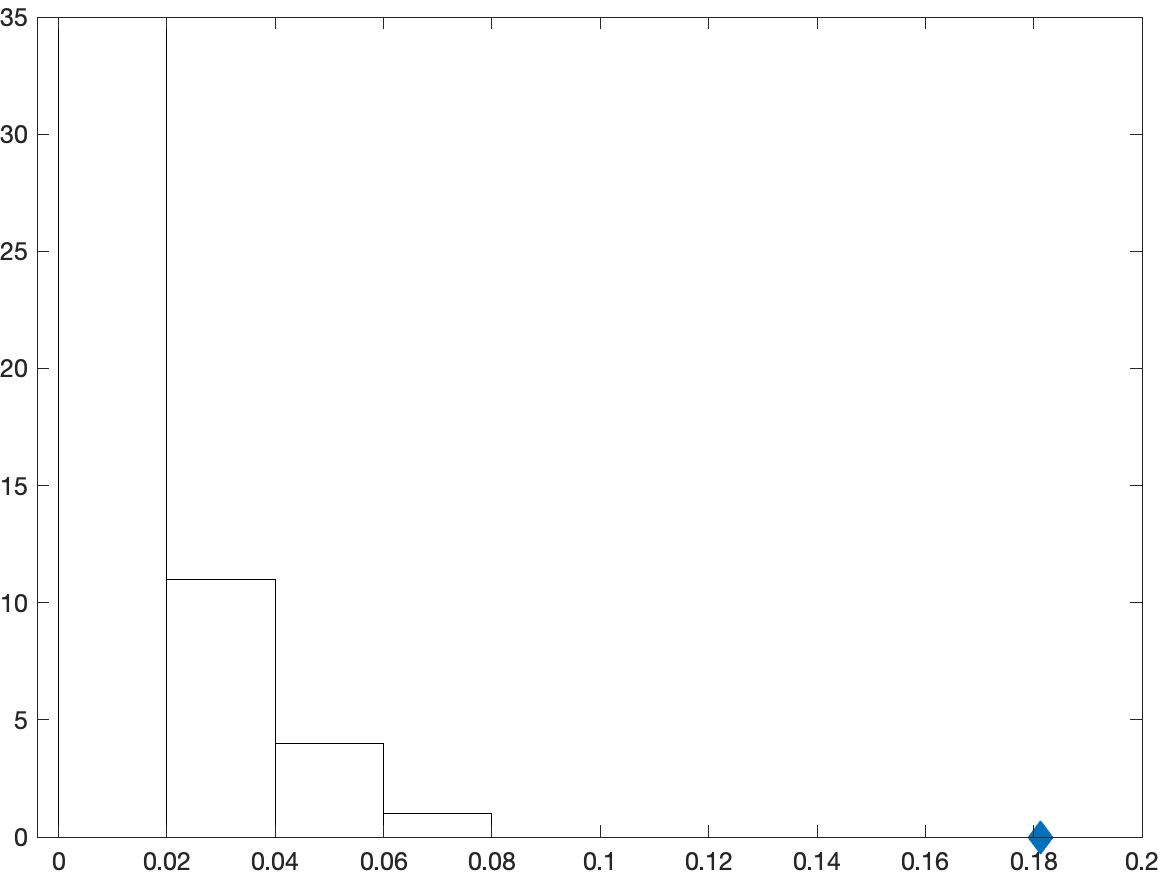,width=1.8in,angle=0} & \psfig{figure=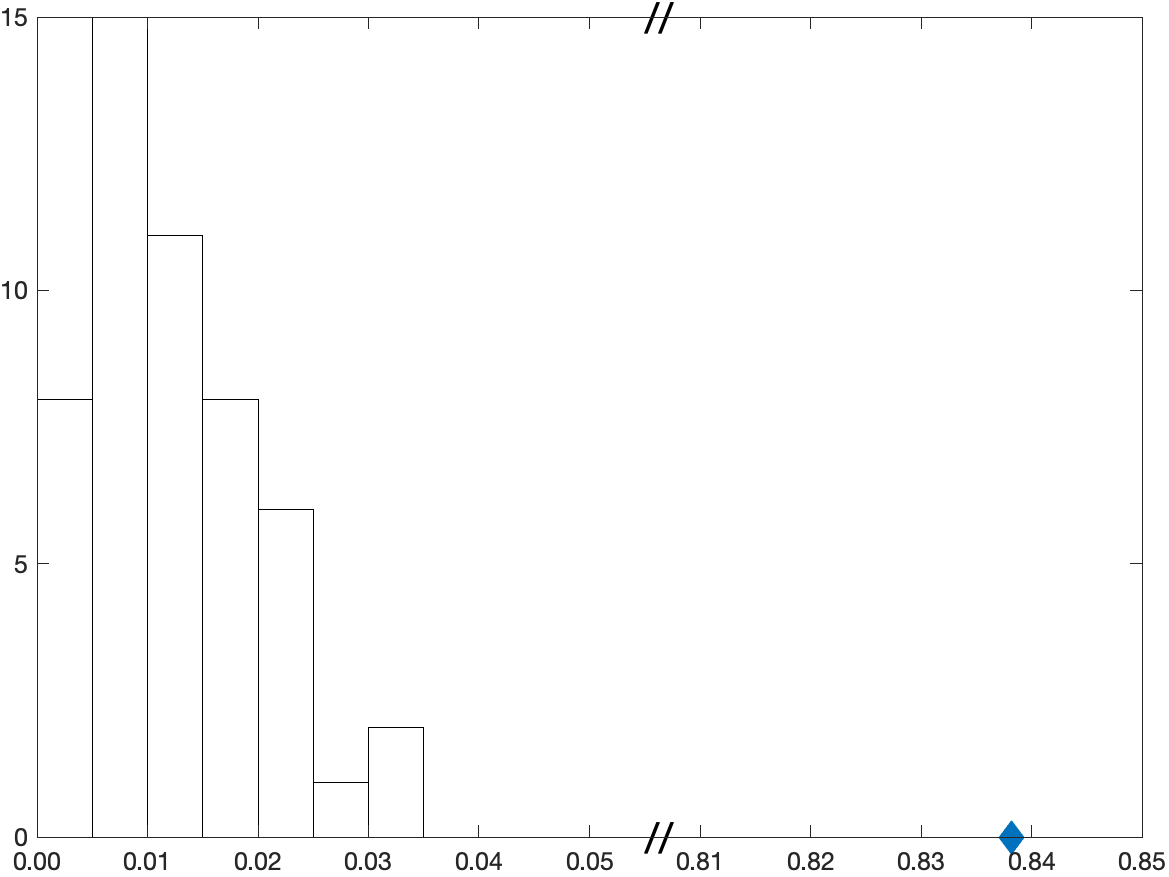,width=1.8in,angle=0}\\
			(A): MMILL   & (B): MNIST  & MFEAT
		\end{tabular}
	}
	  \captionsetup{font=footnotesize}
	\caption{The histograms of the first-order differences of the top $50$  canonical correlations for the MMILL, MNIST, and MFEAT datasets. The diamonds marked in the horizontal axes stand for the theoretical lower bounds required by \cite{cheng2021TacklingSmallEigengaps}. }
	\label{fig:gap_plot}
\end{figure}

The rest of this paper is organized as follows.  We develop a   distributed algorithm for the canonical correlation analysis in  \Cref{section:methodology} and derive an asymptotic error bound for the resultant distributed estimate in \Cref{section:theory}. Extensive simulations and applications to three benchmark image data sets are conducted in \Cref{section:numerical_study} and \Cref{section:applicatioin}   to augment our theoretical findings. We conclude this paper with brief discussions in \Cref{section:conclusion}. 
All technical proofs are relegated to the appendices.

The following notations will be used repetitively in subsequent exposition. 
We use $C, C_0, C_1, \ldots, c, c_0, c_1,\ldots$ to denote generic constants which may vary at each appearance.    A random vector $\x$ is said to be sub-Gaussian \citep{wainwright2019HighDimensionalStatisticsNonAsymptotic} if it satisfies  $\sup_{\norm{\scriptsize\ba} = 1} E \Big[ \exp\{t(\ba\trans\x)^2\} \Big] \leq C$ for some $t>0$ and $C>0$.
For  a matrix  $\A = (a_{kl})\in\mR^{p\times q}$,  we define $\norm{\A} $ and $\norm{\A}_F$ to be the respective spectral and Frobenius norms of $\A$.  In particular, if $\A$ is a vector, $\norm{\A} = \norm{\A}_F$.    We use $\sigma_k(\A)$ to denote the $k$-th largest singular value of a matrix $\A$, and use the more obvious notations $\sigma_{\max}(\A)$ and $\sigma_{\min}(\A)$ to denote the largest and smallest singular values of $\A$ respectively.  By definition, $\norm{\A}  = \sigma_{\max}(\A) = \sigma_1(\A)$.  Let $\W$ be a generic Hermitian semi-positive definite matrix.  For matrices $\A_1$ and $\A_2$, we  define  $\inner{\A_1}{\A_2}_\W \defn \A_1\trans \W\A_2$, which differs from  the usual inner product  unless both $\A_1$ and $\A_2$ are vectors. In particular,  $\inner{\A_1}{\A_2} \defn \A_1\trans \A_2$.  Let ${\bf0}_{p\times q}$ be a matrix with all entries being zero and $\I_{p\times p}$ be an identity matrix throughout.

\section{{Methodology Development}\label{section:methodology}}

\subsection{A brief review}
Canonical correlation analysis aims to identify the co-variations between two sets of random vectors.  To be precise, let $\x\in\mR^{d_\x}$ and $\y\in\mR^{d_\y}$ have zero  mean.  We denote the joint distribution function of $(\x, \y)$ by $\calD(\x, \y)$, which is  sub-Gaussian.  Define $\bSigx\defn \cov(\x,\x\trans)$, $\bSigy\defn \cov( \y, \y\trans)$ and $\bSigxy\defn \cov(\x,\y\trans)$ to be the  auto- and cross-covariance matrices. Canonical correlation analysis is invariant to linear transformations of the two random vectors   \citep{gao2019StochasticCanonicalCorrelation}, which allows us to   assume without loss of generality that $\max\{\sigma_{\max}(\bSigx), \sigma_{\max}(\bSigy) \} \leq 1$ and $\gamma \defn\min\{\sigma_{\min}(\bSigx), \sigma_{\min}(\bSigy)\}>0$.   Let $L$ be a user-specified integer such that $1\le L\le \min(d_\x, d_\y)$. Define ${\cal O}_\U  \defn  \{\U\in\mR^{d_\x\times L}:  \inner{\U}{\U}_{\bSigx} =  \I_{L\times L} \}$ and ${\cal O}_\V  \defn  \{\V\in\mR^{d_\y\times L}:  \inner{\V}{\V}_{\bSigy} =  \I_{L\times L} \}$.  Canonical correlation analysis solves the non-convex optimization of
\beqr\label{CCA_population}
 (\U_{1:L}^{\ast}, \V_{1:L}^{\ast}) \ \defn\  \argmax_{
 \U\in{\cal O}_\U, \V\in{\cal O}_\V} 
\tr\{ \inner{\U }{\V}_{\bSigxy}\},
\eeqr
where $\tr(\cdot)$ stands for  the trace operator, $\U_{1:L}^{\ast} = (\bu_1^{\ast}, \ldots, \bu_L^{\ast})$ and $\V_{1:L}^{\ast} = (\bv_1^{\ast}, \ldots, \bv_L^{\ast})$,  $(\bu_\ell^{\ast}, \bv_\ell^{\ast})$   stand for the $\ell$-th pair of canonical directions and $\rho_\ell\defn  \inner{\bu_\ell}{\bv_\ell}_{\bSigxy}$ is $\ell$-th canonical correlation satisfying $1\geq \rho_1\geq\ldots\geq \rho_L\geq 0$.  Let  $\bT \defn \bSig_{\x}^{-1/2}\bSig_{\x, \y}\bSig_{\y}^{-1/2}$.  The $L$-truncated singular value decomposition of  $\bT$ is $\bPhi_{1:L}^\ast\D_{1:L}^\ast\bPsi\strans_{1:L}$. \cite{hotelling1936RelationsTwoSets} showed that  $(\U_{1:L}^{\ast}, \V_{1:L}^{\ast}) = (\bSig_{\x}^{-1/2}\bPhi_{1:L}^\ast, \bSig_{\y}^{-1/2}\bPsi_{1:L}^\ast)$, and $\rho_1, \ldots, \rho_L$ are the singular values of $\bT$, which correspond to the respective diagonal elements of $\D_{1:L}^\ast$.

Suppose a random sample of size $N$, denoted by $\{(\x_i, \y_i), i = 1,\ldots, N\}$,  is available. Let  $\wh\bSig_{\x}$, $\wh\bSig_{\y}$ and $ \wh\bSig_{\x, \y}$ be the respective estimates of $\bSig_{\x}$, $\bSig_{\y}$ and $\bSig_{\x, \y}$. Define $\wh{\cal O}_\U  \defn  \{\U\in\mR^{d_\x\times L}:  \inner{\U}{\U}_{{\wh\bSig}_\x} =  \I_{L\times L} \}$ and $\wh {\cal O}_\V  \defn  \{\V\in\mR^{d_\y\times L}:  \inner{\V}{\V}_{{\wh\bSig}_\y} =  \I_{L\times L} \}$.  At the sample level,  canonical correlation analysis solves the non-convex optimization of
\beqr
\label{CCA_object}
(\wh\U_{1:L}^{\pool}, \wh\V_{1:L}^{\pool}) \ \defn\  \argmax_{
	\U\in\wh{\cal O}_\U, \V\in\wh{\cal O}_\V} 
\tr\{ \inner{\U }{\V}_{\wh{\bSig}_{\x,\y}}\},
\eeqr
where $\wh\U_{1:L}^{\pool} = (\wh\bu_1^{\pool}, \ldots, \wh\bu_L^{\pool})$ and $\wh\V_{1:L}^{\pool} = (\wh\bv_1^{\pool}, \ldots, \wh\bv_L^{\pool})$. The pooled estimate of the $\ell$-th canonical correlation $\rho_\ell$ is given by $\wh\rho_\ell\defn  \inner{\wh\bu_\ell^{\pool}}{\wh\bv_\ell^{\pool}}_{\wh\bSig_{\x, \y}}$.  We define the empirical version of $\bT$ by $\wh\bT \defn \wh\bSig_{\x}^{-1/2}\wh\bSig_{\x, \y}\wh\bSig_{\y}^{-1/2}$. The $L$-truncated singular value decomposition of $\wh\bT$ is  $\wh\bPhi_{1:L}\wh\D_{1:L}\wh\bPsi\trans_{1:L}$, where $\wh\bPhi_{1:L} = (\wh\bphi_1, \ldots, \wh\bphi_L)$ and $\wh\bPsi_{1:L} = (\wh\bpsi_1, \ldots, \wh\bpsi_L)$. It follows that $(\wh\U_{1:L}^{\pool}, \wh\V_{1:L}^{\pool}) = (\wh\bSig_{\x}^{-1/2}\wh\bPhi_{1:L}, \wh\bSig_{\y}^{-1/2}\wh\bPsi_{1:L})$ and $\wh\rho_1, \ldots, \wh\rho_L$ are the corresponding singular values of $\wh\bT$, which correspond to the respective diagonal elements of $\wh\D_{1:L}$. It is important to remark here that, directly evaluating $(\wh\U_{1:L}^{\pool}, \wh\V_{1:L}^{\pool})$ through its explicit form  involves singular value decomposition, which is computationally expensive and even prohibitive for massive data analysis.

To ease the computational complexity of singular value decomposition, \cite{garber2015FastSimplePCA} and  \cite{wang2016EfficientGloballyConvergent} developed a variant  shift-and-invert preconditioning iteration method to recast the non-convex optimization problem  \eqref{CCA_object} as a sequence of least squares problems, each of which corresponds to a  linear system and can be solved efficiently. To illustrate the rationale of this iteration method, we first consider approximating the top pair of canonical directions $(\bu_1^\ast, \bv_1^\ast)$. 
Let $\overline{\rho}_1$ be a crude estimate of an upper bound of the top canonical correlation $\rho_1$. Let $d \defn d_\x + d_\y$.
Define 
\[
\wh\bC \ \defn\ 
\begin{pmatrix}
	{\bf 0}_{d_\x\times d_\x} & \wh\bT_{d_\x\times d_\y} \\
	\wh\bT\trans_{d_\x\times d_\y} & {\bf 0}_{d_\y\times d_\y} 
\end{pmatrix}\in \mR^{d\times d}.
\]
It is straightforward to verify that the $L$ largest eigenvalues of $(\overline{\rho}_1 \I - \wh\bC)^{-1}$ are $1/(\overline{\rho}_1 - \wh\rho_1), \cdots  1/(\overline{\rho} _1- \wh\rho_L)$, and the  $L$ smallest  eigenvalues are $ 1/(\overline{\rho}_1 + \wh\rho_L), \cdots, 1/(\overline{\rho} _1+ \wh\rho_1)$. These eigenvalues   correspond to the eigenvectors 
\[
\begin{pmatrix}
	\wh\bphi_1\\
	\wh\bpsi_1
\end{pmatrix}\Bigg/2^{1/2}, \cdots, 
\begin{pmatrix}
	\wh\bphi_L\\
	\wh\bpsi_L
\end{pmatrix}\Bigg/2^{1/2},  \cdots, 
\begin{pmatrix}
	\wh\bphi_L\\
	-\wh\bpsi_L
\end{pmatrix}\Bigg/2^{1/2}, \cdots, 
\begin{pmatrix}
	\wh\bphi_1\\
	-\wh\bpsi_1
\end{pmatrix}\Bigg/2^{1/2}.
\]
To approximate the top pair of canonical directions $(\bu_1^\ast, \bv_1^\ast)$, we run power iterations on $(\overline{\rho}_1 \I - \wh\bC)^{-1}$ through the shift-and-invert preconditioning iteration method  \citep{golub1983MatrixComputations}.  To be precise, we let $(\wh\bu^{(0)}, \wh\bv^{(0)})$ be  an initial estimate of the top pair of canonical directions $(\bu_1^\ast, \bv_1^\ast)$. During the $t$-th iteration,   $t = 0, 1, \ldots,$ the variant shift-and-invert preconditioning iteration method takes the form of
\beqrs
\begin{pmatrix}
	\wt\bu^{(t + 1)}\\
	 \wt\bv^{(t + 1)}
\end{pmatrix}
 = 
 \begin{pmatrix}
 	\bhSigx^{1/2} & {\bf 0}\\
 	{\bf 0} & \bhSigy^{1/2}
 \end{pmatrix}^{-1}
 (\overline{\rho}_1 \I - \wh\bC)^{-1}
  \begin{pmatrix}
 	\bhSigx^{1/2} & {\bf 0}\\
 	{\bf 0} & \bhSigy^{1/2}
 \end{pmatrix}
\begin{pmatrix}
\wh\bu^{(t)}\\
\wh\bv^{(t)}
\end{pmatrix},
\eeqrs
which, after some straightforward algebraic calculations, reduces to
\beqr\label{eq:iteration}
\begin{pmatrix}
	\wt\bu^{(t + 1)}\\
	 \wt\bv^{(t + 1)}
\end{pmatrix} 
&=& \bhH^{-1}
  \begin{pmatrix}
	\bhSigx & {\bf 0}\\
	{\bf 0} & \bhSigy
\end{pmatrix}
\begin{pmatrix}
	\wh\bu^{(t)}\\
	\wh\bv^{(t)}
\end{pmatrix}, \textrm{ where }
 \bhH   \defn   \begin{pmatrix}
	\overline{\rho}_1\bhSig_{\x} &  -\bhSig_{\x, \y}\\
	-\bhSig\trans_{\x, \y} & \overline{\rho}_1\bhSig_{\y}\\
\end{pmatrix}.
\eeqr
Iterating  \eqref{eq:iteration} can be recast as a one-step Newton's iteration. In particular,
\beqr\label{eq:newton}
\begin{pmatrix}
	\wt\bu^{(t + 1)}\\
	\wt\bv^{(t + 1)}
\end{pmatrix} 
&=& \begin{pmatrix}
	\wh\bu^{(t)}\\
	\wh\bv^{(t)}
\end{pmatrix}
-
\bhH^{-1} \wh\g, \textrm{ where }
\wh\g \defn
\left\{\bhH
-
\begin{pmatrix}
	\bhSigx & {\bf 0}\\
	{\bf 0} & \bhSigy
\end{pmatrix}\right\}
\begin{pmatrix}
	\wh\bu^{(t)}\\
	\wh\bv^{(t)}
\end{pmatrix}.
\eeqr
Iterating \eqref{eq:iteration}  is also equivalent to solving the following  quadratic optimization,
\beqr\label{eq:quadratic_programming}
\begin{pmatrix}
	\wt\bu^{(t + 1)}\\ \wt\bv^{(t + 1)}
\end{pmatrix}   =   \argmin_{\bu,\bv} \left\{Q(\bu, \bv)\defn \frac12  \begin{pmatrix}
	\bu\\
	\bv
\end{pmatrix}\trans 
\bhH
\begin{pmatrix}
	\bu\\
	\bv
\end{pmatrix} - 
\begin{pmatrix}
	\bu\\
	\bv
\end{pmatrix}\trans\begin{pmatrix}
	\bhSigx & {\bf 0}\\
	{\bf 0} & \bhSigy
\end{pmatrix}
\begin{pmatrix}
	\wh\bu^{(t)}\\
	\wh\bv^{(t)}
\end{pmatrix}\right\}.
\eeqr
 Finally,  we   perform an additional normalization, 
\beqrs
\begin{pmatrix}
	\wh\bu^{(t + 1)}\\
	 \wh\bv^{(t + 1)}
\end{pmatrix}  = 2^{1/2}\begin{pmatrix}
	\wt\bu^{(t + 1)}\\
	 \wt\bv^{(t + 1)}
\end{pmatrix} \Bigg / \Bigg(\inner{\wt\bu^{(t + 1)}}{\wt\bu^{(t + 1)}}_{\bhSig_{\x}}+ \inner{\wt\bv^{(t + 1)}}{\wt\bv^{(t + 1)}}_{\bhSig_{\y}}\Bigg)^{1/2},
\eeqrs
to ensure that  $\inner{\wh\bu^{(t + 1)}}{\wh\bu^{(t + 1)}}_{\bhSig_{\x}} +   \inner{\wh\bv^{(t + 1)}}{\wh\bv^{(t + 1)}}_{\bhSig_{\y}} = 2$, which updates  $(\wh\bu^{(t)}, \wh\bv^{(t)})$ to  $(\wh\bu^{(t+1)}, \wh\bv^{(t+1)})$, for $t = 0, 1,\ldots$. We iterate this procedure until convergence.

Performing direct iterations using \eqref{eq:iteration}  requires to invert a high-dimensional Hessian matrix $\bhH$. This would result in a prohibitive  complexity in both communication and computation.  The non-convex canonical correlation analysis in \eqref{CCA_object}  has been formulated as a sequence of least squares problems. Taking this advantage,  we design a computation- and communication-efficient distributed algorithm. 

\subsection{A distributed estimation}
Next we develop a distributed algorithm for canonical correlation analysis when the  observations, $\{(\x_i, \y_i), i = 1,\ldots, N\}$,  are scattered uniformly at  $K$ machines, each of size $n$.  In other words, $N = nK$.  Let $\X_k\in\mR^{n\times d_\x}$ and $\Y_k\in\mR^{n\times d_\y}$ represent the data matrices in the $k$-th machine,   $\wh\bSig_{\x,k}$, $\wh\bSig_{\y,k}$, $ \wh\bSig_{\x, \y,k}$ and $\wh\bT_k \defn \bhSig_{\x, k}^{-1/2}\bhSig_{\x, \y, k}\bhSig_{\y, k}^{-1/2}$ be the respective   estimates of $\bSig_{\x}$, $\bSig_{\y}$, $\bSig_{\x, \y}$ and $\bT$, which are obtained using the data matrices $(\X_k,\Y_k)$ in the $k$-th machine, for $k =1, \ldots, K.$  

Our proposed distributed algorithm is iterative. To start, we  construct an initial estimate of the top pair of canonical directions, $(\wh\bu^{(0)}, \wh\bv^{(0)})$, and a crude estimate of an upper bound, $\overline{\rho}_1$,  of the top canonical correlation. We merely use the observations in the first machine to avoid communication overhead.  Let  $\wh\bphi^{(0)}$ and $ \wh\bpsi^{(0)}$ be the respective left and right singular vectors of  $\wh\bT_1$  that correspond to the largest singular value. We take $(\wh\bu^{(0)}, \wh\bv^{(0)}) = (\bhSig_{\x, 1}^{-1/2}\wh\bphi^{(0)}, \bhSig_{\y, 1}^{-1/2}\wh\bpsi^{(0)})$, and  $\rhob_1 = \sigma_{\max}(\wh\bT_1) + 1.5\ \omega$, where    $\omega\defn c_0\ (d\log^2d/n)^{1/2}$, for  some  $c_0 > 0$.   Alternatively, we can use some stochastic algorithms that completely avoid matrix decomposition to construct a crude upper bound $\rhob_1$. Examples include  \cite{wang2016EfficientGloballyConvergent},  \cite{allen-zhu2017DoublyAcceleratedMethods} and \cite{gao2019StochasticCanonicalCorrelation}.

We  propose to estimate the spaces spanned by 
$\U_{1:L}^\ast$ and $\V_{1:L}^\ast$ in a sequential manner.  To illustrate the concept of our proposed algorithm, we start from   $L = 1$. Define 
\beqr
\bhH_k & = & \begin{pmatrix}
	\overline{\rho}_1\bhSig_{\x, k} & -\bhSig_{\x, \y, k}\\
	-\bhSig\trans_{\x, \y, k} & \overline{\rho}_1\bhSig_{\y, k}\\
\end{pmatrix}. \label{eq:local_hessian}
\eeqr In the one-step Newton's iteration \eqref{eq:newton}, calculating the full Hessian matrix $\bhH$ requires each local machine to transmit the $d\times d$ local Hessian matrices, $\bhH_k$s, to the central machine, which incurs an expensive and even prohibitive communication cost in high dimensions. To address this issue, a  natural idea is to approximate the full Hessian matrix $\bhH$ in  \eqref{eq:newton} with a local one, say, $\bhH_1$  \citep{shamir2014communication,michaeli.jordan2019communication,fan2021CommunicationEfficientAccurateStatistical}.  We further approximate   $\wh\g$ with an average of all  $\wh\g_k$s, which leads  to an approximation of the  following one-step Newton's iteration, 
\beqrs
\begin{pmatrix}
	\wh\bu_1^{(t + 1)}\\
	\wh\bv_1^{(t + 1)}
\end{pmatrix} 
= \begin{pmatrix}
	\wh\bu^{(t)}\\
	\wh\bv^{(t)}
\end{pmatrix}
-
(\bhH_1K)^{-1} \sum_{k = 1}^K \wh\g_k, 
\textrm{ where }
\wh\g_k \defn
\left\{\bhH_k
-
\begin{pmatrix}
	\bhSig_{\x, k} & {\bf 0}\\
	{\bf 0} & \bhSig_{\y, k}
\end{pmatrix}\right\}
\begin{pmatrix}
	\wh\bu^{(t)}\\
	\wh\bv^{(t)}
\end{pmatrix}.
\eeqrs
The updated estimate $(\wh\bu_1^{(t + 1)}, \wh\bv_1^{(t + 1)})$ is  however sub-optimal in contrast to $(\wt\bu^{(t + 1)}, \wt\bv^{(t + 1)})$ in the exact one-step Newton's iteration \eqref{eq:newton}.  In particular, 
$(\wt\bu^{(t + 1)}, \wt\bv^{(t + 1)})$ is exactly the minimizer of \eqref{eq:quadratic_programming}, by contrast, $(\wh\bu_1^{(t + 1)}, \wh\bv_1^{(t + 1)})$ is not. It is important to remark here that, by  \Cref{theorem:top_canonical_vector} in \Cref{section:theory}, such one-step approximation in the inner loop  may lead to a sub-optimal convergence rate. For each given $(t+1)$, to further refine the estimate $(\wh\bu_j^{(t + 1)}, \wh\bv_j^{(t + 1)})$ for $j=1,2,\ldots$, we suggest to update with
\beqr\label{eq:inner}
\begin{pmatrix}
	\wh\bu_{j+1}^{(t + 1)}\\
	\wh\bv_{j+1}^{(t + 1)}
\end{pmatrix} 
&=& \begin{pmatrix}
	\wh\bu_{j}^{(t+1)}\\
	\wh\bv_{j}^{(t+1)}
\end{pmatrix}
-
(\bhH_1K)^{-1} \sum_{k = 1}^K \wh\g_{k,j}^{(t+1)}, 
\textrm{ where }\\
\label{local_gradient}\wh\g_{k,j}^{(t+1)} &\defn&
\left\{\bhH_k \begin{pmatrix}
	\wh\bu^{(t + 1)}_{j}\\
	\wh\bv^{(t + 1)}_{j}
\end{pmatrix} -  
\begin{pmatrix}
	\bhSig_{\x, k} & {\bf 0}\\
	{\bf 0} & \bhSig_{\y, k}
\end{pmatrix}
\begin{pmatrix}
	\wh\bu^{(t)}\\
	\wh\bv^{(t)}
\end{pmatrix}\right\}
\eeqr
is the local gradient of \eqref{eq:quadratic_programming} evaluated at $(\wh\bu_j^{(t + 1)}, \wh\bv_j^{(t + 1)})$.  For each given $(t+1)$, the inner iteration procedure proceeds in a distributed fashion.  In particular, for each machine, we calculate the local gradients, $\wh\g_{k,j}^{(t+1)}$s,  which are transmitted to the central machine to form a full gradient. Therefore, for each given $(t+1)$, the total communication cost  is  $O(dK)$. This distributed algorithm is summarized in Algorithm \ref{alg:DistributedCCA_top}.

\begin{algorithm}[!htp]
	\caption{\small Distributed Canonical Correlation Analysis.}
	\label{alg:DistributedCCA_top}
	\begin{algorithmic}[1]\footnotesize
		\REQUIRE Auto- and cross-covariances $(\bhSig_{\x,k}, \bhSig_{\y,k}, \bhSig_{\x, \y, k})$ for  machines labeled with $1\leq k\leq K$. The initial estimates {$\wb\rho_1$},   $\wh\bu^{(0)}$ and $\wh\bv^{(0)}$. The number of outer iterations $T$ and the number of inner iterations $T^\prime$. 
		\STATE \textbf{// Phase I: Shift-and-invert preconditioning}
		\STATE Transmit $\wb\rho_1$  to the machines labeled with $2\le k\le K$. For each machine, we calculate  $\bhH_k$s with  \eqref{eq:local_hessian}.
		\FOR{$t = 0,\ldots,(T - 1)$}
		\STATE  Transmit $\wh\bu^{(t)}$ and $\wh\bv^{(t)}$ to each local machine. We set  $\wh\bu^{(t + 1)}_0 = \wh\bu^{(t)}$ and $\wh\bv^{(t + 1)}_0 = \wh\bv^{(t)}$. 
		\FOR{$j = 0, \ldots, (T^\prime - 1)$}
		\FOR{each local machine labeled with $k = 1, \ldots, K$}
		\STATE Compute the local gradient information $\wh\g_{k,j}^{(t+1)}$ by \eqref{local_gradient}
		and transmit the local gradient information $\wh\g_{k,j}^{(t+1)} $ to the central machine.
		\ENDFOR   
		\STATE Perform the approximate Newton's iteration: 
		\beqrs
		\begin{pmatrix}
			\wh\bu^{(t + 1)}_{j + 1}\\
			\wh\bv^{(t + 1)}_{j + 1}
		\end{pmatrix} = 
		\begin{pmatrix}
			\wh\bu^{(t + 1)}_{j}\\
			\wh\bv^{(t + 1)}_{j}
		\end{pmatrix} - (\bhH_1K)^{-1} \sum_{k = 1}^K \wh\g_{k,j}^{(t+1)}.
		\eeqrs
		\ENDFOR
		\STATE The first machine updates 
		\beqrs
		\begin{pmatrix}
			\wh\bu^{(t + 1)}\\
			\wh\bv^{(t + 1)}
		\end{pmatrix} = 2^{1/2}
		\begin{pmatrix}
			\wh\bu^{(t + 1)}_{T^\prime}\\
			\wh\bv^{(t + 1)}_{T^\prime}
		\end{pmatrix}\bigg / \left(\inner{\wh\bu^{(t + 1)}_{T^\prime}}{\wh\bu^{(t + 1)}_{T^\prime}}_{\bhSig_{\x}} + \inner{\wh\bv^{(t + 1)}_{T^\prime}}{\wh\bv^{(t + 1)}_{T^\prime}	}_{\bhSig_{\y}}\right)^{1/2}.
		\eeqrs
		\ENDFOR
		\STATE \textbf{%
			// Phase II: Final normalization}
		\STATE $\wh\bu_1^{\dist} = \wh\bu^{(T)}/(\inner{\wh\bu^{(T)}}{\wh\bu^{(T)}}_{\bhSig_{\x}})^{1/2}$ and $\wh\bv_1^{\dist} = \wh\bv^{(T)}/(\inner{\wh\bv^{(T)}}{\wh\bv^{(T)}}_{\bhSig_{\y}})^{1/2}$
		\ENSURE The final estimate $(\wh\bu_1^{\dist}, \wh\bv_1^{\dist})$ obtained from the first machine.
	\end{algorithmic}
\end{algorithm}


Let us compare the computation cost of direct and indirect methods in calculating the inversions of large dimensional matrices.  The update  \eqref{eq:iteration} requires  direct calculation of large matrix inversions for the full sample.  However, the computational complexity of direct matrix inversion of a Hessian matrix with dimensions $d\times d$ is $O(d^3)$, which can be prohibitive for extremely large $d$. In the present context, $d = d_{\x} + d_\y$. To address this issue, we take the full advantage of the finite-sum  structure of $\bhH_1$, with the diagonal blocks being $\wh\bSig_{\x, 1} = \X_1\trans\X_1/n$ and $\wh\bSig_{\y,1} = \Y_1\trans\Y_1/n$, and the off-diagonal blocks being $\wh\bSig_{\x, \y, 1} = \X_1\trans\Y_1/n$.  We propose to solve a linear system in the inner loop \eqref{eq:inner}, which enabled us to achieve computational efficiency in  our proposed distributed algorithm. Numerous algorithms exist in the optimization literature that can efficiently solve linear systems with a finite-sum structure. See, for example, \cite{hestenes1952MethodsConjugateGradients} and \cite{johnson2013AcceleratingStochasticGradient}. In  \eqref{eq:inner}, all computation involves only matrix-vector products, without the need of explicitly constructing $\wh\bH_1$ or calculating $\wh\bH_1^{-1}$. As a result, our proposal can significantly reduce the computation cost.

%

Next, we extend the framework of \cite{allen-zhu2017DoublyAcceleratedMethods}  to the top-L-dim canonical correlation analysis. Given the top $\ell$ pairs of canonical directions $(\wh\U_{1:\ell}^\dist, \wh\V_{1:\ell}^\dist)$   in \Cref{alg:DistributedCCA_topL}, we define $\bQ_{\wh\U_{1:\ell}^\dist}(\bhSig_{\x})\defn \I_{d_\x\times d_\x} - \wh\U_{1:\ell}^\dist
 \big\{(\wh\U_{1:\ell}^\dist)\trans\bhSigx \wh\U_{1:\ell}^\dist\big\}^{-1}$ $
 (\wh\U_{1:\ell}^\dist)\trans\bhSigx =  \I_{d_\x\times d_\x} - \wh\U_{1:\ell}^\dist
 (\wh\U_{1:\ell}^\dist)\trans\bhSig$ be the projection matrix onto the  orthogonal complement of the column subspace spanned by $\wh\U_{1:\ell}^\dist$ under the $\bhSigx$-inner product. Similarly, we define $\bQ_{\wh\V_{1:\ell}^\dist}(\bhSig_{\y})\defn \I_{d_\y\times d_\y} - \wh\V_{1:\ell}^\dist
  \big\{(\wh\V_{1:\ell}^\dist)\trans\bhSigy \wh\V_{1:\ell}^\dist\big\}^{-1}
 (\wh\V_{1:\ell}^\dist)\trans\bhSigy = \I_{d_\y\times d_\y} - \wh\V_{1:\ell}^\dist
 (\wh\V_{1:\ell}^\dist)\trans\bhSigy$. We remark here that these two projection matrices, $\bQ_{\wh\U_{1:\ell}^\dist}(\bhSig_{\x})$ and $\bQ_{\wh\V_{1:\ell}^\dist}(\bhSig_{\y})$, can be calculated  in a distributed manner to  enhance computational efficiency without increasing  communication cost. We further define the cross-covariance $\bhSig_{\x, \y, k}$ under both the $\bhSigx$- and the $\bhSigy$-inner product by $\bhSig_{\x, \y, k}^{(\ell)} = \bQ\trans_{\wh\U_{1:\ell}^\dist}(\bhSig_{\x})\bhSig_{\x, \y, k}\bQ_{\wh\V_{1:\ell}^\dist}(\bhSig_{\y})$, for $k = 1, \ldots, K$.  Indeed the $(\ell+1)$-th  canonical direction can be obtained in the same manner as \Cref{alg:DistributedCCA_top} through 
 $\{(\bhSig_{\x,k}, \bhSig_{\y,k}, \bhSig_{\x, \y, k}^{(\ell)}), k = 1,\ldots, K \}$, which in spirit projects the intermediate estimates onto the subspaces  spanned by $\bQ_{\wh\U_{1:\ell}^\dist}(\bhSig_{\x})$ and $\bQ_{\wh\V_{1:\ell}^\dist}(\bhSig_{\y})$. We repeat this procedure  successively to obtain the top $L$ pairs of  canonical directions $\wh\U_{1:L}^\dist = (\wh\bu_1^\dist, \ldots, \wh\bu_L^\dist)$ and $\wh\V_{1:L}^\dist = (\wh\bv_1^\dist, \ldots, \wh\bv_{L}^\dist)$. This scheme is clear-cut and performs quite well in our numerical studies and convergence analysis.

\begin{algorithm}[!htbp]
	\caption{\small Distributed Top-$L$-dim Canonical Correlation Analysis.}
	\label{alg:DistributedCCA_topL}
	\begin{algorithmic}[1]\footnotesize
		\REQUIRE Data $\{(\x_{i},\y_{i})_{i \in\mathcal{H}_k}\}$ in each machine $1\leq k\leq K$. The required number of canonical components $L$. 
		\STATE Initial $\wh\U_{1:0}^\dist = ()$, $\wh\V_{1:0}^\dist = ()$, $\bhSig_{\x, \y, k}^{(0)} = \bhSig_{\x, \y, k}$.
		\FOR{$\ell = 1, \ldots, L$}
		\STATE  Compute the $\ell$-th initial estimate $\overline{\rho}_\ell$, $\wh\bu_\ell^{(0)}$ and $\wh\bv_\ell^{(0)}$.
		\STATE Call \Cref{alg:DistributedCCA_top} with $\{(\bhSig_{\x,k}, \bhSig_{\y,k}, \bhSig_{\x, \y, k}^{(\ell - 1)})\}_{k = 1}^K$ to obtain $(\wh\bu_\ell^{\dist\prime}, \wh\bv_\ell^{\dist\prime})$ on the central machine.
		\STATE Project $\wh\bu_\ell^{\dist\prime}$ and $\wh\bv_\ell^{\dist\prime}$ to $(\wh\U_{1:(\ell - 1)}^\dist)^{\perp}$ and $(\wh\V_{1:(\ell - 1)}^\dist)^{\perp}$ respectively by  computing 
		\[
		\wh\bu_\ell^{\dist} = \Big\{\bQ_{\wh\U_{1:(\ell - 1)}^\dist}(\bhSig_{\x})\wh\bu_\ell^{\dist\prime}\Big\}\Big/\Big\{\inner{\bQ_{\wh\U_{1:(\ell - 1)}^\dist}(\bhSig_{\x})\wh\bu_\ell^{\dist\prime}}{\bQ_{\wh\U_{1:(\ell - 1)}^\dist}(\bhSig_{\x})\wh\bu_\ell^{\dist\prime}}_{\bhSigx}\Big\}^{1/2}
		\] and  
		\[
		 \wh\bv_\ell^{\dist} = \Big\{\bQ_{\wh\V_{1:(\ell - 1)}^\dist}(\bhSig_{\y})\wh\bv_\ell^{\dist\prime}\Big\}\Big/\Big\{\inner{\bQ_{\wh\V_{1:(\ell - 1)}^\dist}(\bhSig_{\y})\wh\bv_\ell^{\dist\prime}}{\bQ_{\wh\V_{1:(\ell - 1)}^\dist}(\bhSig_{\y})\wh\bv_\ell^{\dist\prime}}_{\bhSigy}\Big\}^{1/2}.
		\]
		\STATE Update $\wh\U_{1:\ell}^\dist = (\wh\U_{1:(\ell -1)}^\dist, \wh\bu_\ell^\dist)$ and $\wh\V_{1:\ell}^\dist = (\wh\V_{1:(\ell -1)}^\dist, \wh\bv_\ell^\dist)$.
		\STATE Transmit $\wh\bu_\ell^\dist$ and $\wh\bv_\ell^\dist$ to each local machine.
		\FOR{each local machine $k = 1, \ldots, K$}
		\STATE Calculate $\bhSig_{\x,k}\wh\bu_\ell^\dist$ and $\bhSig_{\y,k}\wh\bv_\ell^\dist$, and transmit back to the central machine.
		\ENDFOR
		\STATE Transmit $\bhSig_{\x}\wh\bu_\ell^\dist = \sum_{k = 1}^K\bhSig_{\x,k}\wh\bu_\ell^\dist/K$ and $\bhSig_{\y}\wh\bv_\ell^\dist = \sum_{k = 1}^K\bhSig_{\y,k}\wh\bv_\ell^\dist/K$ to each local machine.
		\FOR{each local machine $k = 1, \ldots, K$}
		\STATE Update $\bhSig_{\x, \y, k}^{(\ell)} = \bQ\trans_{\wh\U_{1:\ell}^\dist}(\bhSig_{\x})\bhSig_{\x, \y, k}\bQ_{\wh\V_{1:\ell}^\dist}(\bhSig_{\y}) = \bQ\trans_{\wh\bu_\ell^\dist}(\bhSig_{\x})\bhSig_{\x, \y, k}^{(\ell - 1)}\bQ_{\wh\bv_\ell^\dist}(\bhSig_{\y})$. 
		\ENDFOR
		\ENDFOR
		\ENSURE The final estimate $(\wh\U_{1:L}^\dist, \wh\V_{1:L}^\dist)$ obtained from the first machine.
	\end{algorithmic}
\end{algorithm}

\section{Theoretical Analysis\label{section:theory}}
We are in the position to analyze the theoretical properties for the resultant distributed estimates, and relegate all technical proofs to the appendices.  We study the distributed estimates of the top and the top-L-dim canonical directions separately, which are derived through   \Cref{alg:DistributedCCA_top,alg:DistributedCCA_topL}.  We apply the full singular value decomposition  to obtain that  $\wh\bT = \wh\bPhi\wh\D\wh\bPsi\trans$, where $\wh\bPhi = (\wh\bphi_1, \ldots, \wh\bphi_{d_\x})\in\mR^{d_\x\times d_\x}$ and $\wh\bPsi = (\wh\bpsi_1, \ldots, \wh\bpsi_{d_\y})\in\mR^{d_\y\times d_\y}$ are orthonormal matrices, $\wh\D\in\mR^{d_\x\times d_\y}$ is a diagonal matrix  with its $(\ell,\ell)$-th element $\wh\D_{\ell\ell}$ being  $\wh\rho_\ell$, for $\ell = 1,\ldots, \min(d_\x, d_\y)$,  and $0$  for $\ell = \min(d_\x, d_\y) + 1, \ldots, \max(d_\x, d_\y)$.

\subsection{Distributed  estimation of the top canonical directions}
We denote the distributed estimates of the top pair of canonical directions by $(\wh\bu_1^\dist, \wh\bv_1^\dist)$. It is important to measure its distance from the pooled estimate $(\wh\bu_1^\pool, \wh\bv_1^\pool)$, which serves as a benchmark in the present context. When there is no confusion, we  write
\[
\sum_{1\leq \ell\leq d_\x \colon \wh\rho_\ell\leq (1- \delta)\wh\rho_1}\abs{\inner{\wh\bu_\ell^\pool}{\wh\bu_1^\dist}_{\bhSig_{\x}}}^2 \textrm{ and }
\sum_{1\leq \ell\leq d_\y\colon \wh\rho_\ell\leq (1- \delta)\wh\rho_1}\abs{\inner{\wh\bv_\ell^\pool}{\wh\bv_1^\dist}_{\bhSig_{\y}}}^2
\]
as
 \[
 \sum_{\ell \colon \wh\rho_\ell\leq (1- \delta)\wh\rho_1}\abs{\inner{\wh\bu_\ell^\pool}{\wh\bu_1^\dist}_{\bhSig_{\x}}}^2\textrm{ and }
 \sum_{\ell \colon \wh\rho_\ell\leq (1- \delta)\wh\rho_1}\abs{\inner{\wh\bv_\ell^\pool}{\wh\bv_1^\dist}_{\bhSig_{\y}}}^2,
 \]
respectively, where $\delta$ is reserved to denote a relative gap threshold. We shall show that the distance  between $(\wh\bu_1^\dist, \wh\bv_1^\dist)$  and $(\wh\bu_1^\pool, \wh\bv_1^\pool)$ satisfies
\beqr\label{eq:measure1}
\max\left(\sum_{\ell\colon \wh\rho_\ell\leq (1- \delta)\wh\rho_1}\abs{\inner{\wh\bu_\ell^\pool}{\wh\bu_1^\dist}_{\bhSig_{\x}}}^2, \sum_{\ell\colon \wh\rho_\ell\leq (1- \delta)\wh\rho_1}\abs{\inner{\wh\bv_\ell^\pool}{\wh\bv_1^\dist}_{\bhSig_{\y}}}^2\right) \leq \varepsilon^2/\delta^2,
\eeqr
for some $\varepsilon>0$ and  an arbitrary constant $0<\delta<1$.  The left-hand side of \eqref{eq:measure1} stands for the distance we shall use in the present context which, at the conceptual level, it is quite different from and more general than the  sine distance $\max\big(\sin^2\wh\theta, \sin^2\wt\theta\big)$, where $\wh\theta \ \defn\ \arccos\abs{\inner{\wh\bu_1^\pool}{\wh\bu_1^\dist}_{\bhSig_{\x}}}$ and $\wt\theta \ \defn\ \arccos\abs{\inner{\wh\bv_1^\pool}{\wh\bv_1^\dist}_{\bhSig_{\y}}}$. To see this, we simply consider evaluating the accuracy of $\wh\bu_1^\dist$. Assuming $(\wh\rho_1 - \wh\rho_2) > 0$,  \cite{wang2016EfficientGloballyConvergent} and \cite{gao2019StochasticCanonicalCorrelation} showed that $\sin\wh\theta \leq C\varepsilon \wh\rho_1/(\wh\rho_1 - \wh\rho_2)$,
where $C>0$ is a generic constant.
This upper bound involves a positive gap between the first two singular values of $\wh\bT$, which is undesirable in the present context.  We connect this sine distance with the left-hand side of \eqref{eq:measure1}. To be precise, we set $\delta = (\wh\rho_1 - \wh\rho_2)/\wh\rho_1$ to obtain
\[
\sin^2\wh\theta = 1 - \abs{\inner{\wh\bu_1^\pool}{\wh\bu_1^\dist}_{\bhSig_{\x}}}^2 =  \sum_{\ell\colon \wh\rho_\ell\leq (1- \delta)\wh\rho_1}\abs{\inner{\wh\bu_\ell^\pool}{\wh\bu_1^\dist}_{\bhSig_{\x}}}^2,
\]
which, by \eqref{eq:measure1}, is smaller than or equal to $(\varepsilon^2/\delta^2),$  or equivalently, 
$\sin \wh\theta \le( \varepsilon/\delta) =\varepsilon \wh\rho_1/(\wh\rho_1 - \wh\rho_2)$. By the definition of the sine distance, the first equality holds. The second is a direct consequence of the fact that
\[
\sum_{\ell = 1}^{d_\x}\abs{\inner{\wh\bu_\ell^\pool}{\wh\bu_1^\dist}_{\bhSig_{\x}}}^2  = 
(\wh\bu_1^\dist)\trans\big\{(\wh\U^{\pool})\trans \bhSig_{\x} \wh\U^{\pool}\big\}\wh\bu_1^\dist = (\wh\bu_1^\dist)\trans \wh\bu_1^\dist = 1.
\]
In other words, by choosing an appropriate $\delta$ in \eqref{eq:measure1}, we can obtain the same result as that in \cite{wang2016EfficientGloballyConvergent} and \cite{gao2019StochasticCanonicalCorrelation}, indicating that the distance we shall use in this context is more general than the sine distance. However, we do not require an explicit gap between the successive singular values of $\wh\bT$. We consider instead the enlarged spaces $\{\wh\bu_\ell^\pool\colon \wh\rho_\ell>(1 - \delta)\wh\rho_1\}$ and $\{\wh\bv_\ell^\pool\colon \wh\rho_\ell>(1 - \delta)\wh\rho_1\}$ to bypass this widely used and yet unrealistic requirement.

Let  $(\wh\bu^{(0)}, \bhv^{(0)})$ be the initial estimate and $\wb\rho_1$ be an upper bound of $\wh\rho_1$. Let $T$ and $T^\prime$ be the respective number of outer and inner iterations. Define $\kappa \defn \norm{\wh\bT_1 - \wh\bT}$ and  $\gamma \defn\min\{\sigma_{\min}(\bSigx), \sigma_{\min}(\bSigy)\}$.  By using  matrix concentration inequalities, we can show that $\kappa = O_p\{(d\log^2d/n)^{1/2}\}$. One can refer to Lemma A.1 for details in Appendix A. We first provide convergence guarantee for  \Cref{alg:DistributedCCA_top}, which does not require an explicit gap between the successive singular values of $\wh\bT$.

\begin{theo}
\label{theorem:top_canonical_vector}
Assume there exists a positive $\omega$ such that
$2\kappa\leq \omega\leq \overline{\rho}_1 - \wh\rho_1\leq 2\omega$,
and
\beqr\label{definition:constant}
\max\left(\sum_{\ell\colon \wh\rho_\ell\leq (1- \delta)\wh\rho_1}\abs{\inner{\wh\bu_\ell^\pool}{\wh\bu^{(0)}}_{\bhSig_{\x}}}^2, \sum_{\ell\colon \wh\rho_\ell\leq (1- \delta)\wh\rho_1}\abs{\inner{\wh\bv_\ell^\pool}{\wh\bv^{(0)}}_{\bhSig_{\y}}}^2\right) \leq 3/8.
\eeqr
Then for an arbitrary constant $\delta\in (0,1)$ that is not necessarily  $(\wh\rho_1 - \wh\rho_2)/\wh\rho_1$,  we have, 
\beqrs
   \max\left(\sum_{\ell\colon \wh\rho_\ell\leq (1- \delta)\wh\rho_1} \abs{\inner{\wh\bu_\ell^\pool}{\wh\bu_1^\dist}_{\bhSig_{\x}}}^2,  \sum_{\ell\colon \wh\rho_\ell\leq (1- \delta)\wh\rho_1}\abs{\inner{\wh\bv_\ell^\pool}{\wh\bv_1^\dist}_{\bhSig_{\y}}}^2\right) \\ 
 =   O_p\left\{ \left(\frac{128\omega^2}{\delta^2\wh\rho_1^2}\right)^T   \right.  \left.+ \frac{1}{1 - 128\omega^2/(\delta \wh\rho_1)^2}\left(\frac{64\kappa^2}{\gamma^2\omega^2}\right)^ {T^\prime}\right\}.
\eeqrs
\end{theo}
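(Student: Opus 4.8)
I would split \Cref{alg:DistributedCCA_top} into its outer loop --- a \emph{perturbed} power iteration on the shift-and-invert operator $(\rhob_1\I-\wh\bC)^{-1}$, whose full eigenstructure is listed in \Cref{section:methodology} --- and its inner loop, a preconditioned linear-system iteration that replaces the exact step \eqref{eq:newton} (equivalently, the exact minimiser of the quadratic program \eqref{eq:quadratic_programming}) by an approximate one. First I pass to the transformed coordinates $\bphi=\bhSigx^{1/2}\bu$, $\bpsi=\bhSigy^{1/2}\bv$, so that \eqref{eq:iteration} becomes one application of $(\rhob_1\I-\wh\bC)^{-1}$ followed by the $\ell_2$-renormalisation in the algorithm, and $\inner{\wh\bu_\ell^\pool}{\bu}_{\bhSigx}=\wh\bphi_\ell\trans\bphi$ (likewise on the $\bv$-side). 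Let $\calS$ be the enlarged invariant subspace of $(\rhob_1\I-\wh\bC)^{-1}$ spanned by the unit eigenvectors $(\wh\bphi_\ell\trans,\wh\bpsi_\ell\trans)\trans/\sqrt2$ with $\wh\rho_\ell>(1-\delta)\wh\rho_1$, and for each outer index $t$ let $D_t$ denote the left-hand side of the claimed display evaluated at $(\wh\bu^{(t)},\wh\bv^{(t)})$, so that \eqref{definition:constant} reads $D_0\le3/8$ and, after the normalisation step, the stacked vector $z^{(t)}$ built from $\bhSigx^{1/2}\wh\bu^{(t)}$ and $\bhSigy^{1/2}\wh\bv^{(t)}$ (divided by $\sqrt2$) has unit Euclidean length.

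\textbf{Step 1: one exact outer step (the ``$128$'' term).} Pretend the inner loop is solved exactly, producing $(\wt\bu^{(t+1)},\wt\bv^{(t+1)})$ of \eqref{eq:newton}, and let $\wt D_{t+1}$ be the distance it induces. On $\calS^\perp$ --- which collects the ``$(\wh\bphi_\ell,-\wh\bpsi_\ell)$''-type eigenvectors, the null directions, and the discarded ``$(\wh\bphi_\ell,\wh\bpsi_\ell)$''-type ones --- the operator $(\rhob_1\I-\wh\bC)^{-1}$ has eigenvalues at most $1/\bigl(\rhob_1-(1-\delta)\wh\rho_1\bigr)$, while its leading eigenvalue (attained in $\calS$, at $(\wh\bphi_1,\wh\bpsi_1)/\sqrt2$) is $1/(\rhob_1-\wh\rho_1)$; their ratio satisfies
\[
\frac{\rhob_1-\wh\rho_1}{\rhob_1-(1-\delta)\wh\rho_1}=\frac{\rhob_1-\wh\rho_1}{(\rhob_1-\wh\rho_1)+\delta\wh\rho_1}\le\frac{2\omega}{2\omega+\delta\wh\rho_1}\le\frac{2\omega}{\delta\wh\rho_1},
\]
by monotonicity in $\rhob_1-\wh\rho_1$ and the hypothesis $\rhob_1-\wh\rho_1\le2\omega$. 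The gap-free estimate I would use tracks, besides $D_t$, that the $\calS$-component of the normalised iterate stays concentrated on the eigenvectors closest to the leading one --- a property seeded by the first-machine initialisation (the leading singular vector of the nearby matrix $\wh\bT_1$) and self-propagating under the iteration, since larger eigenvalues are amplified more --- so that one step contracts the $\calS^\perp$-mass of the iterate by the above ratio squared relative to the effectively-leading $\calS$-mass; using $D_t\le3/8<1$ to keep that mass bounded below and Cauchy--Schwarz to pass from the stacked vector back to the separate $\bu$- and $\bv$-distances (absorbing the two $\sqrt2$-normalisations) then gives $\wt D_{t+1}\le\bigl(128\,\omega^2/\delta^2\wh\rho_1^2\bigr)D_t$, in particular $\wt D_{t+1}\le3/8$ whenever $128\,\omega^2/(\delta\wh\rho_1)^2<1$.

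\textbf{Step 2: the inner loop (the ``$64$'' term) and assembly.} Averaging \eqref{local_gradient} over the $K$ machines gives $K^{-1}\sum_k\wh\g_{k,j}^{(t+1)}=\bhH\,z_j-\M\,z^{(t)}$ with $\M:=\diag(\bhSigx,\bhSigy)$ and $\M_k:=\diag(\bhSig_{\x,k},\bhSig_{\y,k})$ (using $K^{-1}\sum_k\bhH_k=\bhH$ and $K^{-1}\sum_k\M_k=\M$ under the equal split), so \eqref{eq:inner} is the iteration $z_{j+1}-z^\ast=(\I-\bhH_1^{-1}\bhH)(z_j-z^\ast)$ whose fixed point $z^\ast$ is the stacked $(\wt\bu^{(t+1)},\wt\bv^{(t+1)})$ of Step~1. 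Writing $\bhH_1=\M_1^{1/2}(\rhob_1\I-\wh\bC_1)\M_1^{1/2}$, the hypothesis $2\kappa\le\omega$ gives $\rhob_1-\sigma_{\max}(\wh\bT_1)\ge\omega-\kappa\ge\omega/2$ and $\sigma_{\min}(\M_1)\gtrsim\gamma$ on a high-probability event by concentration of the local covariances, so $\Norm{\bhH_1^{-1}}\lesssim(\gamma\omega)^{-1}$; and $\bhH_1-\bhH$ equals $\rhob_1(\M_1-\M)$ plus the off-diagonal block built from $\bhSig_{\x,\y,1}-\bhSigxy$, which with $\rhob_1\lesssim1$ and Lemma~A.1 (bounding $\Norm{\bhSig_{\x,1}-\bhSigx}$, $\Norm{\bhSig_{\y,1}-\bhSigy}$, $\Norm{\bhSig_{\x,\y,1}-\bhSigxy}$ and $\kappa$ simultaneously by $O_p\{(d\log^2d/n)^{1/2}\}$) gives $\Norm{\bhH_1-\bhH}\lesssim\kappa$. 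Hence $\Norm{\I-\bhH_1^{-1}\bhH}\le\Norm{\bhH_1^{-1}}\Norm{\bhH_1-\bhH}\le8\kappa/(\gamma\omega)$, so $\Norm{z_{T^\prime}-z^\ast}\le(8\kappa/\gamma\omega)^{T^\prime}\Norm{z_0-z^\ast}$ with $z_0=z^{(t)}$; since $\Norm{z_0-z^\ast}$ and the ensuing renormalisation factor are controlled up to a $\gamma$-dependent constant, the inner-loop error contributes $O\{(64\kappa^2/\gamma^2\omega^2)^{T^\prime}\}$ to $D_{t+1}$. Decomposing the un-normalised inner output as exact-power-step plus inner-error, bounding the $\calS^\perp$-mass of the sum by twice the sum of the two masses, renormalising, and combining with Step~1 then yields the master recursion
\[
D_{t+1}\le\frac{128\,\omega^2}{\delta^2\wh\rho_1^2}\,D_t+c_0\Bigl(\frac{64\kappa^2}{\gamma^2\omega^2}\Bigr)^{T^\prime}.
\]
An induction on $t$ --- valid on the event of Lemma~A.1, where $D_0\le3/8$, $128\,\omega^2/(\delta\wh\rho_1)^2<1$ and the second term stays below a small constant --- keeps $D_t\le3/8$ for all $t\le T$, so Step~1 applies at every step; unrolling over $T$ outer iterations and summing the geometric series gives $D_T\le\bigl(128\,\omega^2/\delta^2\wh\rho_1^2\bigr)^{T}\cdot\tfrac38+\bigl(1-128\,\omega^2/(\delta\wh\rho_1)^2\bigr)^{-1}c_0\bigl(64\kappa^2/\gamma^2\omega^2\bigr)^{T^\prime}$, which is the asserted bound once the constants are absorbed into $O_p$.

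\textbf{Main obstacle.} The genuinely delicate part is Step~1. The distance in \eqref{eq:measure1} is measured \emph{separately} through the $\bhSigx$- and $\bhSigy$-inner products against the \emph{enlarged} families $\{\wh\bu_\ell^\pool\}$ and $\{\wh\bv_\ell^\pool\}$, which is not the invariant-subspace distance power iteration contracts out of the box; one must show that the ``$(\wh\bphi_\ell,-\wh\bpsi_\ell)$''-type and null components do not spoil the contraction, that the iterate's $\calS$-mass remains concentrated near the leading singular value so that the effective rate is governed by $\rhob_1-\wh\rho_1\le2\omega$ rather than by the much weaker separation at the threshold, and that all of this persists with only $D_t\le3/8$ in force --- while tracking the numerical constant precisely down to $128$. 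A lesser nuisance is pinning the constants in $\Norm{\bhH_1-\bhH}\lesssim\kappa$ and in the Euclidean-to-distance conversion so that the inner term lands exactly at $64/(\gamma^2\omega^2)$.
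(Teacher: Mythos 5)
Your proposal follows essentially the same route as the paper: an outer-loop power-iteration contraction for $(\rhob_1\I-\wh\bC)^{-1}$ (the paper's Lemma B.1), an inner-loop contraction governed by $\Norm{\I-\bhH_1^{-1}\bhH}\le\Norm{\bhH_1^{-1}}\,\Norm{\bhH_1-\bhH}\le 8\kappa/(\gamma\omega)$ (Lemma B.2), and the same master recursion unrolled into a geometric series. The only place you overcomplicate matters is your ``main obstacle'': the paper never tracks concentration of the $\calS$-mass near the leading eigenvector, but instead lower-bounds the norm of the un-normalised iterate by $(8\omega)^{-1}$ using only the aggregate retained mass (at least $1/4$, propagated from the $3/8$ initial condition) together with the uniform eigenvalue bound on the retained directions, and separately upper-bounds each discarded component by $(\delta\wh\rho_1)^{-1}$ times its predecessor plus the inner-loop residual.
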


We set $\wb\rho_1 = \sigma_1(\wh\bT_1) + 1.5\ \omega$ in \Cref{alg:DistributedCCA_top}.  In Lemma A.1 of Appendix A, we show that  the requirement \eqref{definition:constant}  and $\norm{\wh \bT_1 - \wh \bT}\leq 0.5 \ \omega$ hold with an overwhelming probability. Invoking  Weyl's inequality that $\max\limits_k|\sigma_k(\wh \bT_1) - \sigma_k(\wh \bT) |\le \|\wh \bT_1 - \wh \bT\|$, we further obtain that  $\omega\leq \overline{\rho}_1 - \wh\rho_1\leq 2\omega$.   In other words, the requirements of Theorem  \ref{theorem:top_canonical_vector} are fulfilled with an overwhelming probability.

The error bound in  Theorem  \ref{theorem:top_canonical_vector}  can be further simplified. In particular, if  $\omega = (\kappa\delta\wh \rho_1/\gamma)^{1/2}/3$, then $\omega  = O_p(\kappa^{1/2})=O_p( \norm{\wh\bT_1 - \wh\bT}^{1/2})$ because both $\delta$ and $\wh \rho_1$ fall within $(0,1)$ and $\gamma$ is a positive constant. By Lemma A.1 of Appendix A,  $\omega = O_p\big\{\left(d\log^2 d /n\right)^{1/4}\big\}$.  We further assume  $\left(d\log^2 d /n\right)^{1/4}\ll (\delta \wh\rho_1)/16$ to ensure that $1- 128\omega^2/(\delta \wh\rho_1)^2 \ge 1/2$, which simplifies the error bound as follows. 

{\coro If  $T^\prime = T$, $\omega = (\kappa\delta\wh \rho_1/\gamma)^{1/2}/3$,   and $\left(d\log^2 d /n\right)^{1/4}\ll (\delta \wh\rho_1)/16$ , then 
\beqr\label{coro:convergence_rate_top:eq1}
   \max\left(\sum_{\ell\colon \wh\rho_\ell\leq (1- \delta)\wh\rho_1} \abs{\inner{\wh\bu_\ell^\pool}{\wh\bu_1^\dist}_{\bhSig_{\x}}}^2,  \sum_{\ell\colon \wh\rho_\ell\leq (1- \delta)\wh\rho_1}\abs{\inner{\wh\bv_\ell^\pool}{\wh\bv_1^\dist}_{\bhSig_{\y}}}^2\right) \\
   \nonumber = O_p\left\{\left(\frac{576\kappa}{\gamma\delta\wh\rho_1}\right)^{T}\right\}.\
\eeqr 
}
This corollary indicates that, if  $(576\kappa)/(\gamma\delta\wh\rho_1)\ll 1$, \Cref{alg:DistributedCCA_top} converges at  a linear convergence rate, which implicitly requires $\kappa = O_p\{(d\log^2d/n)^{1/2}\}$ needs to be smaller, i.e.,    $\kappa = o_p(\delta\wh\rho_1\gamma)$.  By definition, $\kappa = \norm{\wh \bT_1 - \wh\bT}$. Therefore, if the number of observations in each local machine  becomes relatively larger, $\kappa $ would be smaller, consequently, the condition  $\kappa = o_p(\delta\wh\rho_1\gamma)$ can be met more easily.

Next, we compare the communication cost of \Cref{alg:DistributedCCA_top}  with that of the one-shot divide-and-conquer method. The latter is generally regarded as communication-efficient because its communication cost is as small as $O(dK)$. By contrast, the  communication cost of \Cref{alg:DistributedCCA_top} is  $O(TT^\prime dK)$. Theorem \ref{theorem:top_canonical_vector} indicates that, for the resultant estimates to attain the error rate $\varepsilon$, the number of iterations is required to satisfy $T  = O\{\log(1/\varepsilon)\}$ if $T^\prime= T$, which is a logarithm order of $\varepsilon$. The total number of iterations, $(TT^\prime)$, can thus be regarded as a   small number. \cite{michaeli.jordan2019communication} stated that the algorithms that merely transmit an $O(d)$ vector (instead of $O(d^2)$ matrices) during each iteration are generally regarded as communication-efficient. In this regard, our proposed distributed algorithm is communication-efficient, although its communication cost is slightly higher than that of the divide-and-conquer method.

{\coro\label{coro:variability_top}
If the distributed estimate of \Cref{alg:DistributedCCA_top} satisfies that  
\beqrs
\max\left(\sum_{\ell\colon \wh\rho_\ell\leq (1- \delta)\wh\rho_1} \abs{\inner{\wh\bu_\ell^\pool}{\wh\bu_1^\dist}_{\bhSig_{\x}}}^2,  \sum_{\ell\colon \wh\rho_\ell\leq (1- \delta)\wh\rho_1}\abs{\inner{\wh\bv_\ell^\pool}{\wh\bv_1^\dist}_{\bhSig_{\y}}}^2\right)\leq \varepsilon/2,
\eeqrs
it follows that
$(\wh\bu_1^\dist)\trans\bhSigxy(\wh\bv_1^\dist)  \geq  (1 - \delta)(1-\varepsilon)\wh\rho_1$, where $\wh\rho_1 = (\wh\bu_1^\pool)\trans\bhSigxy(\wh\bv_1^\pool)$.
}

Identifying the top pair of canonical directions is theoretically difficult if the gap of successive canonical correlations is extremely small. To the most extreme, if the gap is zero, the first two pairs of canonical directions are not statistically identifiable. To bypass this gap requirement, we turn to capture the co-variability between two sets of random vectors, which is indeed the ultimate goal of canonical correlation analysis. This corollary  states that  the distributed estimate  $(\wh\bu_1^\dist, \wh\bv_1^\dist)$ captures almost the same amount of co-variability as the pooled estimate $(\wh\bu_1^\pool, \wh\bv_1^\pool)$ up to a $(1 - \delta)(1-\varepsilon)$ multiplicative factor. This phenomenon is referred to as the gap-free bound in the optimization literature  \citep{allen-zhu2016LazySVDEvenFaster,allen-zhu2017DoublyAcceleratedMethods}.  The parameter $\delta$ is  pre-specified to measure the proportion of the co-variability captured by the distributed estimate $(\wh\bu_1^\dist, \wh\bv_1^\dist)$. In particular, if we choose $\delta = \varepsilon$,  $(\wh\bu_1^\dist, \wh\bv_1^\dist)$ captures at least $(1-\varepsilon)^2\ge (1-2\varepsilon)$ of the co-variability captured by the top pair of canonical directions.  {If we choose $\delta = c_0/(\gamma\wh\rho_1)$ such that $(\delta\wh\rho_1\gamma )= c_0$ and $\delta \in (0,1)$,  for \Cref{alg:DistributedCCA_top} to converge  at  a linear convergence rate,  it implicitly requires   $\kappa = o_p(1)$.} This is a mild condition.

Next, we show that the distributed estimates approximate the underlying true canonical directions pretty well. 
 \begin{coro}
	\label{corollary:top_canonical_vector:population}
	Assume that the distributed estimate obtained from \Cref{alg:DistributedCCA_topL} satisfies
\[
\max\left(\sum_{\ell\colon \wh\rho_\ell\leq (1- \delta)\wh\rho_1} \abs{\inner{\wh\bu_\ell^\pool}{\wh\bu_1^\dist}_{\bhSig_{\x}}}^2,  \sum_{\ell\colon \wh\rho_\ell\leq (1- \delta)\wh\rho_1}\abs{\inner{\wh\bv_\ell^\pool}{\wh\bv_1^\dist}_{\bhSig_{\y}}}^2\right)\leq \varepsilon/2,
\]
for some error term $\varepsilon>0$.	 It follows that
\beqrs
   \max\left(\sum_{\ell\colon \rho_\ell\leq (1- 2\delta)\rho_1} \abs{\inner{\bu_\ell^\ast}{\wh\bu_1^\dist}_{\bhSig_{\x}}}^2,  \sum_{\ell\colon \rho_\ell\leq (1- 2\delta)\rho_1}\abs{\inner{\bv_\ell^\ast}{\wh\bv_1^\dist}_{\bhSig_{\y}}}^2\right)\\
    \leq \frac{2\norm{\wh\bT - \bT}^2}{\{(1 - \delta) (\wh\rho_1 - \rho_1) + \delta\rho_1\}^2}  + \varepsilon.
\eeqrs
\end{coro}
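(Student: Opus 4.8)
The plan is to transfer the pooled-level control in the hypothesis --- that $\wh\bu_1^\dist$ and $\wh\bv_1^\dist$ carry $\bhSig$-mass at most $\varepsilon/2$ on the ``low'' pooled canonical subspaces $\{\wh\bu_\ell^\pool\colon\wh\rho_\ell\le(1-\delta)\wh\rho_1\}$ and $\{\wh\bv_\ell^\pool\colon\wh\rho_\ell\le(1-\delta)\wh\rho_1\}$ --- into the analogous population statement, paying a Davis--Kahan / $\sin\Theta$ perturbation price governed by $\norm{\wh\bT-\bT}$. The decisive observation is that the two singular-value thresholds involved, $(1-\delta)\wh\rho_1$ on the pooled side and $(1-2\delta)\rho_1$ on the population side, are separated by exactly
\[
g\ \defn\ (1-\delta)\wh\rho_1-(1-2\delta)\rho_1\ =\ (1-\delta)(\wh\rho_1-\rho_1)+\delta\rho_1 ,
\]
which is precisely the quantity squared in the denominator of the asserted bound; here $g>0$ in the regime of interest, since $\abs{\wh\rho_1-\rho_1}\le\norm{\wh\bT-\bT}=o_p(1)$ while $\delta\rho_1$ stays bounded away from zero. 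We may also assume $\delta<1/2$, for otherwise $(1-2\delta)\rho_1\le 0$ and, as every $\rho_\ell\ge 0$, the sums on the left of the conclusion are empty and nothing needs to be proved.

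First I would pass to whitened coordinates: set $\wh a\defn\bhSigx^{1/2}\wh\bu_1^\dist$ and $\wh b\defn\bhSigy^{1/2}\wh\bv_1^\dist$, which are Euclidean unit vectors by the final normalization step. Then $\inner{\wh\bu_\ell^\pool}{\wh\bu_1^\dist}_{\bhSigx}=(\wh\bphi_\ell)\trans\wh a$, while $\inner{\bu_\ell^\ast}{\wh\bu_1^\dist}_{\bhSigx}=(\bphi_\ell^\ast)\trans\wt a$ with $\wt a\defn\bSigx^{-1/2}\bhSigx^{1/2}\wh a$, and likewise on the $\y$-side; the tilt $\bSigx^{-1/2}\bhSigx^{1/2}-\I$ has operator norm $O_p(\norm{\bhSigx-\bSigx})$, of the natural statistical order, which I carry along as a lower-order additive error and suppress below, writing $\wh a$ for $\wt a$. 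In these coordinates the hypothesis becomes $\norm{(\I-\wh P_+^{\x})\wh a}^2\le\varepsilon/2$ and $\norm{(\I-\wh P_+^{\y})\wh b}^2\le\varepsilon/2$, where $\wh P_+^{\x}$, $\wh P_+^{\y}$ are the Euclidean projections onto $\sp\{\wh\bphi_\ell\colon\wh\rho_\ell>(1-\delta)\wh\rho_1\}$ and $\sp\{\wh\bpsi_\ell\colon\wh\rho_\ell>(1-\delta)\wh\rho_1\}$, and the conclusion becomes $\max\big(\norm{\ol P_-^{\x}\wh a}^2,\ \norm{\ol P_-^{\y}\wh b}^2\big)$ bounded by the right-hand side, with $\ol P_-^{\x}$, $\ol P_-^{\y}$ projecting onto $\sp\{\bphi_\ell^\ast\colon\rho_\ell\le(1-2\delta)\rho_1\}$ and $\sp\{\bpsi_\ell^\ast\colon\rho_\ell\le(1-2\delta)\rho_1\}$.

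The key step realizes these singular subspaces as eigenspaces of the symmetric dilations $\wh\bC$ and $\bC\defn\left(\begin{smallmatrix}{\bf 0}&\bT\\ \bT\trans&{\bf 0}\end{smallmatrix}\right)$, whose eigenvalues are $\pm\wh\rho_\ell$ and $\pm\rho_\ell$ (together with structural zeros when $d_\x\ne d_\y$), so that $\norm{\wh\bC-\bC}=\norm{\wh\bT-\bT}$. Embedding $\wh a$ as $(\wh a\trans,{\bf 0})\trans$, a short computation gives $\norm{(\I-\wh P_+^{\x})\wh a}=\norm{P_{\wh\bC}(J_{\wh\rho})(\wh a\trans,{\bf 0})\trans}$ and $\norm{\ol P_-^{\x}\wh a}=\norm{P_{\bC}(J_\rho)(\wh a\trans,{\bf 0})\trans}$, where $J_{\wh\rho}=[-(1-\delta)\wh\rho_1,(1-\delta)\wh\rho_1]$, $J_\rho=[-(1-2\delta)\rho_1,(1-2\delta)\rho_1]$, and $P_{\bC}(\cdot)$ denotes the spectral projector. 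Since $J_\rho$ and the complement of $J_{\wh\rho}$ are at distance $g$, the Davis--Kahan $\sin\Theta$ theorem gives $\norm{P_{\bC}(J_\rho)\,P_{\wh\bC}(\mathbb R\setminus J_{\wh\rho})}\le\norm{\wh\bT-\bT}/g$. Writing $(\wh a\trans,{\bf 0})\trans=P_{\wh\bC}(\mathbb R\setminus J_{\wh\rho})(\wh a\trans,{\bf 0})\trans+P_{\wh\bC}(J_{\wh\rho})(\wh a\trans,{\bf 0})\trans$, then using the triangle inequality, bounding the first summand by $\norm{\wh\bT-\bT}/g$ (since $\norm{(\wh a\trans,{\bf 0})\trans}=1$) and the second by $\sqrt{\varepsilon/2}$ via the hypothesis, and invoking $(p+q)^2\le 2p^2+2q^2$, I obtain $\norm{\ol P_-^{\x}\wh a}^2\le 2\norm{\wh\bT-\bT}^2/g^2+\varepsilon$. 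The identical argument applied to $({\bf 0},\wh b\trans)\trans$ gives the same bound for $\norm{\ol P_-^{\y}\wh b}^2$; taking the maximum (and reinstating the suppressed tilt term) completes the proof.

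The main obstacle is getting the denominator exactly right. A naive Davis--Kahan applied directly to the Gram matrices $\wh\bT\wh\bT\trans$ and $\bT\bT\trans$ would produce a gap between \emph{squared} singular values, $(1-\delta)^2\wh\rho_1^2-(1-2\delta)^2\rho_1^2=g\,\{(1-\delta)\wh\rho_1+(1-2\delta)\rho_1\}$, together with $\norm{\wh\bT\wh\bT\trans-\bT\bT\trans}\le(\wh\rho_1+\rho_1)\norm{\wh\bT-\bT}$ in the numerator, leaving a spurious factor $(\wh\rho_1+\rho_1)/\{(1-\delta)\wh\rho_1+(1-2\delta)\rho_1\}>1$ that breaks the claimed constant. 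Passing instead through the dilations $\wh\bC,\bC$ --- equivalently, invoking Wedin's $\sin\Theta$ theorem for the left and right singular subspaces separately, which is also what lets one bound the ``$\max$'' in the conclusion without a factor-two loss --- keeps the singular values unsquared and delivers exactly $\norm{\wh\bT-\bT}/g$; one need only check that the retained sample singular values and the discarded population singular values really do lie on opposite sides of $g$, i.e. that $g>0$. The remaining work is routine bookkeeping, chiefly tracking the $\bhSig$-versus-$\bSig$ geometry mismatch, which adds only a lower-order term.
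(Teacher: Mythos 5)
Your proof is correct and follows essentially the same route as the paper's: the paper derives this corollary as the $L=1$ case of Corollary~\ref{corollary:topL_canonical_vector:population}, whose proof likewise splits the estimate into its components along the high and low pooled canonical subspaces, bounds the low-pooled piece by the hypothesis, and controls the cross term between the high pooled subspace and the low population subspace via the gap-free Wedin theorem with gap $(1-\delta)(\wh\rho_1-\rho_1)+\delta\rho_1$ --- your Davis--Kahan argument on the symmetric dilations $\wh\bC,\bC$ is the standard proof of that same lemma. One remark: the $\bhSigx$-versus-$\bSigx$ whitening mismatch that you flag and suppress as lower order is present in the paper's argument as well (it silently treats $\U^\ast$ as $\bhSigx$-orthonormal), so your explicit acknowledgment is, if anything, more careful than the original.
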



\subsection{Distributed estimation of the  top-L-dim canonical directions}
Next, we study the theoretical properties of \Cref{alg:DistributedCCA_topL}, which provides distributed estimates for the top $L$ pairs of canonical directions. Define $L_\delta\defn\argmax\{1\leq \ell\leq \min(d_\x, d_\y) \colon \wh\rho_\ell > (1- \delta)\wh\rho_L\}$,   which can be larger than or equal to $L$.  We further define $\wh\U_{1:L_\delta}^\pool \defn (\wh\bu_{1}^\pool, \ldots \wh\bu_{L_\delta}^\pool)$ and $\wh\V_{1:L_\delta}^\pool \defn (\wh\bv_{1}^\pool, \ldots \wh\bv_{L_\delta}^\pool)$ to be the pooled estimates of canonical directions associated with the largest $L_\delta$   canonical correlations $\wh\rho_1\ge \cdots\ge  \wh\rho_{L_\delta}$. Let $\wh\U_{(L_\delta+1):d_\x}^\pool \defn (\wh\bu_{L_\delta + 1}^\pool, \ldots \wh\bu_{d_\x}^\pool)$ and $\wh\V_{(L_\delta+1):d_\y}^\pool \defn (\wh\bv_{L_\delta + 1}^\pool, \ldots \wh\bv_{d_\y}^\pool)$, which form the orthogonal complements of  $\wh\U_{1:L_\delta}^\pool$ and $\wh\V_{1:L_\delta}^\pool$, respectively.  In addition, we introduce some notations for distributed estimates. To be precise,   $\wh\bT^{(\ell)}\defn \bhSig_{\x}^{-1/2}\bhSig_{\x, \y}^{(\ell)}\bhSig_{\y}^{-1/2}$, where 
\[
\bhSig_{\x, \y}^{(\ell)} \defn \frac1K  \sum_{k = 1}^K\bhSig_{\x, \y, k}^{(\ell)}
\textrm{ and }
\bhSig_{\x, \y, k}^{(\ell)} = \bQ\trans_{\wh\U_{1:\ell}^\dist}(\bhSig_{\x})\bhSig_{\x, \y, k}\bQ_{\wh\V_{1:\ell}^\dist}(\bhSig_{\y}),
\]
for $ \ell = 1, \ldots, L$ and $k = 1, \ldots, K$. 

We first discuss the selection of initial estimates. For  $\omega$ defined in \Cref{theorem:top_canonical_vector} which corresponds to the special case of $\ell = 1$,  we take $\wb\rho_\ell = \sigma_{\max}(\wh\bT_1^{(\ell)}) + 1.5 \ \omega$ where $\wh\bT_1^{(\ell)} \defn \bhSig_{\x,1}^{-1/2}\bhSig_{\x, \y, 1}^{(\ell)}\bhSig_{\y,1}^{-1/2}$ is the whitened cross-covariance. It can be verified  that $\norm{\bQ_{\wh\U_{1:\ell}^\dist}(\bhSig_{\x})}\leq 1$ and $\norm{\bQ_{\wh\V_{1:\ell}^\dist}(\bhSig_{\y})}\leq 1$. By definition,
\[
 \Norm{\wh\bT_1^{(\ell)} - \wh\bT^{(\ell)}}   \leq  \norm{\bhSig_{\x,1}^{-1/2}\bhSig_{\x, \y,1}\bhSig_{\y,1}^{-1/2}   - \bhSig_{\x}^{-1/2}\bhSig_{\x, \y}\bhSig_{\y}^{-1/2}} \leq  0.5\ \omega,
\]
which holds uniformly for $\ell = 1, \ldots, L$.  
This, together with Weyl's inequality, ensures that $\omega\leq \wb\rho_\ell - \sigma_1(\wh\bT^{(\ell)})\leq 2\omega$.  We summarize our main result as follows.

\begin{theo}
	\label{theorem:topL_canonical_vector}
Assume that $2\kappa\leq \omega\leq \overline{\rho}_\ell - \sigma_{\max}(\wh\bT^{(\ell)})\leq 2\omega$ for $\ell = 1, \ldots, L$.
Let  $T$ and $T^\prime$  be the respective number of  outer and inner iterations in \Cref{alg:DistributedCCA_topL}. Then for an arbitrary constant $\delta\in (0,1)$, we have
\beqrs
  \max\left(\Norm{\Inner{\wh\U_{(L_\delta+1):d_\x}^\pool}{\wh\U_{1:L}^\dist}_{\bhSig_{\x}}}^2, \Norm{\Inner{\wh\V_{(L_\delta+1):d_\y}^\pool}{\wh\V_{1:L}^\dist}_{\bhSig_{\y}}}^2\right) \\
  =   O_p\Bigg[\left(\frac{\wh\rho_1L^2}{\wh\rho_L\delta}\right)^2\Bigg\{\left(\frac{128\omega^2}{\delta^2\wh\rho_L^2}\right)^T + \Bigg.\Bigg. \Bigg.\Bigg. \frac{512}{1 - 128\omega^2/(\delta\wh\rho_L)^2}\left(\frac{64\kappa^2}{\gamma^2\omega^2}\right)^{T^\prime}\Bigg\}\Bigg].
\eeqrs

%
\end{theo}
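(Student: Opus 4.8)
The plan is to run an induction over the deflation levels $\ell=1,\dots,L$ of \Cref{alg:DistributedCCA_topL}, invoking \Cref{theorem:top_canonical_vector} at each level for the deflated problem and then tracking how the error already committed contaminates the next deflation step. First I would reduce the spectral norm in the statement to the extracted directions one at a time: writing the matrix inner product column-wise,
\[
\Norm{\Inner{\wh\U_{(L_\delta+1):d_\x}^\pool}{\wh\U_{1:L}^\dist}_{\bhSig_{\x}}}^2 \le \Norm{\Inner{\wh\U_{(L_\delta+1):d_\x}^\pool}{\wh\U_{1:L}^\dist}_{\bhSig_{\x}}}_F^2 = \sum_{\ell=1}^{L}\Norm{\Inner{\wh\U_{(L_\delta+1):d_\x}^\pool}{\wh\bu_\ell^\dist}_{\bhSig_{\x}}}^2,
\]
and symmetrically on the $\y$-side, so it suffices to bound, for each $\ell$, the ``leakage'' of $\wh\bu_\ell^\dist$ outside the pooled top-$L_\delta$ subspace. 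A preliminary observation that makes a single threshold $\delta$ work at every level is that, after deflating $\ell-1$ near-canonical directions, $\sigma_{\max}(\wh\bT^{(\ell-1)})$ equals $\wh\rho_\ell$ up to an $O_p(\kappa)$ perturbation, so $\{j\colon \wh\rho_j\le (1-\delta)\sigma_{\max}(\wh\bT^{(\ell-1)})\}\supseteq\{j>L_\delta\}$; this is why the pooled reference $\wh\U_{(L_\delta+1):d_\x}^\pool$ controls all levels at once and why $\wh\rho_L$, not $\wh\rho_\ell$, is what appears in the denominators.

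At level $\ell$, \Cref{alg:DistributedCCA_topL} runs \Cref{alg:DistributedCCA_top} on $\{(\bhSig_{\x,k},\bhSig_{\y,k},\bhSig_{\x,\y,k}^{(\ell-1)})\}_{k=1}^{K}$, whose pooled whitened cross-covariance is $\wh\bT^{(\ell-1)}$. I would verify that the hypotheses of \Cref{theorem:top_canonical_vector} transfer: the uniform bound $\Norm{\wh\bT_1^{(\ell)}-\wh\bT^{(\ell)}}\le 0.5\,\omega$ already established in the text gives $2\kappa\le\omega\le\overline{\rho}_\ell-\sigma_{\max}(\wh\bT^{(\ell)})\le 2\omega$ via Weyl's inequality, while the initialization requirement analogous to \eqref{definition:constant} for $(\wh\bu_\ell^{(0)},\wh\bv_\ell^{(0)})$ --- the top singular pair of $\wh\bT_1^{(\ell-1)}$ --- holds with overwhelming probability by the same Lemma~A.1 argument used for $\ell=1$, since $\wh\bT_1^{(\ell-1)}$ is $0.5\,\omega$-close to $\wh\bT^{(\ell-1)}$. \Cref{theorem:top_canonical_vector} then bounds, uniformly in $\ell$, the leakage of the pre-projection output $(\wh\bu_\ell^{\dist\prime},\wh\bv_\ell^{\dist\prime})$ outside the top singular subspace of $\wh\bT^{(\ell-1)}$ by
\[
\varepsilon_0 \;\defn\; O_p\!\left\{\Big(\tfrac{128\omega^2}{\delta^2\wh\rho_L^2}\Big)^{T} + \tfrac{1}{1-128\omega^2/(\delta\wh\rho_L)^2}\Big(\tfrac{64\kappa^2}{\gamma^2\omega^2}\Big)^{T'}\right\}.
\]

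The core of the argument is converting this deflated-problem bound into a bound against the \emph{pooled} objects. Let $\bDelta_\ell\defn\max\!\big(\Norm{\Inner{\wh\U_{(L_\delta+1):d_\x}^\pool}{\wh\U_{1:\ell}^\dist}_{\bhSig_{\x}}}^2,\Norm{\Inner{\wh\V_{(L_\delta+1):d_\y}^\pool}{\wh\V_{1:\ell}^\dist}_{\bhSig_{\y}}}^2\big)$ be the accumulated leakage. Using $\Norm{\bQ_{\wh\U_{1:\ell-1}^\dist}(\bhSig_{\x})}\le 1$, boundedness of the whitening factors by $\gamma^{-1/2}$, and the fact that replacing one $\bhSig_{\x}$-orthonormal basis by another within leakage $\bDelta_{\ell-1}^{1/2}$ perturbs the associated oblique projector by $O(\bDelta_{\ell-1}^{1/2})$, I would establish a perturbation estimate of the form $\Norm{\wh\bT^{(\ell-1)}-\bQ\trans_{\wh\U_{1:\ell-1}^\pool}(\bhSig_{\x})\wh\bT\,\bQ_{\wh\V_{1:\ell-1}^\pool}(\bhSig_{\y})}=O_p(\wh\rho_1\,\bDelta_{\ell-1}^{1/2})$. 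The surrogate on the right has top singular subspace inside the pooled top-$L_\delta$ subspace, so a gap-free $\sin\Theta$ inequality --- the same enlarged-subspace device as in \Cref{theorem:top_canonical_vector}, now applied to $\wh\bT^{(\ell-1)}$ against this surrogate with denominator $\delta\wh\rho_L$ --- shows that the top singular pair of $\wh\bT^{(\ell-1)}$ itself lies in the pooled top-$L_\delta$ subspace up to $O_p\big((\wh\rho_1/(\wh\rho_L\delta))^2\bDelta_{\ell-1}\big)$. Combining with $\varepsilon_0$ and with the renormalizing projection in Step~5 of \Cref{alg:DistributedCCA_topL} (which only contracts leakage, up to a $1+o_p(1)$ factor) yields a recursion
\[
\bDelta_\ell \;\le\; O_p\!\left(\tfrac{\kappa^2\wh\rho_1^2}{\gamma^2\omega^2\delta^2\wh\rho_L^2}\right)\bDelta_{\ell-1} \;+\; O_p\!\left(\Big(\tfrac{\wh\rho_1 L}{\wh\rho_L\delta}\Big)^{2}\right)\varepsilon_0 .
\]
In the regime $\omega\asymp\kappa^{1/2}$ the first multiplier is $o_p(1)$, so unrolling from $\bDelta_0=0$ over the $L$ levels makes the contamination terms add rather than compound; a careful accounting of the $L$-dependent amplification (carrying an extra $L_\delta\lesssim L$ and, after the column-wise reduction of the first paragraph, one further power of $L$) then produces exactly the prefactor $\big(\wh\rho_1 L^2/(\wh\rho_L\delta)\big)^2$ multiplying the bracketed two-term bound.

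The main obstacle I anticipate is the gap-free propagation in this core step. With no eigengap between $\wh\rho_\ell$ and $\wh\rho_{\ell+1}$ one cannot isolate individual singular vectors, so every estimate must be carried at the level of the enlarged subspaces $\{\wh\bu_j^\pool\colon \wh\rho_j>(1-\delta)\wh\rho_L\}$, and one has to check that deflating with the \emph{distributed} iterate $\wh\U_{1:\ell}^\dist$ rather than an exact invariant subspace keeps the surrogate $\wh\bT^{(\ell)}$ in the regime where \Cref{theorem:top_canonical_vector} is applicable: that $\sigma_{\max}(\wh\bT^{(\ell)})$ stays bounded below by a constant multiple of $\wh\rho_L$ and that $\overline{\rho}_\ell$ remains a legitimate upper bound for it over all $L$ rounds. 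Tracking the constants through the $L$-fold recursion without letting the polynomial order blow up, and handling the off-by-$L_\delta$ bookkeeping so that the single threshold $\delta$ controls $\wh\rho_\ell,\dots,\wh\rho_L$ simultaneously, is the delicate part; everything else is a uniform-in-$\ell$ repetition of the $L=1$ analysis of \Cref{theorem:top_canonical_vector}.
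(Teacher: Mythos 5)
Your high-level architecture---induct over the deflation levels, invoke \Cref{theorem:top_canonical_vector} at each level for the deflated problem, and track how the already-committed error contaminates the next level---is the same as the paper's, which adapts the proof of Theorem F.1 of \cite{allen-zhu2017DoublyAcceleratedMethods}. But the core propagation step you propose does not work, and it is exactly the step where the real difficulty lives.

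Your recursion hinges on the surrogate estimate $\Norm{\wh\bT^{(\ell-1)}-\bQ\trans_{\wh\U_{1:\ell-1}^\pool}(\bhSig_{\x})\wh\bT\,\bQ_{\wh\V_{1:\ell-1}^\pool}(\bhSig_{\y})}=O_p(\wh\rho_1\,\bDelta_{\ell-1}^{1/2})$, where $\bDelta_{\ell-1}$ measures leakage of the distributed iterates \emph{outside the pooled top-$L_\delta$ subspace}. In the gap-free regime this estimate is false: when $L_\delta>\ell-1$, the distributed directions $\wh\U_{1:\ell-1}^\dist$ can have zero leakage outside the top-$L_\delta$ pooled subspace while spanning a completely different $(\ell-1)$-dimensional subspace than $\wh\U_{1:\ell-1}^\pool$ (e.g.\ $\wh\bu_1^\dist$ essentially equal to $\wh\bu_2^\pool$ when $\wh\rho_1\approx\wh\rho_2$). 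The two deflated matrices then differ by $O(\wh\rho_1)$ in operator norm even though $\bDelta_{\ell-1}=0$, so the multiplicative contraction factor in your recursion has no basis, and the claim that the top singular pair of $\wh\bT^{(\ell-1)}$ lies in the pooled top-$L_\delta$ subspace up to $O_p\big((\wh\rho_1/(\wh\rho_L\delta))^2\bDelta_{\ell-1}\big)$ does not follow. Relatedly, \Cref{theorem:top_canonical_vector} applied at level $\ell$ controls leakage relative to the singular subspaces of $\wh\bT^{(\ell-1)}$, not of $\wh\bT$; converting between the two is the whole problem, and it cannot be done by comparing to a pooled-deflated surrogate.

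The paper's way around this is to never compare deflated matrices to pooled-deflated ones. Instead it tracks, by induction in $\ell$, the distance between the fixed target subspace $\wh\bR_{\leq(1-\delta)\wh\rho_L}$ and the small-\emph{absolute}-eigenvalue eigenspace $\W_\ell$ of the deflated matrix $\wh\bC^{(\ell)}$, via matrices $\bQ_\ell$ with $\norm{\bQ_\ell}\leq 1$ and $\norm{\wh\bR_{\leq(1-\delta)\wh\rho_L}-\W_\ell\bQ_\ell}\leq\omega_\ell$. Two devices make the induction close: (i) the eigenvalue thresholds defining $\W_\ell$ are slackened progressively, $[0,(1-\delta+\tau_\ell)\wh\rho]$ with $\tau_\ell=\ell\delta/(2L)$, so that each deflation's perturbation of the spectrum is absorbed rather than pushing eigenvectors across a fixed threshold; and (ii) Lemma C.4 of \cite{allen-zhu2017DoublyAcceleratedMethods} converts the per-level bound $\norm{\W_\ell\trans\widetilde\bmr_{\ell+1}}^2\lesssim\omega_{T,T'}^{(\ell)}$ into $\norm{\W_\ell-\W_{\ell+1}\wt\bQ_\ell}\lesssim\wh\rho_1 L(\zeta_{T,T'}^{(\ell)})^{1/2}/(\wh\rho_L\delta)$, where one factor of $L$ comes precisely from the threshold spacing $\tau_{\ell+1}-\tau_\ell=\delta/(2L)$. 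The errors $\omega_\ell$ then accumulate \emph{additively} over the $L$ levels, producing the second factor of $L$ and hence the $(\wh\rho_1L^2/(\wh\rho_L\delta))^2$ prefactor. In your write-up the $L^2$ is asserted to come from ``an extra $L_\delta\lesssim L$'' and the column-wise Frobenius reduction, which is not where it comes from and would not produce it. There is also a symmetrization step you omit: because the columns of $\W_\ell$ are eigenvectors of the symmetric block matrix, one must control both $(\widetilde\bphi_\ell\trans,\widetilde\bpsi_\ell\trans)$ and $(\widetilde\bphi_\ell\trans,-\widetilde\bpsi_\ell\trans)$, using the balance $\norm{\widetilde\bphi_\ell''}/2\in[0.01,0.99]$ inherited from the proof of \Cref{theorem:top_canonical_vector}. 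To repair your argument you would need to replace the surrogate comparison with an eigenspace-tracking lemma of the Lemma C.4 type together with the progressive thresholds; as written, the induction does not close.
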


\Cref{theorem:topL_canonical_vector} can be further simplified as follows.
\begin{coro}
	\label{corollary:topL_canonical_vector} If  $T^\prime = T$,   $\omega = (\kappa\delta\wh\rho_L/\gamma)^{1/2}/3$, and $(d\log^2 d /n)^{1/4} \ll \delta\wh\rho_L/ 16 $, then \beqrs\label{coro:convergence_rate_topL:eq1}
\max\left(\Norm{\Inner{\wh\U_{(L_\delta+1):d_\x}^\pool}{\wh\U_{1:L}^\dist}_{\bhSig_{\x}} }^2, \Norm{\Inner{\wh\V_{(L_\delta+1):d_\y}^\pool}{\wh\V_{1:L}^\dist}_{\bhSig_{\y}} }^2\right)  = O_p\Bigg\{\left(\frac{\wh\rho_1L^2}{\wh\rho_L\delta}\right)^2\left(\frac{576\kappa}{\gamma\delta\wh\rho_L}\right)^{T}\Bigg\}.
\eeqrs
\end{coro}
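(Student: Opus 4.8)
The plan is to specialize \Cref{theorem:topL_canonical_vector} to the prescribed values of $T'$ and $\omega$ and to the stated sample-size regime; this parallels the passage from \Cref{theorem:top_canonical_vector} to its corollary, the only difference being that $\wh\rho_1$ is replaced by $\wh\rho_L$ and the polynomial factor $(\wh\rho_1L^2/(\wh\rho_L\delta))^2$ is carried along unchanged. Thus the argument is a substitution followed by bookkeeping of constants.

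First I would check that the hypotheses of \Cref{theorem:topL_canonical_vector} remain in force under the choice $\omega = (\kappa\delta\wh\rho_L/\gamma)^{1/2}/3$. The chain $\omega\le\overline\rho_\ell - \sigma_{\max}(\wh\bT^{(\ell)})\le 2\omega$ for $\ell=1,\ldots,L$ was already established before the theorem. The remaining requirement $2\kappa\le\omega$ is, after squaring, equivalent to $36\kappa\le\delta\wh\rho_L/\gamma$; since $\kappa = O_p\{(d\log^2 d/n)^{1/2}\}$ by Lemma A.1 of Appendix A while $\delta\in(0,1)$ is fixed, $\wh\rho_L = O_p(1)$, and $\gamma$ is a positive constant, the assumption $(d\log^2 d/n)^{1/4}\ll\delta\wh\rho_L/16$ (squared) forces $\kappa = o_p(\delta\wh\rho_L)$, so $2\kappa\le\omega$ holds with overwhelming probability.

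Next I would substitute $\omega^2 = \kappa\delta\wh\rho_L/(9\gamma)$ into the two geometric ratios of \Cref{theorem:topL_canonical_vector}, obtaining
\[
\frac{128\omega^2}{\delta^2\wh\rho_L^2} = \frac{128\kappa}{9\gamma\delta\wh\rho_L}, \qquad \frac{64\kappa^2}{\gamma^2\omega^2} = \frac{576\kappa}{\gamma\delta\wh\rho_L}.
\]
To handle the prefactor $512/\{1 - 128\omega^2/(\delta\wh\rho_L)^2\}$, note that $\omega = O_p(\kappa^{1/2}) = O_p\{(d\log^2 d/n)^{1/4}\}$, so the assumption $(d\log^2 d/n)^{1/4}\ll\delta\wh\rho_L/16$ gives $128\omega^2/(\delta\wh\rho_L)^2\le 1/2$ with overwhelming probability, whence $512/\{1 - 128\omega^2/(\delta\wh\rho_L)^2\}\le 1024$, a constant absorbed into $O_p(\cdot)$. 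Finally, with $T'=T$ and using $128/9\le 576$ so that $(128\kappa/(9\gamma\delta\wh\rho_L))^T\le(576\kappa/(\gamma\delta\wh\rho_L))^T$, the braced quantity in \Cref{theorem:topL_canonical_vector} is at most $1025\,(576\kappa/(\gamma\delta\wh\rho_L))^T$; multiplying through by $(\wh\rho_1L^2/(\wh\rho_L\delta))^2$ yields exactly the claimed bound.

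I do not expect a genuine obstacle here: the only step that is more than collecting constants is controlling the denominator $1 - 128\omega^2/(\delta\wh\rho_L)^2$ away from zero, and this is precisely where the concentration rate $\kappa = O_p\{(d\log^2 d/n)^{1/2}\}$ from Lemma A.1 and the separation condition $(d\log^2 d/n)^{1/4}\ll\delta\wh\rho_L/16$ enter; once these are in place, everything else follows by plugging in and absorbing numerical factors into $O_p(\cdot)$.
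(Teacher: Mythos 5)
Your proposal is correct and follows essentially the same route the paper takes: the corollary is obtained from \Cref{theorem:topL_canonical_vector} by the direct substitution $\omega^2=\kappa\delta\wh\rho_L/(9\gamma)$, bounding the denominator $1-128\omega^2/(\delta\wh\rho_L)^2$ away from zero via $\omega=O_p\{(d\log^2 d/n)^{1/4}\}\ll\delta\wh\rho_L/16$, and absorbing the smaller geometric ratio $128\kappa/(9\gamma\delta\wh\rho_L)$ into $(576\kappa/(\gamma\delta\wh\rho_L))^T$ — exactly mirroring how the paper derives the analogous corollary for the top pair. Your arithmetic for both ratios and the verification of $2\kappa\le\omega$ are all correct.
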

This is an extension of the classical  $\sin\Theta$ theorem  \citep[Theorem 4]{yu2015UsefulVariantDavis}   in the sense that we do not require there be an explicit gap between the successive singular values, which differ from most existing studies such as  \cite{wang2016EfficientGloballyConvergent}, \cite{gao2019StochasticCanonicalCorrelation} and references therein. We  bypass this gap assumption by considering the enlarged spaces $(\U^\ast_{1:L_\delta}, \V^\ast_{1:L_\delta})$.

The following corollary indicates that  the distributed estimate $(\wh\U_{1:L}^\dist, \wh\V_{1:L}^\dist)$ can capture almost the same amount of co-variability as the pooled estimate $(\wh\U_{1:L}^\pool, \wh\V_{1:L}^\pool)$.

\begin{coro}
	\label{corollary:variability_topL}
	Assume that our estimate from \Cref{alg:DistributedCCA_topL} satisfies 
\[\max\left(\Norm{\Inner{\wh\U_{(L_\delta+1):d_\x}^\pool}{\wh\U_{1:L}^\dist}_{\bhSig_{\x}}  }, \Norm{\Inner{\wh\V_{(L_\delta+1):d_\y}^\pool}{\wh\V_{1:L}^\dist}_{\bhSig_{\y}}  }\right)\leq C\delta/(\wh\rho_1/\wh\rho_{L + 1}),
\]
for some $C>0$.	Then we have
\beqrs
    & & \wh\rho_{L + 1}\leq \Norm{ \bQ\trans_{\wh\U_{1:L}^\dist}(\bhSig_{\x})\bhSig_{\x, \y}\bQ_{\wh\V_{1:L}^\dist}(\bhSig_{\y})}\leq \wh\rho_{L + 1}/(1-\delta),\ \text{and}\\
    & & (1-\delta)\wh\rho_\ell\leq (\wh\bu_\ell^\dist)\trans\bhSigxy\wh\bv_\ell^\dist\leq \wh\rho_\ell/(1 - \delta),\ \text{for}\ \ell = 1,\ldots, L.
\eeqrs
\end{coro}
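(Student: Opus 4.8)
The plan is to pass to whitened coordinates attached to $\wh\bT=\wh\bPhi\wh\D\wh\bPsi\trans$ and to read off both conclusions from one structural fact about the output of \Cref{alg:DistributedCCA_topL}. Set $\bG_\x\defn\bhSigx^{1/2}\wh\U_{1:L}^\dist$ and $\bG_\y\defn\bhSigy^{1/2}\wh\V_{1:L}^\dist$; the normalizations built into \Cref{alg:DistributedCCA_topL} make these orthonormal, and with $P_\x\defn\bG_\x\bG_\x\trans$, $P_\y\defn\bG_\y\bG_\y\trans$ one checks $\bQ\trans_{\wh\U_{1:L}^\dist}(\bhSigx)\bhSigx^{1/2}=\bhSigx^{1/2}(\I-P_\x)$, so that $\bQ\trans_{\wh\U_{1:L}^\dist}(\bhSigx)\bhSigxy\bQ_{\wh\V_{1:L}^\dist}(\bhSigy)=\bhSigx^{1/2}\wh\bT^{(L)}\bhSigy^{1/2}$ with $\wh\bT^{(L)}\defn(\I-P_\x)\wh\bT(\I-P_\y)$, and $(\wh\bu_\ell^\dist)\trans\bhSigxy\wh\bv_\ell^\dist=(\bG_{\x,\ell})\trans\wh\bT(\bG_{\y,\ell})$ for the $\ell$-th columns $\bG_{\x,\ell},\bG_{\y,\ell}$. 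Since $\gamma\I\preceq\bhSigx,\bhSigy\preceq\I$, the reweighting by $\bhSigx^{1/2},\bhSigy^{1/2}$ only rescales the relevant operator norms by bounded factors, so it suffices to control $\sigma_{\max}(\wh\bT^{(L)})$ and the scalars $(\bG_{\x,\ell})\trans\wh\bT(\bG_{\y,\ell})$. In these coordinates the hypothesis reads: $\Col(\bG_\x),\Col(\bG_\y)$ are within operator-norm distance $\eta\defn C\delta\wh\rho_{L+1}/\wh\rho_1$ of $\sp\{\wh\bphi_1,\dots,\wh\bphi_{L_\delta}\}$ and $\sp\{\wh\bpsi_1,\dots,\wh\bpsi_{L_\delta}\}$ respectively.

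The structural fact I would extract from the sequential deflation in \Cref{alg:DistributedCCA_topL} (together with \Cref{theorem:topL_canonical_vector}) is sharper: for each stage $\ell\le L$, $\bG_{\x,\ell}$ is $O(\eta)$-close to $\sum_{m\in I_\ell}c_m^{(\ell)}\wh\bphi_m$ and $\bG_{\y,\ell}$ to $\sum_{m\in I_\ell}c_m^{(\ell)}\wh\bpsi_m$ with the \emph{same} coefficients, where $I_\ell\defn\{m:\wh\rho_m>(1-\delta)\sigma_{\max}(\wh\bT^{(\ell-1)})\}$ is the cluster exposed by the deflated operator at stage $\ell$; the matching of coefficients follows because the dominant eigenvectors of the shift-and-invert matrix of the deflated problem are precisely the symmetric pairs $(\wh\bphi_m;\wh\bpsi_m)/2^{1/2}$. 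Because deflating out $\bG_{\x,\ell}$ removes exactly one direction of $I_\ell$, an induction on $\ell$ gives $\sigma_{\max}(\wh\bT^{(\ell-1)})\in[\wh\rho_\ell,\wh\rho_\ell+O(\eta\wh\rho_1)]$, hence $I_\ell\subseteq\{\ell,\dots,L_\delta\}$, and the spans $\Col(\bG_{\x,\ell})$ collapse onto $\sp\{\wh\bphi_1,\dots,\wh\bphi_L\}$ up to a cumulative error $O(L\eta)$; i.e.\ $\Norm{(\I-\bP_L)\bG_\x}=O(L\eta)$ and $\Norm{(\I-\bR_L)\bG_\y}=O(L\eta)$ with $\bP_L\defn\sum_{m\le L}\wh\bphi_m\wh\bphi_m\trans$, $\bR_L\defn\sum_{m\le L}\wh\bpsi_m\wh\bpsi_m\trans$.

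Given this, the deflated-norm bounds come quickly. The lower bound $\sigma_{\max}(\wh\bT^{(L)})\ge\wh\rho_{L+1}$ is the Courant--Fischer / Eckart--Young statement that rank-$L$ deflation on each side cannot push the leading residual singular value below $\sigma_{L+1}(\wh\bT)$. For the upper bound, write $P_\x=\bP_L+(P_\x-\bP_L)$, $P_\y=\bR_L+(P_\y-\bR_L)$; then $\wh\bT^{(L)}=(\I-\bP_L)\wh\bT(\I-\bR_L)+R$ with $\Norm R=O(L\eta\wh\rho_1)$ (each correction piece acts on $\Col(\bG_\x)$ or $\Col(\bG_\y)$ and carries an $O(L\eta)$ factor, times $\Norm{\wh\bT}\le\wh\rho_1$), and $(\I-\bP_L)\wh\bT(\I-\bR_L)=\sum_{m>L}\wh\rho_m\wh\bphi_m\wh\bpsi_m\trans$ has norm $\wh\rho_{L+1}$; thus $\sigma_{\max}(\wh\bT^{(L)})\le\wh\rho_{L+1}+O(L\eta\wh\rho_1)$, which is $\le\wh\rho_{L+1}/(1-\delta)$ once $\delta$ lies below a fixed constant and $C$ is small, since $L\eta\wh\rho_1\asymp L\delta\wh\rho_{L+1}$. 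Reweighting restores the stated bound on $\Norm{\bhSig_{\x,\y}^{(L)}}$. For the per-component bounds, the matched structure gives $(\bG_{\x,\ell})\trans\wh\bT(\bG_{\y,\ell})=\sum_{m\in I_\ell}(c_m^{(\ell)})^2\wh\rho_m+O(\eta\wh\rho_1)$; every $m\in I_\ell$ has $m\ge\ell$, so $\wh\rho_m\le\wh\rho_\ell$, while $\wh\rho_m>(1-\delta)\sigma_{\max}(\wh\bT^{(\ell-1)})\ge(1-\delta)\wh\rho_\ell$, and $\sum_{m\in I_\ell}(c_m^{(\ell)})^2=1+O(\eta)$, so the main term lies in $[(1-\delta)\wh\rho_\ell,\wh\rho_\ell]$; absorbing the $O(\eta\wh\rho_1)=O(\delta\wh\rho_{L+1})\le O(\delta\wh\rho_\ell)$ error into the $(1-\delta)^{\pm1}$ slack gives $(1-\delta)\wh\rho_\ell\le(\wh\bu_\ell^\dist)\trans\bhSigxy\wh\bv_\ell^\dist\le\wh\rho_\ell/(1-\delta)$. (A softer route to the lower bound is to apply \Cref{coro:variability_top} to the stage-$\ell$ deflated problem and note that Step~5 of \Cref{alg:DistributedCCA_topL} perturbs the quantity only at lower order.)

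The main obstacle is the structural fact itself, and in particular the error bookkeeping: one must show, uniformly over $\ell\le L$, that the $O(\eta)$ inaccuracy of $\wh\U_{1:(\ell-1)}^\dist$ keeps $\sigma_{\max}(\wh\bT^{(\ell-1)})$ within $O(\eta\wh\rho_1)$ of $\wh\rho_\ell$ and keeps the support $I_\ell$ inside $\{\ell,\dots,L_\delta\}$, despite errors accumulating across the $L$ deflation rounds. This is exactly why the hypothesis carries the factor $\wh\rho_{L+1}/\wh\rho_1$ rather than a bare constant, and it is the step I would treat most carefully. The passage back from $\sigma_{\max}(\wh\bT^{(L)})$ and the whitened Rayleigh quotients to $\Norm{\bhSig_{\x,\y}^{(L)}}$ and $(\wh\bu_\ell^\dist)\trans\bhSigxy\wh\bv_\ell^\dist$ — only the bounded $\bhSigx^{1/2},\bhSigy^{1/2}$-distortion — is routine by comparison.
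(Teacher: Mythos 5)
For calibration: the paper does not write out a proof of this corollary at all --- its ``proof'' is a single sentence deferring to Theorem F.1 of \cite{allen-zhu2017DoublyAcceleratedMethods}, whose argument is the same deflation-plus-gap-free-Wedin scheme you sketch. So your overall strategy (whiten, reduce to $\wh\bT^{(L)}=(\I-P_\x)\wh\bT(\I-P_\y)$, Courant--Fischer for the lower bound, a projection-perturbation decomposition for the upper bound, and Rayleigh quotients for the per-component claim) is the right one and is essentially the intended one.

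There is, however, a genuine gap, and you have correctly located it yourself: the ``structural fact'' --- that at every stage $\ell$ the pair $(\bG_{\x,\ell},\bG_{\y,\ell})$ is $O(\eta)$-close to a \emph{coefficient-matched} combination $\sum_{m\in I_\ell}c_m^{(\ell)}(\wh\bphi_m,\wh\bpsi_m)$ with $I_\ell\subseteq\{\ell,\dots,L_\delta\}$ and $\sigma_{\max}(\wh\bT^{(\ell-1)})=\wh\rho_\ell+O(\eta\wh\rho_1)$ --- is asserted, not proven, and it does \emph{not} follow from the corollary's stated hypothesis. The hypothesis only bounds the aggregate quantity $\Norm{\Inner{\wh\U_{(L_\delta+1):d_\x}^\pool}{\wh\U_{1:L}^\dist}_{\bhSig_{\x}}}$, which is invariant under rotating the columns of $\wh\U_{1:L}^\dist$ within their span; the per-component conclusion $(1-\delta)\wh\rho_\ell\leq (\wh\bu_\ell^\dist)\trans\bhSigxy\wh\bv_\ell^\dist$ is not. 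So you must re-enter the stage-wise induction of \Cref{theorem:topL_canonical_vector} (the $\W_\ell,\bQ_\ell,\widetilde\bR_\ell$ bookkeeping built on Lemma C.4 of \cite{allen-zhu2017DoublyAcceleratedMethods}) to extract the matched-coefficient alignment and the control of $\sigma_{\max}(\wh\bT^{(\ell-1)})$ uniformly in $\ell$; in particular the singular vectors of the deflated operator $\wh\bT^{(\ell-1)}$ are only $O(\ell\eta)$-close to $\{\wh\bphi_m\}_{m\geq\ell}$, and this drift must be propagated through the cluster definition $I_\ell$. A second, smaller issue: after un-whitening, $\Norm{\bQ\trans_{\wh\U_{1:L}^\dist}(\bhSig_{\x})\bhSigxy\bQ_{\wh\V_{1:L}^\dist}(\bhSig_{\y})}=\Norm{\bhSigx^{1/2}\wh\bT^{(L)}\bhSigy^{1/2}}$ is only sandwiched between $\gamma\sigma_{\max}(\wh\bT^{(L)})$ and $\sigma_{\max}(\wh\bT^{(L)})$, so ``reweighting restores the stated bound'' does not deliver the exact constants $\wh\rho_{L+1}$ and $\wh\rho_{L+1}/(1-\delta)$ as written; the lower bound in particular needs its own argument (or the statement should be read in the whitened metric). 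Neither issue is fatal to the approach, but both must be closed before the proposal is a proof.
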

 Next we quantify the distance between our proposed distributed estimate $\big(\wh\U_{1:L}^\dist, \wh\V_{1:L}^\dist\big)$  and the orthogonal complements of the first $L_{2\delta}^\ast$ underlying true canonical directions $\big(\U^\ast_{(L_{2\delta}^\ast + 1):d_\x}, \V^\ast_{(L_{2\delta}^\ast + 1):d_\y}\big)$, where $L_{2\delta}^\ast\defn\argmax\{1\leq \ell\leq \min(d_\x, d_\y) \colon \rho_\ell > (1- 2\delta)\rho_L\}$. 

\begin{coro}
	\label{corollary:topL_canonical_vector:population}
	Assume that the distributed estimate obtained from \Cref{alg:DistributedCCA_topL} satisfies 
\[
\max\left(\Norm{\Inner{ \wh\U_{(L_\delta+1):d_\x}^\pool}{\wh\U_{1:L}^\dist}_{\bhSig_{\x}} }, \Norm{\Inner{\wh\V_{(L_\delta+1):d_\y}^\pool}{\wh\V_{1:L}^\dist}_{\bhSig_{\y}}}\right)\leq \varepsilon,
\textrm{ for some } \varepsilon>0.
\]
It follows that
\beqrs
\max\left(\Norm{\Inner{\U_{(L_{2\delta}^\ast + 1):d_\x}^\ast}{\wh\U_{1:L}^\dist}_{\bhSig_{\x}}} , \Norm{\Inner{\V_{(L_{2\delta}^\ast + 1):d_\y}^\ast}{\wh\V_{1:L}^\dist}_{\bhSig_{\y}} }\right)\leq \frac{\norm{\wh\bT - \bT}}{(1 - \delta) (\wh\rho_L - \rho_L) + \delta\rho_L}  + \varepsilon.
\eeqrs
\end{coro}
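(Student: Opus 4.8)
The plan is to pass to whitened coordinates and split the quantity to be bounded into a ``distributed versus pooled'' piece, which is exactly the hypothesis, and a ``sample versus population'' piece, which is a Wedin-type $\sin\Theta$ bound. Write $\wh\bPhi=(\wh\bphi_1,\dots,\wh\bphi_{d_\x})$ and $\bPhi^\ast=(\bphi_1^\ast,\dots,\bphi_{d_\x}^\ast)$ for the full orthogonal left singular matrices of $\wh\bT$ and $\bT$, so that $\wh\U_{1:L}^\pool=\bhSigx^{-1/2}\wh\bPhi_{1:L}$ and $\U_{1:L}^\ast=\bSigx^{-1/2}\bPhi_{1:L}^\ast$, and put $\wh\bPhi_{1:L}^\dist\defn\bhSigx^{1/2}\wh\U_{1:L}^\dist$, which has orthonormal columns by the normalization in \Cref{alg:DistributedCCA_topL}. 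Partition the columns of $\wh\bPhi$ as $[\,\wh\bPhi_1\ \wh\bPhi_2\,]$ with $\wh\bPhi_1\defn\wh\bPhi_{1:L_\delta}$, and those of $\bPhi^\ast$ as $[\,\bPhi_1^\ast\ \bPhi_2^\ast\,]$ with $\bPhi_1^\ast\defn\bPhi_{1:L_{2\delta}^\ast}^\ast$. In these coordinates the hypothesis reads $\norm{\wh\bPhi_2\trans\wh\bPhi_{1:L}^\dist}\le\varepsilon$ (and likewise for the right singular vectors), while the $\x$-block of the conclusion is $\norm{(\bPhi_2^\ast)\trans\,\bSigx^{-1/2}\bhSigx^{1/2}\,\wh\bPhi_{1:L}^\dist}$; the $\y$-block is identical with $\bSigy^{-1/2}\bhSigy^{1/2}$ and the right singular vectors, so I carry only the $\x$-block.

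Using $\wh\bPhi_1\wh\bPhi_1\trans+\wh\bPhi_2\wh\bPhi_2\trans=\I$ together with the fact that $\wh\bPhi_1,\wh\bPhi_2,\bPhi_2^\ast,\wh\bPhi_{1:L}^\dist$ all have orthonormal columns,
\[
\norm{(\bPhi_2^\ast)\trans\wh\bPhi_{1:L}^\dist}\le\norm{(\bPhi_2^\ast)\trans\wh\bPhi_1}\,\norm{\wh\bPhi_1\trans\wh\bPhi_{1:L}^\dist}+\norm{(\bPhi_2^\ast)\trans\wh\bPhi_2}\,\norm{\wh\bPhi_2\trans\wh\bPhi_{1:L}^\dist}\le\norm{(\bPhi_2^\ast)\trans\wh\bPhi_1}+\varepsilon,
\]
so it remains to bound $\norm{(\bPhi_2^\ast)\trans\wh\bPhi_1}$, the leakage of the sample left-singular subspace $\sp(\wh\bPhi_{1:L_\delta})$ into the span of the population left-singular vectors whose singular values are at most $(1-2\delta)\rho_L$. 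I would control this by a Wedin $\sin\Theta$ bound for $\wh\bT=\bT+(\wh\bT-\bT)$, taking the ``signal'' block of $\wh\bT$ to be $\wh\bPhi_1\wh\D_{1:L_\delta}\wh\bPsi_{1:L_\delta}\trans$ and the ``residual'' block of $\bT$ to be the one spanned by $\bPhi_2^\ast$. By the definitions of $L_\delta$ and $L_{2\delta}^\ast$, every singular value attached to $\wh\bPhi_1$ exceeds $(1-\delta)\wh\rho_L$ while every one attached to $\bPhi_2^\ast$ is at most $(1-2\delta)\rho_L$, so the required separation is at least
\[
\alpha\defn(1-\delta)\wh\rho_L-(1-2\delta)\rho_L=(1-\delta)(\wh\rho_L-\rho_L)+\delta\rho_L,
\]
which is exactly the denominator in the statement; Wedin's bound then gives $\norm{(\bPhi_2^\ast)\trans\wh\bPhi_1}\le\max\{\norm{(\wh\bT-\bT)\wh\bPsi_{1:L_\delta}},\norm{(\wh\bT-\bT)\trans\wh\bPhi_1}\}/\alpha\le\norm{\wh\bT-\bT}/\alpha$, and simultaneously the same estimate with $\wh\bPsi$ for the $\y$-block. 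The factor $2$ is precisely what is needed here: taking the population cutoff at $(1-2\delta)\rho_L$ rather than $(1-\delta)\rho_L$ leaves a buffer of order $\delta\rho_L$, so that $\alpha>0$ (and the looser population span indeed contains $\sp(\wh\bPhi_{1:L_\delta})$ up to the perturbation) even when $\wh\rho_L<\rho_L$, provided the mild condition $\abs{\wh\rho_L-\rho_L}\le\norm{\wh\bT-\bT}\ll\delta\rho_L$ holds, which it does under the scaling of \Cref{corollary:topL_canonical_vector}.

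Combining the two displays yields $\norm{(\bPhi_2^\ast)\trans\wh\bPhi_{1:L}^\dist}\le\norm{\wh\bT-\bT}/\alpha+\varepsilon$ and its $\y$-analogue. To recover the inner products in the $\bhSigx$- and $\bhSigy$-metrics as written, I would finally split $\bSigx^{-1/2}\bhSigx^{1/2}=\I+\bDelta$ with $\bDelta\defn\bSigx^{-1/2}(\bhSigx^{1/2}-\bSigx^{1/2})$ and use $\norm{(\bPhi_2^\ast)\trans(\I+\bDelta)\wh\bPhi_{1:L}^\dist}\le\norm{(\bPhi_2^\ast)\trans\wh\bPhi_{1:L}^\dist}+\norm{\bDelta}$; since $\norm{\bDelta}\lesssim\gamma^{-1}\norm{\bhSigx-\bSigx}$ is of the same or smaller order than $\norm{\wh\bT-\bT}/\alpha$ (recall $\alpha\le\delta\rho_L+O_p(\norm{\wh\bT-\bT})$ with $\delta\rho_L<1$), it is absorbed into the leading term, and taking the maximum over the $\x$- and $\y$-blocks gives the claim. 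The same argument reduces verbatim to \Cref{corollary:top_canonical_vector:population} when $L=1$, with subspaces replaced by single vectors.

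The main obstacle is the perturbation step: aligning the sample truncation $L_\delta$ (cutoff $(1-\delta)\wh\rho_L$) with the population truncation $L_{2\delta}^\ast$ (cutoff $(1-2\delta)\rho_L$) so that the Wedin separation is \emph{positive} and equals exactly $(1-\delta)(\wh\rho_L-\rho_L)+\delta\rho_L$ — it is this nonvanishing-gap requirement that forces the relaxation from $\delta$ to $2\delta$ and an implicit sample-size condition. The orthogonal decomposition and the metric reconciliation are then essentially routine, and the passage from the sample benchmark $\wh\bT$ to the population $\bT$ is the only place where the perturbation $\norm{\wh\bT-\bT}$ enters.
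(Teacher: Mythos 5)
Your proof follows essentially the same route as the paper's: insert the resolution of identity coming from the $\bhSigx$-orthonormality of the pooled singular basis, bound the component along $\wh\U^\pool_{(L_\delta+1):d_\x}$ by the hypothesis $\varepsilon$, and bound the remaining sample-versus-population leakage by a gap-free Wedin $\sin\Theta$ bound with separation $(1-\delta)\wh\rho_L-(1-2\delta)\rho_L=(1-\delta)(\wh\rho_L-\rho_L)+\delta\rho_L$, exactly as the paper does via Lemma C.3 of \cite{allen-zhu2017DoublyAcceleratedMethods}. The only cosmetic difference is that the paper stays in the $\bhSigx$-inner product throughout rather than whitening, so the metric-reconciliation term $\bDelta=\bSigx^{-1/2}(\bhSigx^{1/2}-\bSigx^{1/2})$ you absorb at the end never appears explicitly (the paper instead silently treats $\Norm{\wh\bSig_\x^{1/2}\U^\ast_{(L_{2\delta}^\ast+1):d_\x}}$ as at most one, an equivalent approximation).
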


Though we assume the observations are evenly scattered at different machines, we remark here that this assumption is merely assumed to ease the illustration of our distributed algorithms, and all theoretical properties derived in \Cref{section:theory} merely require the number of observations in the first or central machine is sufficiently large. 

\section{Simulation Studies\label{section:numerical_study}}
We use synthetic examples to demonstrate the performance of our distributed algorithms. We generate the observations in the same way as   \cite{cai2018RateoptimalPerturbationBounds}.  To be precise, we generate $\{(\x_i , \y_i ), i = 1,\ldots, N\}$ independently from multivariate normal distribution with mean zero and covariance matrix $\bSig$. The  diagonal blocks of $\bSig$ are $\bSigx = \I_{d_\x\times d_\x} + (\Z_{d_\x\times d_\x} + \Z_{d_\x\times d_\x}\trans)/\norm{2(\Z_{d_\x\times d_\x} + \Z_{d_\x\times d_\x}\trans)}$ and $\bSigy = \I_{d_\y\times d_\y} + (\Z_{d_\y\times d_\y} + \Z_{d_\y\times d_\y}\trans)/\norm{2(\Z_{d_\y\times d_\y} + \Z_{d_\y\times d_\y}\trans)}$, where $\Z_{d_\x\times d_\x}\in\mR^{d_\x\times d_\x}$ and $\Z_{d_\y\times d_\y}\in\mR^{d_\y\times d_\y}$ are independent and standard normal matrices. The off-diagonal blocks of $\bSig$ are $\bSigxy = \bSigx^{1/2}\bPhi\D\bPsi\trans\bSigy^{1/2}$ and $\bSigxy\trans$, where  $\bPhi\in\mR^{d_\x\times r}$ and $\bPsi\in\mR^{d_\y\times r}$ are column-wise orthonormal matrices, and $\D\in\mR^{r\times r}$ is a diagonal matrix of the form  
 $\D = 0.1\times \I_{r\times r} + \diag(3\delta, 2\delta, \delta, 0, \ldots , 0)$,
 $\delta$ is used to control for the gap of canonical correlations.  The column-wise orthonormal matrices  $\bPhi$ and $\bPsi$ are obtained as follows. 
Let $\wt\bPhi  = (\wt{\phi}_{\ell k})\in\mR^{d_\x\times r}$.  We first sample  $\wt{\phi}_{\ell k}$ independently from standard normal, for $\ell = 1,\ldots, d_\x$ and $k = 1, \ldots, r$, then use the Gram-Schmidt process to orthonormalize the matrix $\wt\bPhi$ to obtain $\bPhi$. A similar procedure is applied to generate $\bPsi$.

We compare our proposed distributed estimate with the following competitors.
\begin{enumerate}[{(1)}]
	\item The pooled estimate. It is the classic implementation of the canonical correlation analysis, which pools all observations together to produce a single estimate.
	
	\item The naive divide-and-conquer estimate.  We merely use the observations in the $k$-th machine to calculate $\wh\bT_k$, and apply the singular value decomposition to obtain $\wh\bPhi_{1:L, k}$ and $\wh\bPsi_{1:L, k}$, which are transmitted to the central machine to form $\wt\bT_{\x} = K^{-1}\sum_{k = 1}^K\wh\bPhi_{1:L, k}\wh\bPhi_{1:L, k}\trans$ and $\wt\bT_{\y} = K^{-1}\sum_{k = 1}^K\wh\bPsi_{1:L, k}\wh\bPsi_{1:L, k}\trans$.  
The final estimates,  denoted by  $(\wh\bPhi_{1:L}^{\mathrm{DC}}, \wh\bPsi_{1:L}^{\mathrm{DC}})$,  are formed by the top-L-dim eigenspaces of $\wt\bT_{\x}$ and $\wt\bT_{\y}$.
	\item The whitened divide-and-conquer estimate. It has  the form  of $(\bhSig_{\x, 1}^{-1/2}\wh\bPhi_{1:L}^{\mathrm{DC}}, \bhSig_{\y, 1}^{-1/2}\wh\bPsi_{1:L}^{\mathrm{DC}})$. 
	\end{enumerate}

We use the following errors, which are motivated from the error bounds \eqref{coro:convergence_rate_top:eq1} and \eqref{coro:convergence_rate_topL:eq1} with the pooled estimate replaced by the population canonical directions. Define $\rho_\ell = 0$ for $\ell = \min(d_\x, d_\y) + 1, \ldots, \max(d_\x, d_\y)$. 
Specifically, for the case of top pair of canonical directions,  we evaluate the accuracy of   $(\wh\bu, \wh\bv)$ through
\beqr\label{error:top1}
\mathrm{error}(\wh\bu,\wh\bv) = \max\left(\sum_{\ell\colon \rho_\ell\leq (1- \delta)\rho_1}\abs{\inner{\bu_\ell^\ast}{\wh\bu}_{\bhSig_{\x}}}^2, \sum_{\ell\colon \rho_\ell\leq (1- \delta)\rho_1}\abs{\inner{\bv_\ell^\ast}{\wh\bv}_{\bhSig_{\y}}}^2\right).
\eeqr
 For the case of top-L-dim  pair of canonical directions,  we evaluate the accuracy of  $\wh\U_{1:L}$ and $\wh\V_{1:L}$ through
\beqr\label{error:topL}
\mathrm{error}(\wh\U_{1:L}^\dist, \wh\V_{1:L}^\dist)  = \max\left(\Norm{\inner{\U^\ast_{(L_{\delta}^\ast + 1):d_\x}}{\wh\U_{1:L}}_{\bhSig_{\x}}}, \Norm{\inner{\V^\ast_{(L_{\delta}^\ast + 1):d_\y}}{\wh\V_{1:L}}_{\bhSig_{\y}}}\right).
\eeqr
All the above errors are based on the averages of $500$ independent replications.

\subsection{The number of outer iterations\label{section:iterations}}
We first evaluate how the number of outer iterations,  $T$, affects the performance of our proposed distributed estimate in \Cref{alg:DistributedCCA_top}. We fix  $d_\x = 15$, $d_\y = 20$, $r = \min(d_\x, d_\y)$, $n = 2000$ and  $K = 30$.  Because the errors in \eqref{error:top1} and \eqref{error:topL} are very small, we report  the logarithmic errors throughout.  In this particular example, we plot the logarithmic errors against the number of outer iterations in \Cref{fig:iterations_delta15,fig:iterations_delta25} for  $\delta = 0.15$ and $\delta = 0.25$, respectively. The logarithmic errors of the pooled estimate, the naive and whitened divide-and-conquer estimates are illustrated as three horizontal straight lines because they are not obtained from iterative algorithms. The logarithmic errors of our proposed distributed estimate decay approximately linearly with respect to the number of outer iterations, which echoes our theoretical investigation in \Cref{theorem:topL_canonical_vector}.  \Cref{fig:iterations_delta15,fig:iterations_delta25} also indicate that our proposed distributed algorithm converges to the pooled estimate after around 40 iterations and outperforms both the naive and whitened divide-and-conquer estimates significantly. Our limited experience shows that it suffices to set the number of inner iterations $T^\prime = 10$  to ensure a good performance for  \Cref{alg:DistributedCCA_topL}. Therefore,  we fix $T^\prime = 10$ in what follows unless stated otherwise. 

\graphicspath{{figs/}}
\begin{figure}[!htbp]
	\centerline{
		\begin{tabular}{cc}
		\psfig{figure=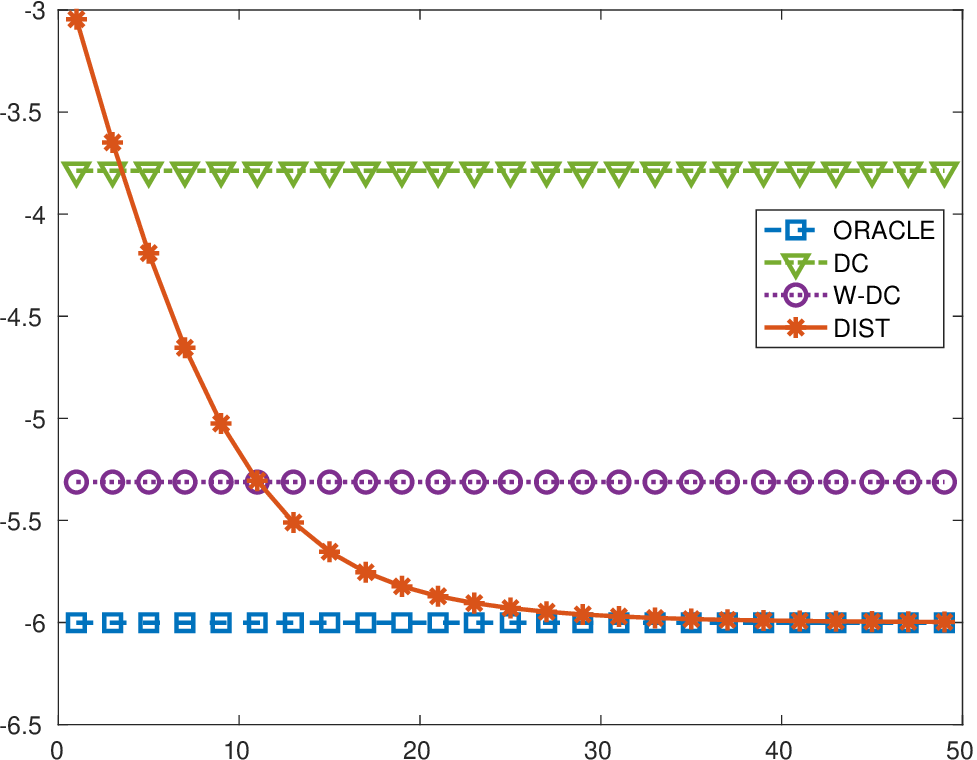,width=2.5in,angle=0} & \psfig{figure=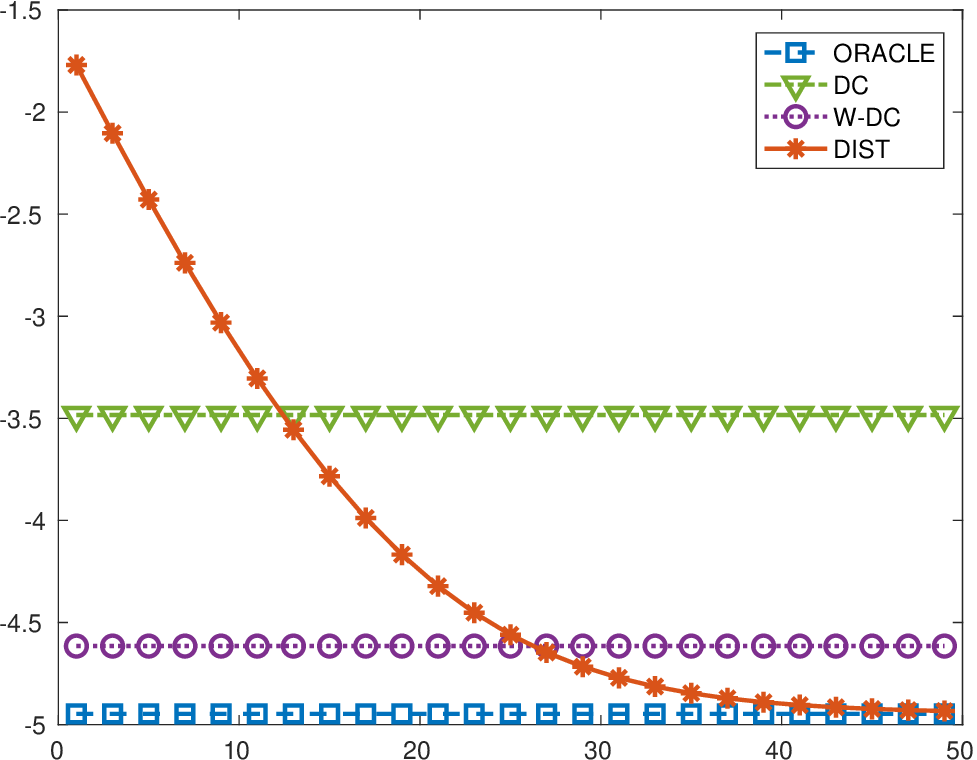,width=2.5in,angle=0} \\
			(A): Top-1-dim canonical direction   & (B): Top-2-dim canonical directions  \\ 
			\psfig{figure=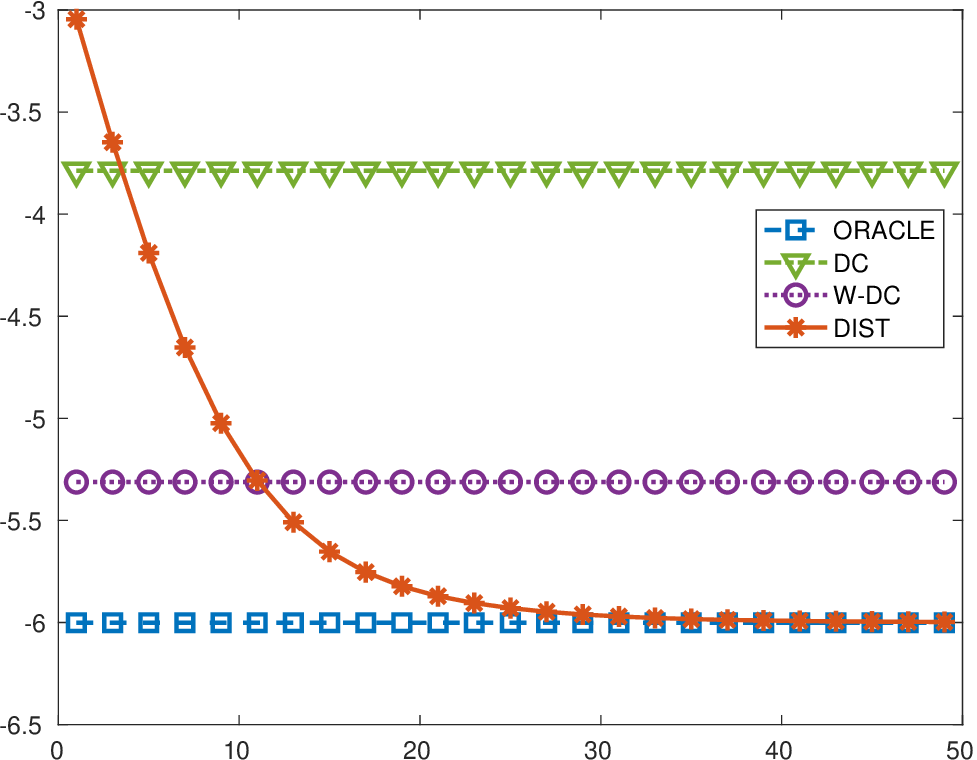,width=2.5in,angle=0} & \psfig{figure=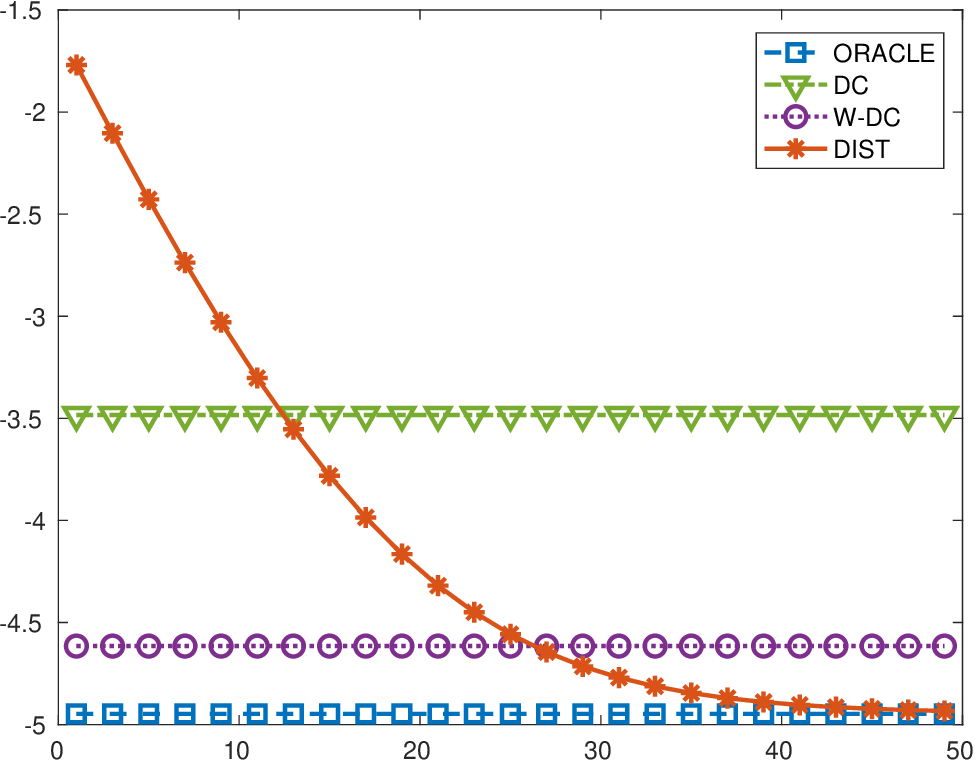,width=2.5in,angle=0} \\
			(C): Top-1-dim canonical direction   & (D): Top-2-dim canonical directions  \\ 
		\end{tabular}
	}
	  \captionsetup{font=footnotesize}
	\caption{The horizontal axis stands for the number of outer iterations, $T$, and the vertical axis stands for the logarithmic error of the naive divide-and-conquer (\chain), the whitened divide-and-conquer (\dottedcircle), the pooled (\dashed) and the distributed (\full) estimates. The number of inner iterations, $T^\prime$, equals $5$ in (A) and (B) and $10$ in (C) and (D), respectively.  We fix $\delta = 0.15$.}
	\label{fig:iterations_delta15}
\end{figure}

\graphicspath{{figs/}}
\begin{figure}[!htbp]
	\centerline{
		\begin{tabular}{cc}
		\psfig{figure=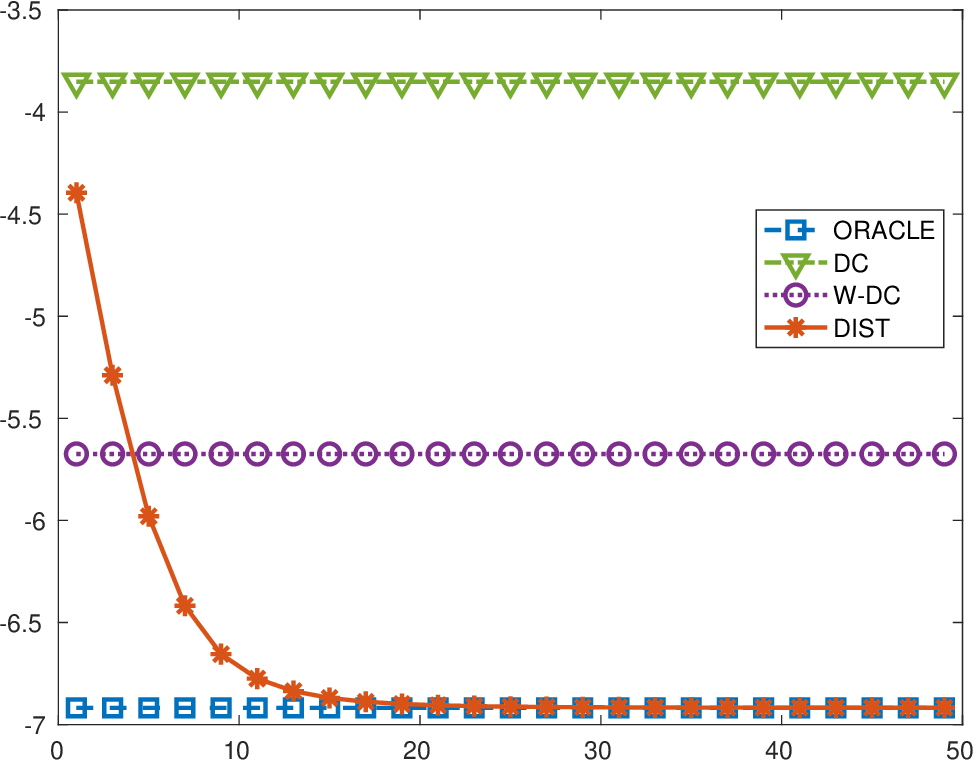,width=2.5in,angle=0} & \psfig{figure=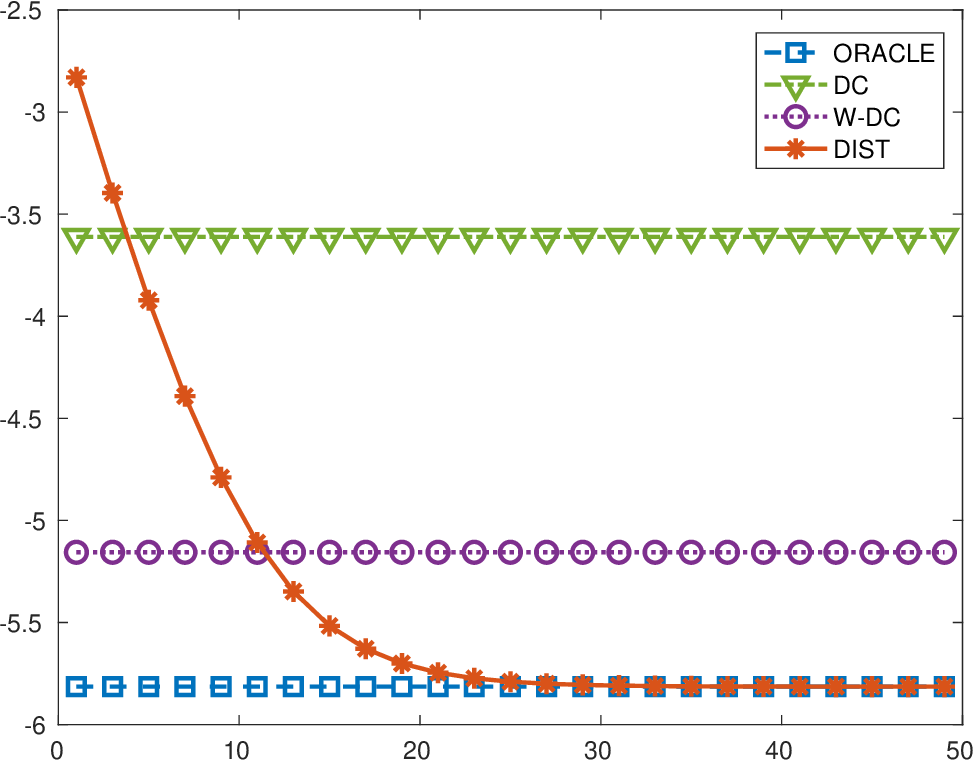,width=2.5in,angle=0} \\
			(A): Top-1-dim canonical direction   & (B): Top-2-dim canonical directions  \\ 
			\psfig{figure=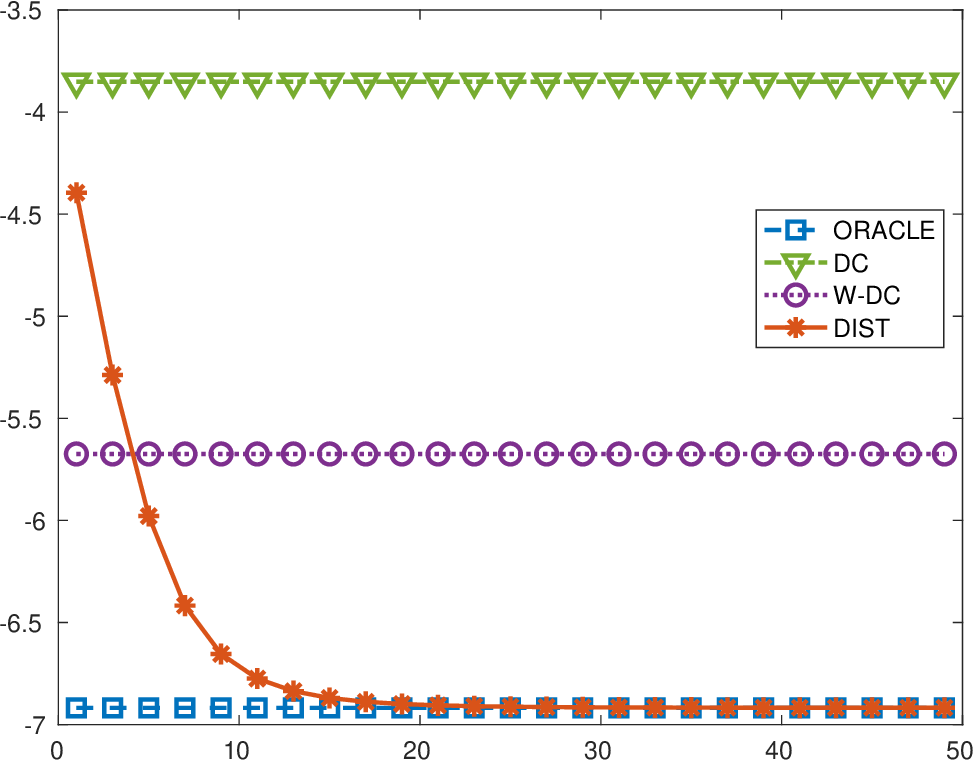,width=2.5in,angle=0} & \psfig{figure=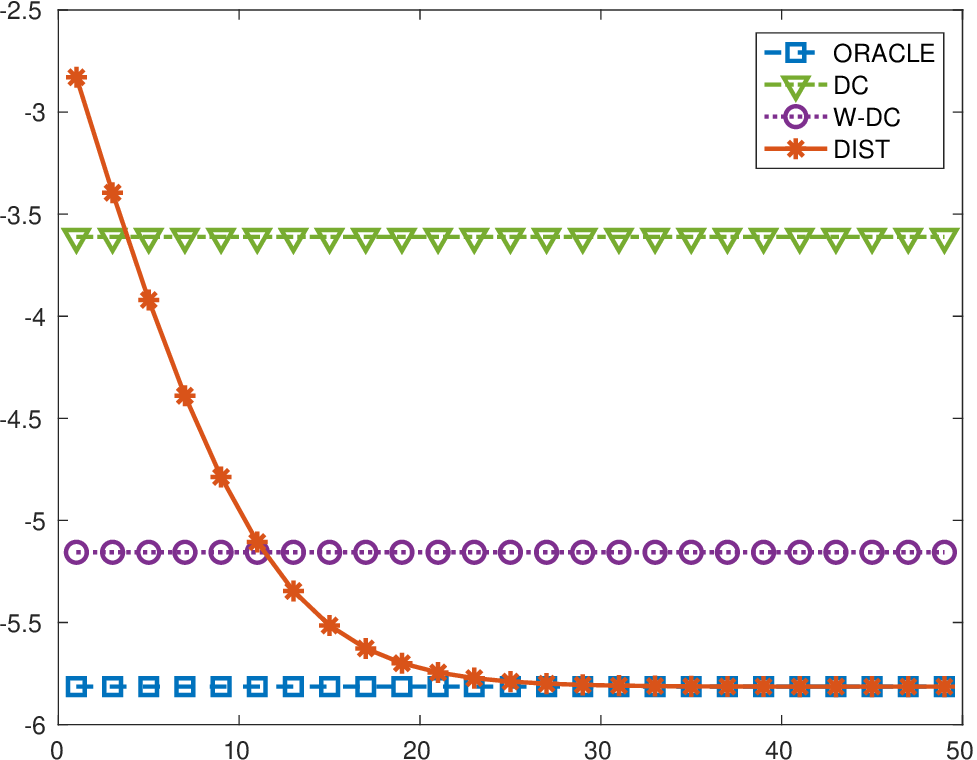,width=2.5in,angle=0} \\
			(C): Top-1-dim canonical direction   & (D): Top-2-dim canonical directions  \\ 
		\end{tabular}
	}
	  \captionsetup{font=footnotesize}
	\caption{The horizontal axis stands for the number of outer iterations, $T$, and the vertical axis stands for the logarithmic error of the naive divide-and-conquer (\chain), the whitened divide-and-conquer (\dottedcircle), the pooled (\dashed) and the distributed (\full) estimates. The number of inner iterations, $T^\prime$, equals $5$ in (A) and (B) and $10$ in (C) and (D), respectively.  We fix $\delta = 0.25$.}
	\label{fig:iterations_delta25}
\end{figure}

\subsection{The  gap of canonical correlations}
The gap of canonical correlations, $\delta$, plays a central role in the error bounds of our theoretical analysis. In general, estimating the canonical directions becomes more challenging if $\delta$ is smaller. In this example, we examine the relationship empirically between the estimation errors and the gaps of canonical correlations. 

We fix $d_\x = 15$, $d_\y = 20$, $r = \min(d_\x, d_\y)$,  $n = 2000$,  $K = 30$, and $T = 50$. We vary the  $\delta$ value and plot the results in \Cref{fig:eigen_gap}, from which it can be clearly seen that the logarithmic errors of our proposed distributed estimates increase with $1/\delta$, which aligns with the theoretical results in \Cref{theorem:topL_canonical_vector}. It can be clearly seen once again that our proposed distributed estimate performs as well as the pooled one and substantially outperforms both the naive and the whitened divide-and-conquer estimates.

\graphicspath{{figs/}}
\begin{figure}[!htbp]
	\centerline{
		\begin{tabular}{cc}
		\psfig{figure=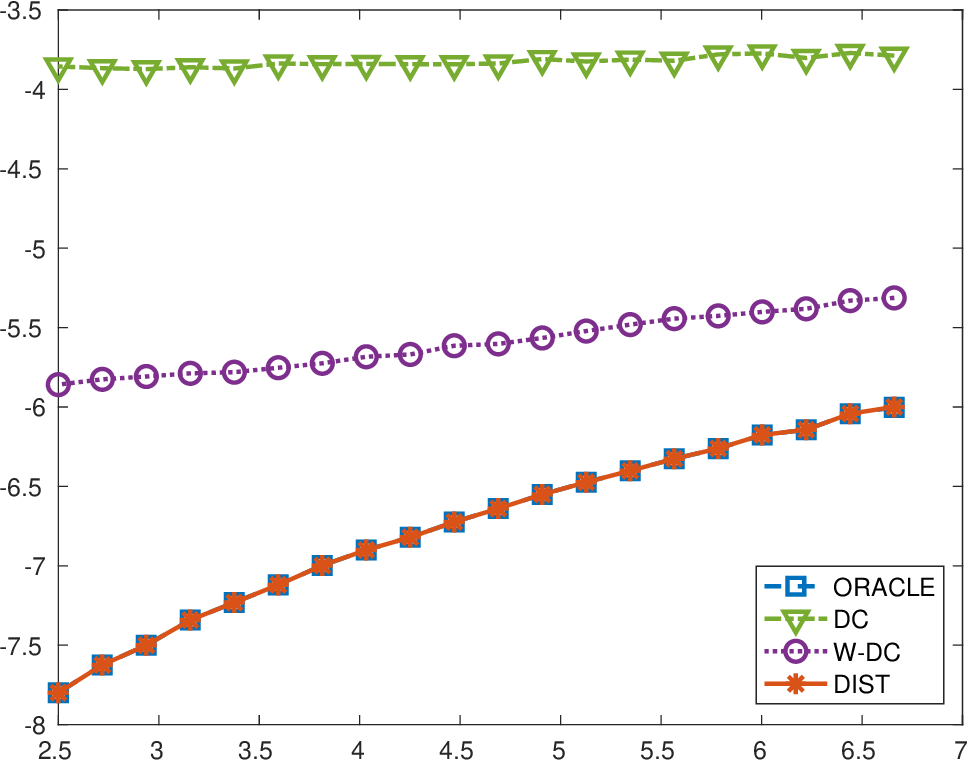,width=2.5in,angle=0} & \psfig{figure=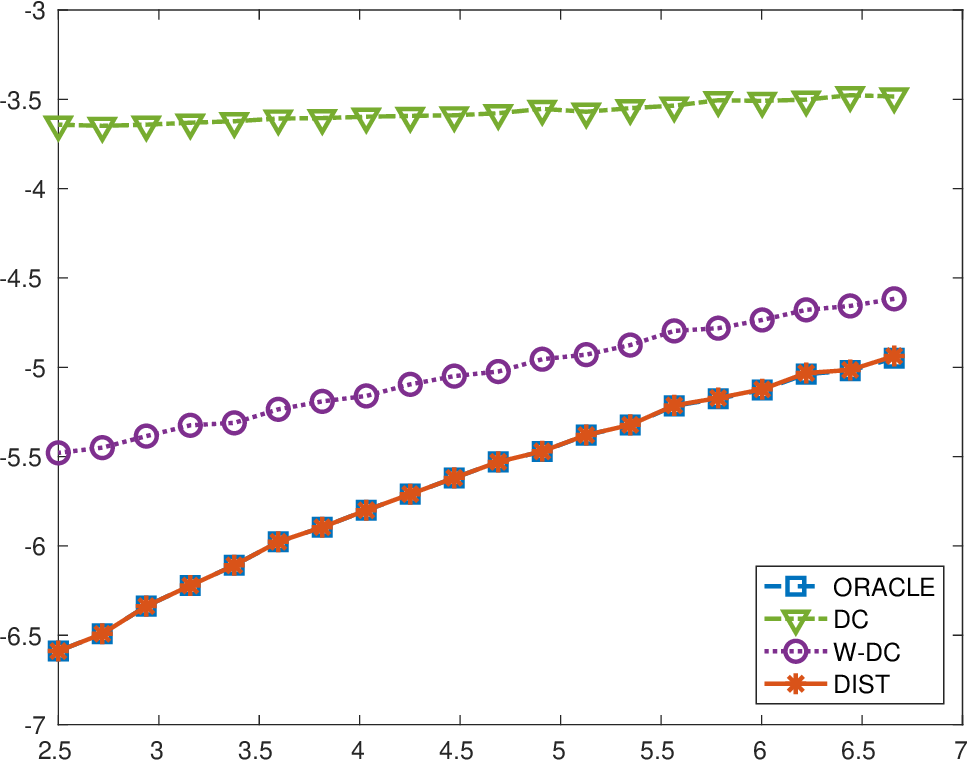,width=2.5in,angle=0} \\
			(A): Top-1-dim canonical direction   & (B): Top-2-dim canonical directions  \\ 
		\end{tabular}
	}
		  \captionsetup{font=footnotesize}
	\caption{The horizontal axis stands for  $1/\delta$, and the vertical axis stands for the logarithmic errors of the naive divide-and-conquer (\chain), the whitened divide-and-conquer (\dottedcircle), the pooled (\dashed) and the distributed (\full) estimates.}
	\label{fig:eigen_gap}
\end{figure}

\subsection{The number of local machines}
We evaluate how the number of local machines affects various distributed estimates. In this example, we  fix  $T = 50$, $T^\prime = 10$,  $d_\x = 15$, $d_\y = 20$, $r = \min(d_\x, d_\y)$ and $n = 2000$, and vary the number of local machines   $K\in\{8,  16,  32,  64, 128, 256, 512\}$. We plot the logarithmic errors against the logarithmic number of local machines in \Cref{fig:number_of_machines}, from which it can be clearly seen once again that our proposed distributed estimate achieves almost the same convergence rate as the pooled one.  In addition, the estimation errors of the whitened divide-and-conquer estimate decrease at a parallel rate as the number of machines increases. This is an instance of the bias-variance trade-off where a certain constraint is imposed to achieve the optimal convergence rate. The whitened divide-and-conquer method involves intrinsic biases for all local estimates. Similar phenomenon exists  in the principal component analysis \citep{fan2019DistributedEstimationPrincipal} and quantile regression \citep{zhao2015GeneralFrameworkRobust}, etc.

\graphicspath{{figs/}}
\begin{figure}[!htbp]
	\centerline{
		\begin{tabular}{cc}
		\psfig{figure=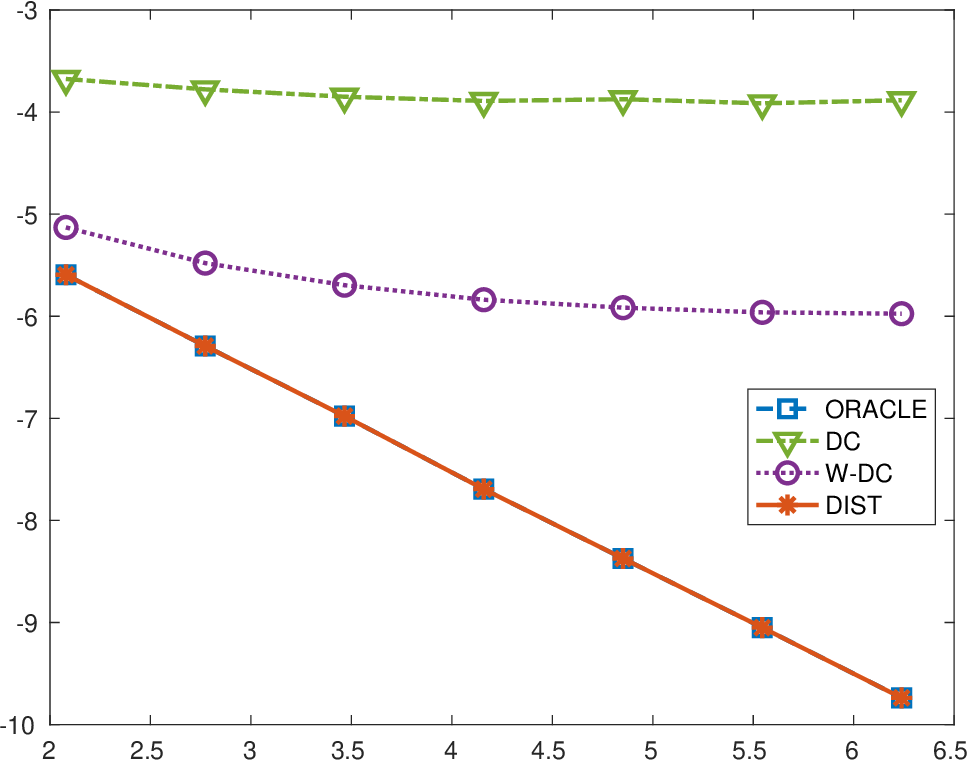,width=2.5in,angle=0} & \psfig{figure=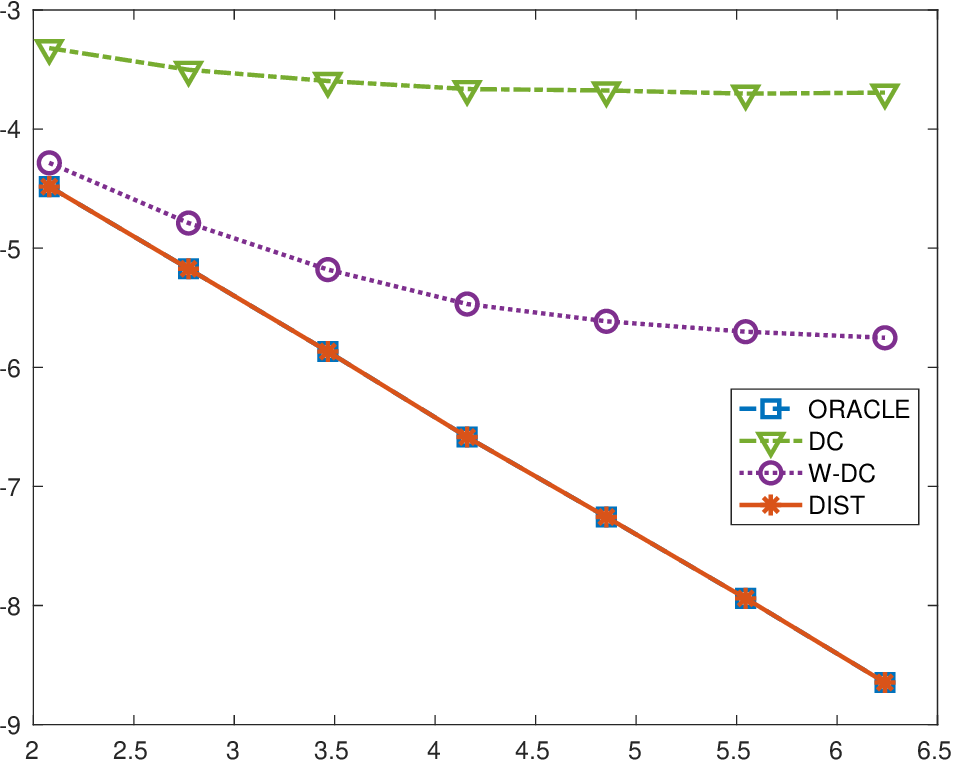,width=2.5in,angle=0} \\
			(A): Top-1-dim canonical direction   & (B): Top-2-dim canonical directions  \\ 
		\end{tabular}
	}
	  \captionsetup{font=footnotesize}
	\caption{The horizontal axis stands for the logarithmic number of local machines, and the vertical axis stands for the logarithmic errors of the naive divide-and-conquer (\chain), the whitened divide-and-conquer (\dottedcircle), the pooled (\dashed) and the distributed (\full) estimates.}
	\label{fig:number_of_machines}
\end{figure}

\section{Applications\label{section:applicatioin}}
We apply our distributed algorithms to three benchmark datasets,  MEDIAMILL \citep{snoek2006ChallengeProblemAutomated}, MNIST \citep{lecun1998GradientbasedLearningApplied}, and MFEAT \citep{Dua:2019}. These datasets  are often used to evaluate  the empirical performance of canonical correlation analysis \citep{ma2015FindingLinearStructure,wang2016EfficientGloballyConvergent}.   The first  is an  annotated video data set with $N = 30, 000$ observations, each of which is a representative key frame of a video shot annotated with $d_\x = 101$ labels and  $d_y = 120$ features.  The second is  well-known handwritten digit image data set, which   consists of   $N = 60, 000$  $28\times 28$ gray-scale images and is available at \url{http://yann.lecun.com/exdb/mnist}. We  split each image into two halves, the  left and the right of which  are  labeled respectively with $\x$ and $\y$,  and $ d_\x = d_\y =  (28\times 28)/2  = 392$. The last data set consists of $N = 2,000$ observations, each of which corresponds to six views of handwritten numerals extracted from a collection of Dutch utility maps. We merely consider two views, mfeat-fac, and mfeat-pix, with $d_\x = 216$ and $d_\y = 240$ features in our analysis.  All covariates are scaled to fall within the range of $[0, 1]$.  

We compare our proposed distributed estimate with the naive and the whitened divide-and-conquer estimates. For the purposes of visualization, this comparison is merely built upon the top three canonical directions.  We use the pooled estimate to serve as the oracle one because the underlying true canonical directions are no longer available in these applications.  To implement the distributed algorithms, we split each of the three datasets into $K\in\{2, 4, 8, 16, 25, 50, 80\}$ sub-samples evenly. We use the logarithmic estimation errors to evaluate the performance of different estimates. The results are summarized in \Cref{table:realdata}, from which it can be clearly seen that, our proposed distributed estimate deteriorates as the number of sub-samples, $K$, increases. We remark here that we fix the respective number of outer and inner iterations in our proposed iterative algorithm. Our theoretical analysis indicates that, more iterations are required to achieve the oracle rate if the number of sub-samples, $K$, is relatively larger,   or equivalently, the number of observations in each sub-sample, $n$, is smaller.   In addition, our distributed estimate outperforms all other estimates across all possible values of $K$. The naive divide-and-conquer estimate performs the worst, partly because it does not account for the orthogonality constraints.

\begin{table}[!htbp]   \footnotesize	                                                           
\centering         
  \captionsetup{font=footnotesize}
\caption{The logarithmic estimation errors of the naive divide-and-conquer estimate (N-DC), the whitening divide-and-conquer estimate (W-DC), and our proposed distributed estimate (DIST) for different numbers of sub-samples $K$.  }                                                   
\label{table:realdata}                                                             
\begin{tabular}{l|cccccccc}     
\hline                            
  & $K=2$ & $K=4$ & $K=8$ &  $K=16$ & $K=25$ & $K=50$ & $K=80$ \\   
\hline
& \multicolumn{7}{c}{MMILL}\\   
\cline{2-8}
N-DC & -4.2752 & -4.2712 & -4.2673 & -4.2634 & -4.2477 & -4.2323 & -4.1880 \\    
W-DC & -9.7303 & -8.4329 & -7.4726 & -6.5812 & -6.0755 & -5.6983 & -5.2458 \\    
DIST & -24.6840 & -17.6880 & -19.2427 & -12.6477 & -9.4220 & -8.0301 & -9.2124 \\ 
\cline{2-8}    
& \multicolumn{7}{c}{MNIST}\\   
\cline{2-8}        
N-DC & -1.8975 & -1.8966 & -1.8951 & -1.8926 & -1.8901 & -1.8831 & -1.8718 \\      
W-DC & -9.1328 & -8.1677 & -7.1788 & -6.3508 & -5.9515 & -5.0023 & -4.5328 \\      
DIST & -41.5599 & -35.1307 & -26.3739 & -21.4233 & -21.7905 & -12.4132 & -9.0029 \\
\cline{2-8}  
& \multicolumn{7}{c}{MFEAT}\\  
\cline{2-8}
N-DC & -2.1189 & -2.1130 & -2.1535 & -2.0217 & -1.9283 & -1.8018 & -1.7706 \\    
W-DC & -7.7868 & -6.8604 & -6.1246 & -5.0453 & -4.7029 & -4.1052 & -3.2374 \\    
DIST & -24.2076 & -17.1580 & -14.1297 & -10.8736 & -9.4998 & -7.5504 & -6.1650 \\  
\hline
\end{tabular}                                                    
\end{table}

\section{Concluding Remarks\label{section:conclusion}}
We propose distributed algorithms for canonical correlation analysis, which adapts the convex formulation to a distributed setting, and allows us to use stochastic algorithms to solve  linear systems. This algorithm uses a multi-round scheme to achieve a fast convergence rate and to relieve the constraint on the number of machines. We provide theoretical guarantees for the convergence rate of the resultant estimates.   These distributed algorithms can be generalized from  different perspectives. For example, one can introduce sparsity to consider  ultrahigh dimensional canonical correlation  analysis \citep{gao2017SparseCCAAdaptive}, or adapt   our theoretical analyses to many other relevant problems, such as partial least squares regression \citep{chen2019DroppingConvexityMore} and linear discriminant analysis \citep{bach2005ProbabilisticInterpretationCanonical}.  Investigations along these lines are under way.

\acks{This research is supported by grants from Renmin University of China (22XNA026), Beijing Natural Science Foundation (Z190002) and National Natural Science Foundation of China (12225113 and 12171477). Liping Zhu is the corresponding author. }


\newpage

\renewcommand{\thetheo}{\Alph{section}.\arabic{theo}}
\renewcommand{\theprop}{\Alph{section}.\arabic{prop}}
\renewcommand{\thelemm}{\Alph{section}.\arabic{lemm}}
\renewcommand{\thefigure}{\Alph{section}.\arabic{figure}}
\renewcommand{\thetable}{\Alph{section}.\arabic{table}}
\appendix

\section{Some Technical Lemmas}
This section collects some technical lemmas used in the proof of the main results. To begin with, we first introduce several notations for ease of presentation. Denote $r = \min(d_\x, d_\y)$. The sorted eigenvalues of $\wh \bC$ are
\beqrs
\wh\lambda_1\defn \wh\rho_1\geq\ldots\geq \wh\lambda_r\defn \rho_r\geq 0 = \ldots = 0 \geq  \wh\lambda_{d - r + 1}\defn -\wh\rho_r\geq \ldots\geq \wh\lambda_d\defn -\wh\rho_1,
\eeqrs
with corresponding unit eigenvectors
\beqr\label{structure_of_C_eigenvectors}
\nonumber \wh\bmr_1 \defn \begin{pmatrix}
	\bhSigx^{1/2}\wh\bu_1^\pool\\
	\bhSigy^{1/2}\wh\bv_1^\pool\\
\end{pmatrix}\Bigg / 2^{1/2},\ \ldots,\  \wh\bmr_r \defn \begin{pmatrix}
	\bhSigx^{1/2}\wh\bu_r^\pool\\
	\bhSigy^{1/2}\wh\bv_r^\pool\\
\end{pmatrix}\Bigg / 2^{1/2},
 \ldots \\
  \wh\bmr_{d - r + 1} \defn \begin{pmatrix}
	- \bhSigx^{1/2}\wh\bu_r^\pool\\
	\bhSigy^{1/2}\wh\bv_r^\pool\\
\end{pmatrix}\Bigg / 2^{1/2},\ \ldots,\ \wh\bmr_{d} \defn \begin{pmatrix}
	- \bhSigx^{1/2}\wh\bu_1^\pool\\
	\bhSigy^{1/2}\wh\bv_1^\pool\\
\end{pmatrix}\Bigg / 2^{1/2}.
\eeqr

\begin{lemm}
	\label{lemma:initial_estimate}
	If our data $\{(\x_i, \y_i)$, $i = 1,\ldots, N\}$ are sub-Gaussian vectors,   then  we have
\beqr\label{lemma:initial_estimate:T_contraction}
  \norm{\wh\bT_1 - \wh\bT} = O_p\{(d\log^2 d/n)^{1/2}\}. 
\eeqr
The top singular value on the first machine $\sigma_1(\wh\bT_1)$ satisfies,
\beqr\label{lemma:initial_estimate:canonical_correlation}
  \abs{\sigma_1(\wh\bT_1) - \wh\rho_1}\leq \norm{\wh\bT_1 - \wh\bT}.
\eeqr
Furthermore, let $(\wh\bu^{(0)}, \wh\bv^{(0)})$ be the top pair of canonical directions estimate on the first machine, we have the following gap-free contraction bound for 
\beqrs
  \wh\bmr^{(0)} \defn \begin{pmatrix}
	\bhSigx^{1/2}\wh\bu^{(0)}\\
	\bhSigy^{1/2}\wh\bv^{(0)}\\
\end{pmatrix}\Bigg / 2^{1/2},
\eeqrs
and $\wh\bmr_1,\ldots, \wh\bmr_d$,
\beqr\label{lemma:initial_estimate:gap_free_bounds}
  \sum_{ \ell \colon \wh\lambda_\ell\leq (1- \delta)\wh\lambda_1}\abs{\inner{\wh\bmr_\ell}{\wh\bmr^{(0)}}}^2 \leq 2\norm{\wh\bT_1 - \wh\bT}/(\gamma\delta\wh\lambda_1),
\eeqr
with an overwhelming probability, which further implies that
\beqr\label{lemma:initial_estimate:gap_free_bounds_for_singular}
\nonumber  \sum_{ \ell\colon \wh\rho_\ell\leq (1- \delta)\wh\rho_1} \abs{\inner{\wh\bu_\ell^\pool}{\wh\bu^{(0)}}_{\bhSigx}}^2 + \sum_{ \ell\colon \wh\rho_\ell\leq (1- \delta)\wh\rho_1}\abs{\inner{\wh\bv_\ell^\pool}{\wh\bv^{(0)}}_{\bhSigy}}^2 \\
\leq  4\norm{\wh\bT_1 - \wh\bT}/(\gamma\delta\wh\rho_1).
\eeqr

\end{lemm}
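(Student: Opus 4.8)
\emph{Plan of proof.} The four displays are proved in sequence; everything rests on operator-norm concentration, so I would set that up first.

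\emph{Step 1: concentration and \eqref{lemma:initial_estimate:T_contraction}--\eqref{lemma:initial_estimate:canonical_correlation}.} I would first invoke standard sub-Gaussian concentration for sample covariance and cross-covariance matrices (a matrix Bernstein bound, with a light truncation to tame the sub-exponential tails of the rank-one terms $\x_i\y_i\trans$) to get, with overwhelming probability, $\max\{\norm{\bhSig_{\x,1}-\bSigx},\norm{\bhSig_{\y,1}-\bSigy},\norm{\bhSig_{\x,\y,1}-\bSigxy}\}=O_p\{(d\log^2 d/n)^{1/2}\}$, the analogous pooled bounds at the faster rate $(d\log^2 d/N)^{1/2}$, and $\bhSig_{\x,1}\succeq(\gamma/2)\I$, $\bhSig_{\y,1}\succeq(\gamma/2)\I$. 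Then, writing
\[
\wh\bT_1-\wh\bT=(\bhSig_{\x,1}^{-1/2}-\bhSigx^{-1/2})\bhSig_{\x,\y,1}\bhSig_{\y,1}^{-1/2}+\bhSigx^{-1/2}(\bhSig_{\x,\y,1}-\bhSig_{\x,\y})\bhSig_{\y,1}^{-1/2}+\bhSigx^{-1/2}\bhSig_{\x,\y}(\bhSig_{\y,1}^{-1/2}-\bhSigy^{-1/2}),
\]
using that $A\mapsto A^{-1/2}$ is Lipschitz on $\{A\succeq(\gamma/2)\I\}$ and that $\norm{\bhSigx^{-1/2}}$, $\norm{\bhSig_{\x,\y}}$ and $\norm{\bhSig_{\y,1}^{-1/2}}$ are $O_p(1)$, each summand is $O_p\{(d\log^2 d/n)^{1/2}\}$, which is \eqref{lemma:initial_estimate:T_contraction}. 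Display \eqref{lemma:initial_estimate:canonical_correlation} is then immediate from Weyl's inequality for singular values together with $\sigma_1(\wh\bT)=\wh\rho_1$.

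\emph{Step 2: the gap-free bound \eqref{lemma:initial_estimate:gap_free_bounds} via a Rayleigh quotient.} Put $p\defn\sum_{\ell\colon\wh\lambda_\ell\le(1-\delta)\wh\lambda_1}\abs{\inner{\wh\bmr_\ell}{\wh\bmr^{(0)}}}^2$. Expanding $\wh\bmr^{(0)}$ in the orthonormal eigenbasis $\{\wh\bmr_\ell\}$ of $\wh\bC$ and using that every eigenvalue is at most $\wh\lambda_1=\wh\rho_1$ while every eigenvalue entering $p$ is at most $(1-\delta)\wh\rho_1$ (including the negative ones, since $(1-\delta)\wh\rho_1\ge0$), I get $(\wh\bmr^{(0)})\trans\wh\bC\wh\bmr^{(0)}\le\wh\rho_1\norm{\wh\bmr^{(0)}}^2-\delta\wh\rho_1 p$, hence
\[
p\ \le\ \frac{\wh\rho_1\norm{\wh\bmr^{(0)}}^2-(\wh\bmr^{(0)})\trans\wh\bC\wh\bmr^{(0)}}{\delta\wh\rho_1}.
\]
It then remains to show the numerator is of order $\norm{\wh\bT_1-\wh\bT}/\gamma$. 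A direct computation gives the exact identity $(\wh\bmr^{(0)})\trans\wh\bC\wh\bmr^{(0)}=(\wh\bu^{(0)})\trans\bhSig_{\x,\y}\wh\bv^{(0)}$ (the pooled whitening in $\wh\bmr^{(0)}$ cancels that in $\wh\bT$); replacing the pooled cross-covariance by its first-machine version costs at most $\norm{\wh\bu^{(0)}}\norm{\wh\bv^{(0)}}\norm{\bhSig_{\x,\y}-\bhSig_{\x,\y,1}}=O_p(\norm{\bhSig_{\x,\y}-\bhSig_{\x,\y,1}}/\gamma)$, and $(\wh\bu^{(0)})\trans\bhSig_{\x,\y,1}\wh\bv^{(0)}=\sigma_1(\wh\bT_1)$ by the defining property of $\wh\bphi^{(0)},\wh\bpsi^{(0)}$; similarly $\norm{\wh\bmr^{(0)}}^2=\tfrac12(\inner{\wh\bu^{(0)}}{\wh\bu^{(0)}}_{\bhSigx}+\inner{\wh\bv^{(0)}}{\wh\bv^{(0)}}_{\bhSigy})$ differs from $1$ by at most $\norm{\wh\bu^{(0)}}^2\norm{\bhSigx-\bhSig_{\x,1}}+\norm{\wh\bv^{(0)}}^2\norm{\bhSigy-\bhSig_{\y,1}}=O_p(\cdot/\gamma)$. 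Combining these with $\sigma_1(\wh\bT_1)\ge\wh\rho_1-\norm{\wh\bT_1-\wh\bT}$ from Step 1 collapses the numerator to order $\norm{\wh\bT_1-\wh\bT}/\gamma$ (all the covariance errors share the rate $(d\log^2 d/n)^{1/2}$), giving \eqref{lemma:initial_estimate:gap_free_bounds}.

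\emph{Step 3: from eigenvectors to singular vectors.} Finally \eqref{lemma:initial_estimate:gap_free_bounds_for_singular} is pure bookkeeping with the $\pm$ structure in \eqref{structure_of_C_eigenvectors}: for $\ell\le r$ one has $\inner{\wh\bmr_\ell}{\wh\bmr^{(0)}}=(\alpha_\ell+\beta_\ell)/2$ and $\inner{\wh\bmr_{d-\ell+1}}{\wh\bmr^{(0)}}=(\beta_\ell-\alpha_\ell)/2$ with $\alpha_\ell\defn\inner{\wh\bu_\ell^\pool}{\wh\bu^{(0)}}_{\bhSigx}$, $\beta_\ell\defn\inner{\wh\bv_\ell^\pool}{\wh\bv^{(0)}}_{\bhSigy}$, so $\abs{\inner{\wh\bmr_\ell}{\wh\bmr^{(0)}}}^2+\abs{\inner{\wh\bmr_{d-\ell+1}}{\wh\bmr^{(0)}}}^2=(\alpha_\ell^2+\beta_\ell^2)/2$. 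Whenever $\wh\rho_\ell\le(1-\delta)\wh\rho_1$, both indices $\ell$ and $d-\ell+1$ (distinct since $d\ge2r$) are counted in the sum on the left of \eqref{lemma:initial_estimate:gap_free_bounds}, so summing the paired identity over such $\ell$ and applying \eqref{lemma:initial_estimate:gap_free_bounds} yields \eqref{lemma:initial_estimate:gap_free_bounds_for_singular}. The points I expect to be genuinely delicate are (i) the $(d\log^2 d/n)^{1/2}$ rate for $\wh\bT_1-\wh\bT$, where the matrix inversions and the heavier tails of $\x_i\y_i\trans$ must be handled with care, and (ii) surfacing the factor $1/\gamma$ in \eqref{lemma:initial_estimate:gap_free_bounds}, which comes entirely from the mismatch between the pooled whitening $\bhSigx^{1/2},\bhSigy^{1/2}$ used to build $\wh\bmr^{(0)}$ and the first-machine whitening used to build $(\wh\bu^{(0)},\wh\bv^{(0)})$; everything else is Weyl's inequality, Cauchy--Schwarz, and the elementary Rayleigh-quotient inequality.
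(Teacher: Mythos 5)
Your proposal is correct, and for the key display \eqref{lemma:initial_estimate:gap_free_bounds} it takes a genuinely different route from the paper. The paper establishes \eqref{lemma:initial_estimate:T_contraction} by citing an off-the-shelf concentration bound for the whitened matrices $\wh\bT_1-\bT$ and $\wh\bT-\bT$ (Corollary 7 of Gao et al., 2019) plus the triangle inequality, whereas you re-derive it from covariance concentration and a three-term telescoping of $\wh\bT_1-\wh\bT$; both are standard and land in the same place, and your \eqref{lemma:initial_estimate:canonical_correlation} via Weyl matches the paper's direct variational argument. The real divergence is in \eqref{lemma:initial_estimate:gap_free_bounds}: the paper first applies a variant Davis--Kahan theorem to the locally-whitened vector $\wt\bmr^{(0)}=\bigl((\bhSig_{\x,1}^{1/2}\wh\bu^{(0)})\trans,(\bhSig_{\y,1}^{1/2}\wh\bv^{(0)})\trans\bigr)\trans/2^{1/2}$ to get the bound $\norm{\wh\bT_1-\wh\bT}/(\delta\wh\rho_1)$, and then pays the factor $2/\gamma$ by converting each inner product $\inner{\wh\bmr_\ell}{\wt\bmr^{(0)}}$ to $\inner{\wh\bmr_\ell}{\wh\bmr^{(0)}}$ through the whitening-mismatch matrix $\diag(\bhSigx^{1/2},\bhSigy^{1/2})\diag(\bhSig_{\x,1}^{-1/2},\bhSig_{\y,1}^{-1/2})$; you instead run a self-contained Rayleigh-quotient argument, bounding the tail mass $p$ by $\{\wh\rho_1\norm{\wh\bmr^{(0)}}^2-(\wh\bmr^{(0)})\trans\wh\bC\wh\bmr^{(0)}\}/(\delta\wh\rho_1)$ and then controlling the numerator through the exact identities $(\wh\bmr^{(0)})\trans\wh\bC\wh\bmr^{(0)}=(\wh\bu^{(0)})\trans\bhSigxy\wh\bv^{(0)}$ and $(\wh\bu^{(0)})\trans\bhSig_{\x,\y,1}\wh\bv^{(0)}=\sigma_1(\wh\bT_1)$. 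Your route avoids invoking Davis--Kahan altogether, makes the origin of the $1/\gamma$ factor (local versus pooled whitening) completely transparent, and correctly handles the negative eigenvalues of $\wh\bC$; its only cosmetic cost is that the resulting numerator is controlled by the covariance-difference norms $\norm{\bhSig_{\x,1}-\bhSigx}$, $\norm{\bhSig_{\y,1}-\bhSigy}$, $\norm{\bhSig_{\x,\y,1}-\bhSigxy}$ rather than literally by $\norm{\wh\bT_1-\wh\bT}$, so you recover the bound only up to a constant and with the same $(d\log^2 d/n)^{1/2}$ rate --- which is all that is ever used downstream, since the lemma is only invoked to verify condition \eqref{definition:constant} with overwhelming probability. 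Your Step 3 bookkeeping with the $\pm$ pairing in \eqref{structure_of_C_eigenvectors} is exactly the paper's argument for \eqref{lemma:initial_estimate:gap_free_bounds_for_singular}.
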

\begin{proof}[\bf Proof of \Cref{lemma:initial_estimate}]
	By Corollary 7 in \cite{gao2019StochasticCanonicalCorrelation}, we have with high probability that 
\beqrs
  \norm{\wh\bT_1 - \bT}\leq \nu, \ \text{and}\  \norm{\wh\bT - \bT}\leq \nu
\eeqrs
if $n\geq C d\log^2 d/\nu^2$ and $N\geq C d\log^2 d/\nu^2$ where $C$ is a positive constant independent of $\nu$, $n$, $N$ and $d$. Therefore, our first claim follows the matrix concentrations above and the triangle inequality for the spectral norm.

Denote by $\wh\bphi_1$ and $\wh\bpsi_1$ the top left and right singular vectors for $\wh\bT$, and $\wh\bphi_1^{(0)}$ and $\wh\bpsi_1^{(0)}$ are  those for $\wh\bT_1$. Without loss of generality, assume that $\wh\rho_1\geq \rho_1(\wh\bT_1)$. Then we have
\beqrs
  \abs{\sigma_1(\wh\bT_1) - \wh\rho_1} = \wh\bphi_1\trans\wh\bT\wh\bpsi_1 - (\wh\bphi_1^{(0)})\trans\wh\bT_1\wh\bpsi_1^{(0)} \leq \wh\bphi_1\trans\wh\bT\wh\bpsi_1 - \wh\bphi_1\trans\wh\bT_1\wh\bpsi_1\leq \norm{\wh\bT_1 - \wh\bT}, 
\eeqrs
where the first inequality is by the definition of singular value, and the second one owes to the definition of induced norm.

It remains to deal with \eqref{lemma:initial_estimate:gap_free_bounds} and \eqref{lemma:initial_estimate:gap_free_bounds_for_singular}. 
Let 
\beqrs
  \wt\bmr^{(0)} \defn \begin{pmatrix}
	\bhSig_{\x,1}^{1/2}\wh\bu^{(0)}\\
	\bhSig_{\y,1}^{1/2}\wh\bv^{(0)}\\
\end{pmatrix}\Bigg / 2^{1/2}.
\eeqrs
By (variant) Davis-Kahan Theorem \citep{yu2015UsefulVariantDavis}, it is standard to show that 
\beqrs
  \sum_{\ell\colon \wh\lambda_\ell\leq (1- \delta)\wh\lambda_1}\abs{\inner{\wh\bmr_\ell}{\wt\bmr^{(0)}}}^2 \leq \norm{\wh\bT_1 - \wh\bT}/(\delta\wh\rho_1). 
\eeqrs
For each $\ell$, we have
\beqrs
\abs{\inner{\wh\bmr_\ell}{\wh\bmr^{(0)}}} = \Abs{\wh\bmr_\ell\trans
\begin{pmatrix}
	\bhSigx^{1/2} & \bm{0}\\
	\bm{0} & \bhSigy^{1/2} 
\end{pmatrix}
\begin{pmatrix}
	\bhSig_{\x,1}^{-1/2} & \bm{0}\\
	\bm{0} & \bhSig_{\y,1}^{-1/2} 
\end{pmatrix}
\wt\bmr^{(0)}
}\leq (2/\gamma)^{1/2}\abs{\inner{\wh\bmr_\ell}{\wt\bmr^{(0)}}},
\eeqrs
where the last inequality is because $\max(\norm{\bhSig_{\x,1}^{-1}}, \norm{\bhSig_{\y,1}^{-1}})\leq 2/\gamma$ holds with an overwhelming probability. 
This together with the inequality above proves \eqref{lemma:initial_estimate:gap_free_bounds}.

As for the last claim, combining \eqref{structure_of_C_eigenvectors} and \eqref{lemma:initial_estimate:gap_free_bounds}, we have
\beqrs
   & & \sum_{\ell\colon \wh\rho_\ell\leq (1- \delta)\wh\rho_1} \abs{\inner{\wh\bu_\ell}{\wh\bu^{(0)}}_{\bhSigx}}^2/2 + \abs{\inner{\wh\bv_\ell}{\wh\bv^{(0)}}_{\bhSigy}}^2/2 \\
   & = &  \sum_{ \ell \colon \wh\rho_\ell\leq (1- \delta)\wh\rho_1} \abs{\inner{\wh\bmr_\ell}{\wh\bmr^{(0)}}}^2 + \abs{\inner{\wh\bmr_{d - \ell + 1}}{\wh\bmr^{(0)}}}^2\\
   & \leq & \sum_{\ell\colon \wh\lambda_\ell\leq (1- \delta)\wh\lambda_1}\abs{\inner{\wh\bmr_\ell}{\wh\bmr^{(0)}}}^2 \\
   & \leq & 2\norm{\wh\bT_1 - \wh\bT}/(\gamma\delta\wh\lambda_1).
\eeqrs
This completes the proof of \Cref{lemma:initial_estimate}.

\end{proof}

\section{Proof of \Cref{theorem:top_canonical_vector}}

Define 
\beqr\label{definition:Mrho1}
  \M_{\rhob_1, 1} = 
  \begin{pmatrix}
  	\bhSigx^{1/2} & \\
  	& \bhSigy^{1/2}
  \end{pmatrix}
    \begin{pmatrix}
  	\rhob_1\bhSig_{\x,1} & -\bhSig_{\x, \y,1} \\
  	-\bhSig_{\x, \y,1}\trans & \rhob_1\bhSig_{\y,1}
  \end{pmatrix}^{-1}
    \begin{pmatrix}
  	\bhSigx^{1/2} & \\
  	& \bhSigy^{1/2}
  \end{pmatrix}
\eeqr
and
\beqr\label{definition:Mrho}
\nonumber  \M_{\rhob_1} & = &
  \begin{pmatrix}
  	\bhSigx^{1/2} & \\
  	& \bhSigy^{1/2}
  \end{pmatrix}
    \begin{pmatrix}
  	\rhob_1\bhSig_{\x} & -\bhSig_{\x, \y} \\
  	-\bhSig_{\x, \y}\trans & \rhob_1\bhSig_{\y}
  \end{pmatrix}^{-1}
    \begin{pmatrix}
  	\bhSigx^{1/2} & \\
\nonumber  	& \bhSigy^{1/2}
  \end{pmatrix}\\
  & = &
  \begin{pmatrix}
  	\bhSigx^{-1/2} & \\
\nonumber  	& \bhSigy^{-1/2}
  \end{pmatrix}^{-1}
    \begin{pmatrix}
  	\rhob_1\bhSig_{\x} & -\bhSig_{\x, \y} \\
  	-\bhSig_{\x, \y}\trans & \rhob_1\bhSig_{\y}
  \end{pmatrix}^{-1}
    \begin{pmatrix}
  	\bhSigx^{-1/2} & \\
  	& \bhSigy^{-1/2}
  \end{pmatrix}^{-1}\\
\nonumber  & = &
  \left\{
  \begin{pmatrix}
  	\bhSigx^{-1/2} & \\
\nonumber  	& \bhSigy^{-1/2}
  \end{pmatrix}
    \begin{pmatrix}
  	\rhob_1\bhSig_{\x} & -\bhSig_{\x, \y} \\
  	-\bhSig_{\x, \y}\trans & \rhob_1\bhSig_{\y}
  \end{pmatrix}
    \begin{pmatrix}
  	\bhSigx^{-1/2} & \\
  	& \bhSigy^{-1/2}
  \end{pmatrix}
  \right\}^{-1}\\
  & = &     \begin{pmatrix}
  	\rhob_1\I & -\wh\bT \\
  	-\wh\bT\trans & \rhob_1\I
  \end{pmatrix}^{-1} = (\rhob_1\I - \wh\bC)^{-1}.
\eeqr

The following lemma  characterizes the convergence rate of the \textit{outer loop} in \Cref{alg:DistributedCCA_top}.  
{\lemm\label{lemma:outer}
Suppose the initial estimate $\overline{\rho}_1$ satisfies
\beqr\label{lemma:outer:eigen}
\omega\leq \overline{\rho}_1 - \wh\rho_1\leq 2\omega, \ \text{for some} \ \omega>0.
\eeqr
For any $(\bu\trans, \bv\trans)\trans\in\mR^d$ and $(\wt\bu\trans, \wt\bv\trans)\trans\in \mR^d$, define 
\beqrs
\bmr = \begin{pmatrix}
	\bhSigx^{1/2}\bu\\
	\bhSigy^{1/2}\bv\\
\end{pmatrix}\Bigg / 2^{1/2},\ \text{and}\ 
\wt\bmr = \begin{pmatrix}
	\bhSigx^{1/2}\wt\bu\\
	\bhSigy^{1/2}\wt\bv\\
\end{pmatrix}\Bigg / 2^{1/2}.
\eeqrs
Assume that
\beqr\label{lemma:outer:initial}
\norm{\bmr} = 1, \sum_{ \ell \colon \wh\lambda_\ell\leq (1- \delta)\wh\lambda_1}\abs{\inner{\wh\bmr_\ell}{\bmr}}^2\leq 3/4,
\eeqr
and 
\beqr\label{lemma:outer:residual}
\Norm{
\wt\bmr - \M_{\overline{\rho}_1}\bmr}\leq \varepsilon\leq (8\omega)^{-1}.
\eeqr
For each index $\ell = 1\ldots$ such that $\wh\rho_\ell\leq (1- \delta)\wh\rho_1$, we have
\beqrs
\frac{\abs{\inner{\wh\bmr_\ell}{\wt\bmr}}}{\norm{\wt\bmr}}\leq \frac{8\omega}{\delta\wh\rho_1}  \frac{\abs{\inner{\wh\bmr_\ell}{\bmr}}}{\norm{\bmr}}  + 8 \omega\varepsilon.
\eeqrs
Moreover, we have
\beqr\label{lemma:outer:total}
\sum_{\ell\colon \wh\lambda_\ell\leq (1- \delta)\wh\lambda_1} \frac{\abs{\inner{\wh\bmr_\ell}{\wt\bmr}}^2}{\norm{\wt\bmr}^2}\leq \frac{128\omega^2}{\delta^2\wh\rho_1^2}
 \sum_{\ell\colon \wh\lambda_\ell\leq (1- \delta)\wh\lambda_1} \frac{\abs{\inner{\wh\bmr_\ell}{\bmr}}^2}{\norm{\bmr}^2} + 128 \omega^2\varepsilon^2.
\eeqr
}
For the outer loop in our \Cref{alg:DistributedCCA_top}, $(\bu, \bv)$ and $(\wt\bu, \wt\bv)$ in \Cref{lemma:outer} can be explained as the $t$-th round and $(t+1)$-th round estimate $(\wh\bu^{(t)}, \wh\bv^{(t)})$ and $(\wh\bu^{(t + 1)}, \wh\bv^{(t + 1)})$. Define $\B = \diag(\wh\bSig_\x, \wh\bSig_\y)$.   This lemma shows that up to a numerical tolerance $\varepsilon$ for inverting $\wh\bH$ (Condition \eqref{lemma:outer:residual}), each iteration of the outer loop reduces the magnitude of the projection of $(\wh\bu^{(t)}, \wh\bv^{(t)})$ onto $(\wh\bu_\ell, \wh\bv_\ell)$ by a factor of $O\{\omega/(\delta\wh\rho_1)\}\ll 1$ under the inner product induced by $\B$ given $\omega\ll 1$ (if we have a good initial estimate of $\wh\rho_1$ and $\delta\wh\rho_1 = \Omega(1)$).  Notice that if $(\wh\bu^{(t)}, \wh\bv^{(t)})$ satisfies Condition \eqref{lemma:outer:initial}, the result of \Cref{lemma:outer} \eqref{lemma:outer:total} claims that $(\wh\bu^{(t + 1)}, \wh\bv^{(t + 1)})$ fulfills Condition \eqref{lemma:outer:initial} as well. This condition is justified if $(\wh\bu^{(0)}, \wh\bv^{(0)})$ satisfies Condition \eqref{lemma:outer:initial}, which is a conclusion for the local empirical risk minimizer by \Cref{lemma:initial_estimate} or for the stochastic optimization solution of \cite{wang2016EfficientGloballyConvergent} by Theorem 4 of it.

The second lemma concerns the convergence rate of distributively solving the linear system 
\beqrs
\begin{pmatrix}
	\bu\\
	 \bv
\end{pmatrix} 
= \begin{pmatrix}
	\overline{\rho}_1\bhSig_{\x} & -\bhSig_{\x, \y}\\
	-\bhSig\trans_{\x, \y} & \overline{\rho}_1\bhSig_{\y}\\
\end{pmatrix}^{-1}
\begin{pmatrix}
	\bhSig_{\x} & \bm{0}\\
	\bm{0} & \bhSig_{\y} \\
\end{pmatrix}
\begin{pmatrix}
	\wh\bu^{(t)}\\
	 \wh\bv^{(t)}
\end{pmatrix},
\eeqrs
in the \textit{inner loop} of \Cref{alg:DistributedCCA_top}. Recall that in \eqref{eq:iteration}, 
\beqrs
\begin{pmatrix}
	\wt\bu^{(t + 1)}\\
	 \wt\bv^{(t + 1)}
\end{pmatrix} 
= \begin{pmatrix}
	\overline{\rho}_1\bhSig_{\x} & -\bhSig_{\x, \y}\\
	-\bhSig\trans_{\x, \y} & \overline{\rho}_1\bhSig_{\y}\\
\end{pmatrix}^{-1}
\begin{pmatrix}
	\bhSig_{\x} & \bm{0}\\
	\bm{0} & \bhSig_{\y} \\
\end{pmatrix}
\begin{pmatrix}
	\wh\bu^{(t)}\\
	 \wh\bv^{(t)}
\end{pmatrix}
\eeqrs
is the exact solution of this linear system. Define 
\beqrs
\wh\bmr^{(t + 1)}_j = \begin{pmatrix}
	\bhSigx^{1/2}\wh\bu^{(t + 1)}_j\\
	\bhSigy^{1/2}\wh\bv^{(t + 1)}_j\\
\end{pmatrix}\Bigg / 2^{1/2},\ \text{and}\ 
\wt\bmr^{(t + 1)} = \begin{pmatrix}
	\bhSigx^{1/2}\wt\bu^{(t + 1)}\\
	\bhSigy^{1/2}\wt\bv^{(t + 1)}\\
\end{pmatrix}\Bigg / 2^{1/2}.
\eeqrs

{\lemm\label{lemma:inner}
Recall $\kappa = \norm{\wh\bT_1 - \wh\bT}$ in \Cref{theorem:top_canonical_vector}. 
Suppose the initial estimate $\overline{\rho}_1$ satisfies
\beqrs
\rhob_1 - \wh\rho_1\geq \omega\geq 2\norm{\wh\bT_1 - \wh\bT}.
\eeqrs
Then for each $j = 0, 1, \ldots, (T^\prime - 1)$, we have
\beqr\label{lemma:inner:result}
\norm{\wh\bmr^{(t + 1)}_{j + 1} - \wt\bmr^{(t + 1)}} 
\leq 8\kappa/(\gamma\omega)\norm{\wh\bmr^{(t + 1)}_{j} - \wt\bmr^{(t + 1)}}, 
\eeqr
with an overwhelming probability.
}

Here $\kappa$ on the right hand side of \eqref{lemma:inner:result} is due to the approximation using the Hessian matrix $\wh\bH_1$ on the first machine in place of original Hessian matrix $\wh\bH$. By standard matrix concentration inequalities, we have $\kappa = O_p\{(d\log^2d/n)^{1/2}\}$. Consequently, the inner loop of \Cref{alg:DistributedCCA_top} has a contraction rate of order $O\{1/(\gamma\omega)(d\log^2d/n)^{1/2}\}$, which is inversely proportional to the gap $(\wb\rho_1 - \wh\rho_1)$ and the condition number $\gamma$.

\begin{proof}[\bf Proof of \Cref{theorem:top_canonical_vector}]
	We first give the analysis of Phase I of \Cref{alg:DistributedCCA_top}. By \Cref{lemma:inner}, we have (recall that $\wh\bu_0^{(t + 1)} = \wh\bu^{(t)}_{T^\prime}$ and $\wh\bv_0^{(t + 1)} = \wh\bv^{(t)}_{T^\prime}$ after normalizing),
\beqrs
  \norm{\wh\bmr^{(t + 1)}_{T^\prime} - \M_{\rhob_1}\wh\bmr^{(t)}_{T^\prime}}\leq \{8\kappa/(\gamma\omega)\}^{T^\prime}\norm{\wh\bmr^{(t)}_{T^\prime} - \M_{\rhob_1}\wh\bmr^{(t)}_{T^\prime}}\leq \{8\kappa/(\gamma\omega)\}^{T^\prime}2/\omega \defn \varepsilon_{T^\prime},
\eeqrs
where we use the fact that $\norm{\I - \M_{\rhob_1}}\leq \norm{\I}  +  \norm{\M_{\rhob_1}} = 1 + (\wb\rho_1 - \wh\rho_1)^{-1}\leq 1 + 1/\omega\leq 2/\omega$, and $\norm{\wh\bmr^{(t)}_{T^\prime}} = 1$. Now we can recursively apply \eqref{lemma:outer:total} with $\varepsilon$ replaced by $\varepsilon_{T^\prime}$ to obtain
\beqr\label{theorem:top_canonical_vector:medium_results}
\nonumber & &  \sum_{\ell\colon \wh\lambda_\ell\leq (1- \delta)\wh\lambda_1}\abs{\inner{\wh\bmr_\ell}{\wh\bmr^{(T)}}}^2 \\
\nonumber & \leq &  \frac{128\omega^2}{\delta^2\wh\rho_1^2}\sum_{\ell\colon \wh\lambda_\ell\leq (1- \delta)\wh\lambda_1}\abs{\inner{\wh\bmr_\ell}{\wh\bmr^{(T - 1)}}}^2 + 128\omega^2\varepsilon_{T^\prime}^2 \\
\nonumber & \leq & \left(\frac{128\omega^2}{\delta^2\wh\rho_1^2}\right)^2 \sum_{\ell\colon \wh\lambda_\ell\leq (1- \delta)\wh\lambda_1}\abs{\inner{\wh\bmr_\ell}{\wh\bmr^{(T - 2)}}}^2 + \{1 + \frac{128\omega^2}{\delta^2\wh\rho_1^2}\}128\omega^2\varepsilon_{T^\prime}^2 \\
\nonumber & \leq & \ldots \\
\nonumber & \leq &  \left(\frac{128\omega^2}{\delta^2\wh\rho_1^2}\right)^T\sum_{\ell\colon \wh\lambda_\ell\leq (1- \delta)\wh\lambda_1}\abs{\inner{\wh\bmr_\ell}{\wh\bmr^{(0)}}}^2 +  128\omega^2\varepsilon_{T^\prime}^2\sum_{t = 0}^T \left(\frac{128\omega^2}{\delta^2\wh\rho_1^2}\right)^t\\
  & \leq &   \left(\frac{128\omega^2}{\delta^2\wh\rho_1^2}\right)^T + \frac{128\omega^2\varepsilon_{T^\prime}^2}{\{1 - 128\omega^2/(\delta^2\wh\rho_1)^2\}}\\
\nonumber  & = &\left(\frac{128\omega^2}{\delta^2\wh\rho_1^2}\right)^T + \frac{512}{1 - 128\omega^2/(\delta^2\wh\rho_1)^2} \left(\frac{64\kappa^2}{\gamma^2\omega^2}\right)^{T^\prime}\defn \varepsilon_{T, T^\prime}.
\eeqr
which implies that
\beqrs
  \sum_{\ell\colon \wh\rho_\ell\leq (1- \delta)\wh\rho_1} \abs{\inner{\wh\bu_\ell^\pool}{\wh\bu^{(T)}}_{\bhSigx}}^2 + \sum_{\ell\colon \wh\rho_\ell\leq (1- \delta)\wh\rho_1} \abs{\inner{\wh\bv_\ell^\pool}{\wh\bv^{(T)}}_{\bhSigy}}^2 \\ \leq   2\left(\frac{128\omega^2}{\delta^2\wh\rho_1^2}\right)^T + 
    \frac{1024}{1 - 128\omega^2/(\delta^2\wh\rho_1)^2} \left(\frac{64\kappa^2}{\gamma^2\omega^2}\right)^{T^\prime}.
\eeqrs


We now bound the normalization step in Phase II of \Cref{alg:DistributedCCA_top}. 
Note that  
$$\sum_{\ell\colon \wh\lambda_\ell\leq (1- \delta)\wh\lambda_1}\abs{\inner{\wh\bmr_\ell}{\wh\bmr^{(T)}}}^2\leq \varepsilon_{T, T^\prime}.$$
It follows that,
\beqrs
(\wh\bu^{(T)})\trans\bhSigxy\wh\bv^{(T)} & = & (\wh\bmr^{(T)})\trans\wh\bC\wh\bmr^{(T)} = \sum_{i = 1}^d\wh\lambda_i \abs{\inner{\wh\bmr_i}{\wh\bmr^{(T)}}}^2\\
& = & \sum_{i\colon \wh\lambda_i> (1-\delta)\wh\lambda_1}\wh\lambda_i \abs{\inner{\wh\bmr_i}{\wh\bmr^{(T)}}}^2 + \sum_{i\colon \wh\lambda_i\leq (1-\delta)\wh\lambda_1}\wh\lambda_i \abs{\inner{\wh\bmr_i}{\wh\bmr^{(T)}}}^2\\
& \geq & (1-\delta)\wh\lambda_1\sum_{i\colon \wh\lambda_i> (1-\delta)\wh\lambda_1}\abs{\inner{\wh\bmr_i}{\wh\bmr^{(T)}}}^2 - (1-\delta)\wh\lambda_1\varepsilon_{T, T^\prime}\\
& \geq & (1-\delta)\wh\lambda_1(1 - \varepsilon_{T, T^\prime}) - (1-\delta)\wh\lambda_1\varepsilon_{T, T^\prime} = (1 - \delta)(1 - 2\varepsilon_{T, T^\prime})\wh\rho_1. 
\eeqrs
We have
\beqr\label{theorem:top_canonical_vector:eq2}
\nonumber  (\wh\bu_1^\dist)\trans\bhSigxy\wh\bv_1^\dist & = &  (\wh\bu^{(T)})\trans\bhSigxy\wh\bv^{(T)}/(\inner{\wh\bu^{(T)}}{\wh\bu^{(T)}}_{\bhSigx}\inner{\wh\bv^{(T)}}{\wh\bv^{(T)}}_{\bhSigy})^{1/2}\\
\nonumber  & \geq & (1 - \delta)(1 - 2\varepsilon_{T, T^\prime})\wh\rho_1 /(\inner{\wh\bu^{(T)}}{\wh\bu^{(T)}}_{\bhSigx}\inner{\wh\bv^{(T)}}{\wh\bv^{(T)}}_{\bhSigy})^{1/2}. \\
\eeqr

We assert that both $\inner{\wh\bu^{(T)}}{\wh\bu^{(T)}}_{\bhSigx}/2$ and $\inner{\wh\bv^{(T)}}{\wh\bv^{(T)}}_{\bhSigy}/2$ must be in the range of $[0.01, 0.99]$. Because $\inner{\wh\bu^{(T)}}{\wh\bu^{(T)}}_{\bhSigx}/2 + \inner{\wh\bv^{(T)}}{\wh\bv^{(T)}}_{\bhSigy}/2 = 1$, assume for instance that $\inner{\wh\bu^{(T)}}{\wh\bu^{(T)}}_{\bhSigx}/2 = c$ for some $c\in[0,1]$, which means $\inner{\wh\bv^{(T)}}{\wh\bv^{(T)}}_{\bhSigy}/2 = 1 - c$. Then, it follows that $(\inner{\wh\bu^{(T)}}{\wh\bu^{(T)}}_{\bhSigx}\inner{\wh\bv^{(T)}}{\wh\bv^{(T)}}_{\bhSigy})^{1/2} = \{4c(1 - c)\}^{1/2}$. This, in together with \eqref{theorem:top_canonical_vector:eq2}, implies that $(1 - \delta)(1 - 2\varepsilon_{T, T^\prime}) /\{4c(1 - c)\}^{1/2}\leq 1$. If $\delta<1/2$ and $\varepsilon_{T, T^\prime}\leq 1/4$, this implies that $c - 1/2\in [-(15)^{1/2}/8, (15)^{1/2}/8]$ and hence 
\beqr\label{c_range}
c \in [0.01, 0.99].
\eeqr 
We have
\beqrs
  \sum_{\ell\colon \wh\rho_\ell\leq (1- \delta)\wh\rho_1} \abs{\inner{\wh\bu_\ell^\pool}{\wh\bu_1^\dist}_{\bhSigx}}^2 + \sum_{\ell\colon \wh\rho_\ell\leq (1- \delta)\wh\rho_1}\abs{\inner{\wh\bv_\ell^\pool}{\wh\bv_1^\dist}_{\bhSigy}}^2  =  \\
      O\bigg\{\left(\frac{128\omega^2}{\delta^2\wh\rho_1^2}\right)^T + 
\frac{1}{1 - 128\omega^2/(\delta\wh\rho_1)^2} \left(\frac{64\kappa^2}{\gamma^2\omega^2}\right)^{T^\prime}\bigg\}.
  \eeqrs
This completes the proof of \Cref{theorem:top_canonical_vector}.
\end{proof}

\section{Proof of \Cref{coro:variability_top}}
\begin{proof}
Define 
$$
 \wh\bmr \defn \begin{pmatrix}
	\bhSigx^{1/2}\wh\bu^\dist_1\\
	\bhSigy^{1/2}\wh\bv^\dist_1\\
\end{pmatrix}\Bigg / 2^{1/2}.
$$
By the condition of \Cref{coro:variability_top}, we have 
\beqrs
\sum_{\ell\colon \wh\lambda_\ell\leq (1- \delta)\wh\lambda_1}\abs{\inner{\wh\bmr_\ell}{\wh\bmr}}^2 =  \sum_{\ell\colon \wh\rho_\ell\leq (1- \delta)\wh\rho_1} \abs{\inner{\wh\bu_\ell^\pool}{\wh\bu_1^\dist}_{\bhSigx}}^2/2 \\
+ \sum_{\ell\colon \wh\rho_\ell\leq (1- \delta)\wh\rho_1}\abs{\inner{\wh\bv_\ell^\pool}{\wh\bv_1^\dist}_{\bhSigy}}^2/2\leq \varepsilon/2.
\eeqrs	
It follows that
\beqrs
(\wh\bu_1^\dist)\trans\bhSigxy\wh\bv_1^\dist & = & \wh\bmr\trans\wh\bC\wh\bmr = \sum_{i = 1}^d\wh\lambda_i \abs{\inner{\wh\bmr_i}{\wh\bmr}}^2\\
& = & \sum_{i\colon \wh\lambda_i> (1-\delta)\wh\lambda_1}\wh\lambda_i \abs{\inner{\wh\bmr_i}{\wh\bmr}}^2 + \sum_{i\colon \wh\lambda_i\leq (1-\delta)\wh\lambda_1}\wh\lambda_i \abs{\inner{\wh\bmr_i}{\wh\bmr}}^2\\
& \geq & (1-\delta)\wh\lambda_1\sum_{i\colon \wh\lambda_i> (1-\delta)\wh\lambda_1}\abs{\inner{\wh\bmr_i}{\wh\bmr}}^2 - (1-\delta)\wh\lambda_1\varepsilon/2\\
& \geq & (1-\delta)\wh\lambda_1(1 - \varepsilon/2) - (1-\delta)\lambda_1\varepsilon/2 = (1 - \delta)(1 - \varepsilon)\wh\rho_1. 
\eeqrs
This completes the proof of \Cref{coro:variability_top}.
\end{proof}

\section{Proof of \Cref{corollary:top_canonical_vector:population}}

\begin{proof}[\bf Proof of \Cref{corollary:top_canonical_vector:population}]
	This is a special case of \Cref{corollary:topL_canonical_vector:population} when $L = 1$.
\end{proof}

\section{Proof of \Cref{theorem:topL_canonical_vector}}

\begin{proof}[\bf Proof of \Cref{theorem:topL_canonical_vector}]
We adapt the proof for Theorem F.1 in \cite{allen-zhu2017DoublyAcceleratedMethods} to our settings. Let $\widetilde\bphi_\ell\defn \bhSigx^{1/2}\wh\bu_{\ell}^\dist$, $\widetilde\bpsi_\ell\defn \bhSigy^{1/2}\wh\bv_{\ell}^\dist$, $\widetilde\bphi_\ell^\prime\defn \bhSigx^{1/2}\wh\bu_{\ell}^{\dist\prime}$, $\widetilde\bpsi_\ell^\prime\defn \bhSigy^{1/2}\wh\bv_{\ell}^{\dist\prime}$, $\widetilde\bmr_\ell \defn (\widetilde\bphi_\ell\trans, \widetilde\bpsi_\ell\trans)\trans/2^{1/2}$, $\widetilde \bmr_\ell^\prime \defn (\widetilde\bphi_\ell^{\prime\mbox{\tiny{T}}}, \widetilde\bpsi_\ell^{\prime\mbox{\tiny{T}}})\trans/2^{1/2}$ and 
\beqrs
  \widetilde \bR_\ell \defn  \left\{\widetilde \bR_{\ell - 1}, \begin{pmatrix}
  	\widetilde\bphi_\ell &  \widetilde\bphi_\ell\\
  	\widetilde\bpsi_\ell & - \widetilde\bpsi_\ell
  \end{pmatrix}\bigg / 2^{1/2} \right\}
\eeqrs
for $\ell = 1, \ldots, L$, where $\widetilde \bR_0 = ()$. It is easy to check that the projection in \Cref{alg:DistributedCCA_topL} takes the form of $\widetilde \bmr_\ell^{\prime\prime} = \widetilde\bmr_\ell^\prime -\widetilde \bR_{\ell -1}\widetilde \bR_{\ell - 1}\trans\widetilde\bmr_\ell^\prime$ and $\widetilde\bphi_\ell = \widetilde\bphi_\ell^{\prime\prime}/\norm{\widetilde\bphi_\ell^{\prime\prime}}$, $\widetilde\bpsi_\ell = \widetilde\bpsi_\ell^{\prime\prime}/\norm{\widetilde\bpsi_\ell^{\prime\prime}}$ where we write $\widetilde \bmr_\ell^{\prime\prime} = (\widetilde\bphi_\ell^{\prime\prime\mbox{\tiny{T}}}, \widetilde\bpsi_\ell^{\prime\prime\mbox{\tiny{T}}})\trans/2^{1/2}$ with $\widetilde\bphi_\ell^{\prime\prime}\in\mR^{d_\x}$ and $\widetilde\bpsi_\ell^{\prime\prime}\in\mR^{d_\y}$.

%

Let 
\beqrs
  \wh\bC^{(\ell -1)} = \begin{pmatrix}
  	\bm{0} & \wh\bT^{(\ell -1)}\\
  	(\wh\bT^{(\ell -1)})\trans & \bm{0}
  \end{pmatrix} = (\I - \widetilde \bR_{\ell - 1}\widetilde \bR_{\ell - 1}\trans)\begin{pmatrix}
  	\bm{0} & \wh\bT\\
  	\wh\bT\trans & \bm{0}
  \end{pmatrix}(\I - \widetilde \bR_{\ell - 1}\widetilde \bR_{\ell - 1}\trans).
\eeqrs
We claim that $\norm{\wh\bC^{(\ell -1)}}\geq \wh\rho_\ell$ for each $\ell = 1, \ldots, L$. This is indeed the direct result of Courant minimax principle. Specifically, $\wh\bC^{(\ell - 1)} = (\I - \widetilde \bR_{\ell - 1}\widetilde \bR_{\ell - 1}\trans)\wh\bC(\I - \widetilde \bR_{\ell - 1}\widetilde \bR_{\ell - 1}\trans)$  is a projection of $\wh\bC$ into a $d - \ell + 1$ dimensional space, and hence the largest eigenvalue $\norm{\wh\bC^{(\ell -1)}}$ would be at least as large as $\wh\rho_\ell$, which ensures our claim. 

Let  $\wh\rho = \norm{\wh\bC^{(L - 1)}}\geq \wh\rho_L$. Note that all column vectors in $\widetilde\bR_\ell$ are already  eigenvectors of $\wh\bC^{(\ell)}$ corresponding to eigenvalues zero. Let $\W_\ell$ be the column orthogonal matrix whose columns are eigenvectors in $\widetilde\bR_\ell^{\perp}$ of $\wh\bC^{(\ell)}$ with \textit{absolute} eigenvalues in the range $[0,(1 - \delta + \tau_\ell)\wh\rho]$, where $\tau_\ell \defn (\ell\delta)/(2L)$ for $0\leq \ell \leq L$. 

Denote by $\wh\bR_{\leq (1 - \delta)\wh\rho_L}$ the eigenvectors of $\wh\bC$ with \textit{absolute} eigenvalues less or equal to $(1 - \delta)\wh\rho_L$. We will show that for each $\ell = 0, \ldots, L$, there exists some matrix $\bQ_\ell$ such that
\beqr\label{theorem:topL_canonical_vector:eq2}
  \norm{\wh\bR_{\leq (1 - \delta)\wh\rho_L} - \W_\ell\bQ_\ell}\leq \omega_\ell\in[0,1)\ \text{and}\ \norm{\bQ_\ell}\leq 1
\eeqr
for some sequence $\{\omega_\ell\}_{\ell = 0}^L$ of small numbers. 
In fact, the first inequality above implies $\norm{\I - \wh\bR_{\leq (1 - \delta)\wh\rho_L}\trans\W_L\bQ_L}\leq \omega_L$. Therefore, the smallest singular value of $\wh\bR_{\leq (1 - \delta)\wh\rho_L}\trans\W_L\bQ_L$ is at least $ 1 - \omega_L>0$ by Weyl's inequality. This, together with the fact that $\norm{\bQ_L}\leq 1$, implies that $\sigma_{\min}(\wh\bR_{\leq (1 - \delta)\wh\rho_L}\trans\W_L)\geq 1 - \omega_L$, or equivalently,
\beqrs
  \I - \wh\bR_{\leq (1 - \delta)\wh\rho_L}\trans\W_L\W_L\trans\wh\bR_{\leq (1 - \delta)\wh\rho_L}\preceq 1 - (1 - \omega_L)^2\I,
\eeqrs
where $A\preceq B$ means $B - A$ is non-negative definite.  Because $\widetilde\bR_L$ is (column) orthogonal to $\W_L$, the proceeding display implies
\beqrs
  \wh\bR_{\leq (1 - \delta)\wh\rho_L}\trans\widetilde\bR_L\widetilde\bR_L\trans\wh\bR_{\leq (1 - \delta)\wh\rho_L} & \preceq  &  \wh\bR_{\leq (1 - \delta)\wh\rho_L}\trans(\I - \W_L\W_L\trans)\wh\bR_{\leq (1 - \delta)\wh\rho_L}\\
  & \preceq &  1 - (1 - \omega_L)^2\I\preceq 2\omega_L\I,
\eeqrs
which implies that $\norm{\wh\bR_{\leq (1 - \delta)\wh\rho_L}\trans\widetilde\bR_L}\leq (2\omega_L)^{1/2}$, and hence $\norm{\wh\U_{\leq (1 - \delta)\wh\rho_L}\trans\wh\bSig_\x\widetilde\U_L}^2\leq 2\omega_L$ and $\norm{\wh\V_{\leq (1 - \delta)\wh\rho_L}\trans\wh\bSig_\y\widetilde\V_L}^2\leq 2\omega_L$. 

It remains us to show the claim \eqref{theorem:topL_canonical_vector:eq2}. We will do this by induction. For $\ell = 0$, we simply take $\W_0 = \wh\bR_{\leq (1 - \delta)\wh\rho_L}$, $\omega_0 = 0$ and $\bQ_0 = \I$. Now suppose for each $\ell = 0, \ldots, L - 1$, there exists some orthogonal matrix $\bQ_\ell$ with $\norm{\bQ_\ell}\leq 1$ such that $\norm{\wh\bR_{\leq (1 - \delta)\wh\rho_L} - \W_\ell\bQ_\ell}\leq \omega_\ell\in[0,1)$. We construct $\bQ_{\ell + 1}$ as follows.


Since 
$\kappa = \max(\norm{\wh\bT_1 - \wh\bT})\geq \max( \norm{\wh\bT_1^{(\ell)} - \wh\bT^{(\ell)}})$, by the proof of  \Cref{theorem:top_canonical_vector}, i.e., equations \eqref{theorem:top_canonical_vector:medium_results} and \eqref{c_range}, with $\delta \leftarrow\delta /2$, we have (note that all column vectors in $\W_\ell$ and $\widetilde\bR_\ell$ corresponds to eigenvectors of $\wh\bC^{(\ell)}$ with \textit{absolute} eigenvalues less than or equal to $(1 - \delta + \tau_\ell)\norm{\wh\bC^{(L - 1)}}\leq (1-\delta/2)\norm{\wh\bC^{(\ell)}}$),
\beqrs
  \norm{\W_\ell\trans\widetilde\bmr_{\ell + 1}^{\prime}}^2\leq \omega_{T,T^\prime}^{(\ell)}\ \text{and} \ \norm{\widetilde\bR_\ell\trans\widetilde\bmr_{\ell + 1}^{\prime}}^2\leq \omega_{T,T^\prime}^{(\ell)},
\eeqrs
where $\omega_{T,T^\prime}^{(\ell)}\defn (\frac{128\omega^2}{\delta^2\wh\rho_\ell^2})^T +   \frac{512}{1 - 128\omega^2/(\delta\wh\rho_\ell)^2} (\frac{64\kappa^2}{\gamma^2\omega^2})^{T^\prime} \leq 1/2$. Now noting that $\widetilde\bmr_{\ell + 1}^{\prime\prime}$ is the projection of $\widetilde\bmr^\prime_{\ell + 1}$ into $\widetilde\bR_\ell^{\perp}$, we have
\beqr\label{W_l_rpp}
  \norm{\W_\ell\trans\widetilde\bmr_{\ell + 1}^{\prime\prime}}^2\leq \norm{\W_\ell\trans\widetilde\bmr_{\ell + 1}^\prime}^2\leq \omega_{T,T^\prime}^{(\ell)}.
\eeqr

Appealing to the similar techniques for obtaining \eqref{c_range}, we can get $\norm{\widetilde\bphi_\ell^{\prime\prime}}/2 = 1 - \norm{\widetilde\bpsi_\ell^{\prime\prime}}/2\in[0.01, 0.99]$. Note that the columns of $\W_\ell$ are all eigenvectors of $\wh\bC^{(\ell)}$ whose absolute eigenvalues are below some threshold, and hence must consist only symmetric vectors, that is, 
$$\W_\ell = \begin{pmatrix}
	\ba_1 & \ldots & \ba_t\\
	\pm\bb_1 & \ldots & \pm\bb_t
\end{pmatrix}.
$$
This, in together with \eqref{W_l_rpp}, implies that
\beqrs
  \norm{\W_\ell\trans\widetilde\bmr_{\ell + 1}^{\prime\prime}}^2 & = & \sum_{i = 1}^t (\widetilde\bphi_\ell^{\prime\prime\mbox{\tiny{T}}}\ba_i + \widetilde\bpsi_\ell^{\prime\prime\mbox{\tiny{T}}}\bb_i)^2/2 + (\widetilde\bphi_\ell^{\prime\prime\mbox{\tiny{T}}}\ba_i - \widetilde\bpsi_\ell^{\prime\prime\mbox{\tiny{T}}}\bb_i)^2/2\\
  & = &\norm{\widetilde\bphi_\ell^{\prime\prime\mbox{\tiny{T}}}(\ba_1,\ldots, \ba_t)}^2 + \norm{\widetilde\bpsi_\ell^{\prime\prime\mbox{\tiny{T}}}(\bb_1,\ldots, \bb_t)}^2\leq \omega_{T,T^\prime}^{(\ell)}.
 \eeqrs
 Now we have
 \beqrs
  \norm{\W_\ell\trans\widetilde\bmr_{\ell + 1}}^2 & = & \norm{\widetilde\bphi_\ell\trans(\ba_1,\ldots, \ba_t)}^2 + \norm{\widetilde\bpsi_\ell\trans(\bb_1,\ldots, \bb_t)}^2\\
  &\leq &\{\norm{\widetilde\bphi_\ell^{\prime\prime\mbox{\tiny{T}}}(\ba_1,\ldots, \ba_t)}^2 + \norm{\widetilde\bpsi_\ell^{\prime\prime\mbox{\tiny{T}}}(\bb_1,\ldots, \bb_t)}^2\}/0.02\leq \omega_{T,T^\prime}^{(\ell)}/0.02.
\eeqrs
Similarly, we can show that  
\beqrs
  \norm{\W_\ell\trans\begin{pmatrix}
  	  \widetilde\bphi_\ell\\
  	  - \widetilde\bpsi_\ell
  \end{pmatrix}\bigg / 2^{1/2}}^2 \leq \omega_{T,T^\prime}^{(\ell)}/0.02.
\eeqrs

Combining the above two facts, we have
\beqrs
  \norm{\W_\ell\trans\begin{pmatrix}
  	  \widetilde\bphi_\ell & \widetilde\bphi_\ell\\
  	  \widetilde\bphi_\ell & - \widetilde\bpsi_\ell
  \end{pmatrix}\bigg / 2^{1/2}}^2 &\leq& \norm{\W_\ell\trans\begin{pmatrix}
  	  \widetilde\bphi_\ell & \widetilde\bphi_\ell\\
  	  \widetilde\bphi_\ell & - \widetilde\bpsi_\ell
  \end{pmatrix}\bigg / 2^{1/2}}_F^2\\
  & = &  \norm{\W_\ell\trans\widetilde\bmr_{\ell + 1}}^2 + \norm{\W_\ell\trans\begin{pmatrix}
  	  \widetilde\bphi_\ell\\
  	  - \widetilde\bpsi_\ell
  \end{pmatrix}\bigg / 2^{1/2}}^2\\
  & = & 2\omega_{T,T^\prime}^{(\ell)}/0.02= 100\omega_{T,T^\prime}^{(\ell)}\defn \zeta_{T,T^\prime}^{(\ell)}.
\eeqrs

Now applying Lemma C.4 of \cite{allen-zhu2017DoublyAcceleratedMethods} with $M = \wh\bC^{(\ell)}$, $M^\prime =  \wh\bC^{(\ell + 1)}$, $U = \W_{\ell}$, $V = \W_{\ell + 1}$, $v = \begin{pmatrix}
  	  \widetilde\bphi_\ell & \widetilde\bphi_\ell\\
  	  \widetilde\bphi_\ell & - \widetilde\bpsi_\ell
  \end{pmatrix}\bigg / 2^{1/2}$, $\mu = (1 - \delta - \tau_\ell)\wh\rho$, $r = d-\ell$, and $\tau = (\tau_{\ell + 1} - \tau_{\ell})\wh\rho$, we have a matrix $\wt\bQ_{\ell}$ such that $\norm{\wt\bQ_{\ell}}\leq 1$ and 
\beqrs
  \norm{\W_\ell - \W_{\ell + 1}\wt\bQ_{\ell}}\leq \left\{\frac{507\wh\rho_1^2\zeta_{T,T^\prime}^{(\ell)}}{(\tau_{\ell + 1} - \tau_{\ell})^2\wh\rho^2} + 1.5\zeta_{T,T^\prime}^{(\ell)}\right\}^{1/2}\leq \frac{32\wh\rho_1 L (\zeta_{T,T^\prime}^{(\ell)})^{1/2}}{\wh\rho_L\delta}. 
\eeqrs
This, together with \eqref{theorem:topL_canonical_vector:eq2} implies
\beqrs
  \norm{\W_{\ell + 1}\wt\bQ_{\ell}\bQ_\ell - \wh\bR_{\leq (1 - \delta)\wh\rho_L}} & \leq &   \norm{\W_{\ell + 1}\wt\bQ_{\ell}\bQ_\ell - \W_\ell\bQ_\ell} +   \norm{\W_\ell\bQ_\ell - \wh\bR_{\leq (1 - \delta)\wh\rho_L}}\\
  & \leq & \frac{32\wh\rho_1 L (\zeta_{T,T^\prime}^{(\ell)})^{1/2}}{\wh\rho_L\delta} + \omega_\ell.
\eeqrs
Define $\bQ_{\ell + 1} \defn \wt\bQ_{\ell}\bQ_\ell$. We have $\norm{\bQ_{\ell + 1}}\leq 1$ and 
\beqrs
  \norm{\W_{\ell + 1}\bQ_{\ell + 1} - \wh\bR_{\leq (1 - \delta)\wh\rho_L}}\leq \omega_{\ell+ 1} & \defn  & \omega_\ell + 32\wh\rho_1 L (\zeta_{T,T^\prime}^{(\ell)})^{1/2}/(\wh\rho_L\delta) \\
  & = & \cdots\\
  & = & \sum_{j = 1}^{\ell}32\wh\rho_1 L (\zeta_{T,T^\prime}^{(j)})^{1/2}/(\wh\rho_L\delta)\\
  & \leq & (\ell + 1)32\wh\rho_1 L (\zeta_{T,T^\prime}^{(\ell)})^{1/2}/(\wh\rho_L\delta),
\eeqrs
for $\ell = 1, \ldots, (L - 1)$. In conclusion, we have
\beqrs
& & \norm{\W_{L}\bQ_{L} - \wh\bR_{\leq (1 - \delta)\wh\rho_L}}\\
& \leq & 32\wh\rho_1 L^2 (\zeta_{T,T^\prime}^{(L)})^{1/2}/(\wh\rho_L\delta)\\
& = & \frac{320\wh\rho_1 L^2}{\wh\rho_L\delta}\left\{\left(\frac{128\omega^2}{\delta^2\wh\rho_L^2}\right)^T +   \frac{512}{1 - 128\omega^2/(\delta\wh\rho_L)^2} \left(\frac{64\kappa^2}{\gamma^2\omega^2}\right)^{T^\prime}\right\}^{1/2}.
\eeqrs
We complete the proof of \Cref{theorem:topL_canonical_vector}.
\end{proof}

\section{Proof of \Cref{corollary:variability_topL}}
\begin{proof}
	We can establish this corollary by following the same proof for establishing Theorem  F.1 in \cite{allen-zhu2017DoublyAcceleratedMethods}.
\end{proof}

\section{Proof of \Cref{corollary:topL_canonical_vector:population}}
\begin{proof}
	Note that $\wh\U_{> (1 - \delta)\wh\rho_L}\wh\U_{> (1 - \delta)\wh\rho_L}\trans + \wh\U_{\leq (1 - \delta)\wh\rho_L}\wh\U_{\leq (1 - \delta)\wh\rho_L}\trans = \bhSig_\x^{-1}$. We have
\beqrs
& & \Norm{\widetilde\U_L\trans\wh\bSig_\x\U_{\leq (1 - 2\delta)\rho_L}} \\
& = &  \Norm{\widetilde\U_L\trans\wh\bSig_\x(\wh\U_{> (1 - \delta)\wh\rho_L}\wh\U_{> (1 - \delta)\wh\rho_L}\trans + \wh\U_{\leq (1 - \delta)\wh\rho_L}\wh\U_{\leq (1 - \delta)\wh\rho_L}\trans)\wh\bSig_\x\U_{\leq (1 - 2\delta)\rho_L}} \\
& \leq & \Norm{\widetilde\U_L\trans\wh\bSig_\x\wh\U_{> (1 - \delta)\wh\rho_L}\wh\U_{> (1 - \delta)\wh\rho_L}\trans\wh\bSig_\x\U_{\leq (1 - 2\delta)\rho_L}} \\
& & \hspace*{2cm} +\Norm{\widetilde\U_L\trans\wh\bSig_\x\wh\U_{\leq (1 - \delta)\wh\rho_L}\wh\U_{\leq (1 - \delta)\wh\rho_L}\trans\wh\bSig_\x\U_{\leq (1 - 2\delta)\rho_L}}\\
& \leq & \Norm{\wh\U_{> (1 - \delta)\wh\rho_L}\trans\wh\bSig_\x\U_{\leq (1 - 2\delta)\rho_L}} + \varepsilon\\
& \leq & \frac{\norm{\wh\bT - \bT}}{(1 - \delta) (\wh\rho_L - \rho_L) + \delta\rho_L}  + \varepsilon,
\eeqrs
where the last inequality follows from the gap-free Wedin Theorem \citep[Lemma C.3]{allen-zhu2017DoublyAcceleratedMethods}. By the similar arguments, we can show the same bound for $\Norm{\widetilde\V_L\trans\wh\bSig_\y\V_{\leq (1 - 2\delta)\rho_L}}$. This completes the proof of \Cref{corollary:topL_canonical_vector:population}. 
\end{proof}

\section{Proof of Auxiliary Lemmas}

\subsection{Proof of Lemma \ref{lemma:outer}}

\begin{proof}
It follows that $\norm{\bmr} = \norm{\bmtr} = 1$. Let $\beps = \bmtr - \M_{\overline{\rho}_1}\bmr$.  By \eqref{lemma:outer:residual}, we have $\norm{\beps} \leq \varepsilon$. Denote by 
	\beqrs
	\wh\mu_1 \defn (\overline{\rho}_1 - \wh\rho_1)^{-1}\geq \wh\mu_2 \defn (\overline{\rho}_1 - \wh\rho_2)^{-1}\geq \ldots\geq \wh\mu_{d - 2}\defn (\overline{\rho}_1 + \wh\rho_{2})^{-1} \geq \wh\mu_d\defn (\overline{\rho}_1 + \wh\rho_{1})^{-1},
	\eeqrs
	with corresponding eigenvectors $\wh\bmr_1, \ldots, \wh\bmr_d$.
Because $\M_{\overline{\rho}_1} = (\overline{\rho}_1\I - \bC)^{-1} = \sum_{\ell = 1}^d\wh\mu_\ell\wh\bmr_\ell\wh\bmr_\ell\trans$, we have, for each $1\leq \ell \leq d$,
\beqr\label{lemma:outer:eq1}
\inner{\wh\bmr_\ell}{\bmtr} = \wh\mu_\ell\inner{\wh\bmr_\ell}{\bmr} + \inner{\wh\bmr_\ell}{\beps}.
\eeqr
This ensures a lower bound on $\norm{\bmtr}$,
\beqr\label{lemma:outer:lower_bound}
\nonumber \norm{\bmtr}^2 & = & \sum_{\ell = 1}^d \abs{\inner{\wh\bmr_\ell}{\bmtr}}^2  \geq  \sum_{\ell\colon \wh\lambda_\ell>(1 - \delta)\wh\lambda_1} \abs{\inner{\wh\bmr_\ell}{\bmtr}}^2 \\
\nonumber & \geq & \sum_{\ell\colon \wh\lambda_\ell>(1 - \delta)\wh\lambda_1} \left\{(\overline{\rho}_1 - \wh\rho_\ell)^{-2}\abs{\inner{\wh\bmr_\ell}{\bmr}}^2/2 + 
\abs{\inner{\wh\bmr_\ell}{\beps}}^2\right\}\\
& \geq & (32\omega^2)^{-1} - \norm{\beps}^2 \geq (32\omega^2)^{-1} - \varepsilon^2 \geq (64\omega^2)^{-1},
\eeqr
where the first inequality follows the fact that $\abs{\inner{\wh\bmr_\ell}{\bmtr}}^2\geq 0$, and the second one is due to \eqref{lemma:outer:eq1} and $(a+b)^2\geq 0$ for any $a,b\in\mR$, and to prove the last one we used the upper bound of $\wb\rho_1 - \wh\rho_\ell\leq 2\omega$ for each $\ell$ such that  $\wh\lambda_\ell>(1 - \delta)\wh\lambda_1$ and the conditions \eqref{lemma:outer:eigen} and \eqref{lemma:outer:initial} in \Cref{lemma:outer}.

 On the other hand, for each $\ell$ such that $\wh\lambda_\ell\leq (1 - \delta)\wh\lambda_1$, we have $\overline{\rho}_1 - \wh\lambda_\ell\geq \wh\rho_1 - \wh\lambda_\ell\geq \delta\wh\rho_1$. This implies
\beqr\label{lemma:outer:eq2}
\abs{\inner{\wh\bmr_\ell}{\bmtr}}  \leq (\delta\wh\rho_1)^{-1}\abs{\inner{\wh\bmr_\ell}{\bmr}} + \abs{\inner{\wh\bmr_\ell}{\beps}} \leq (\delta\wh\rho_1)^{-1}\abs{\inner{\wh\bmr_\ell}{\bmr}} + \varepsilon.
\eeqr
This, together with \eqref{lemma:outer:eq1}, ensures that
\beqrs
\abs{\inner{\wh\bmr_\ell}{\bmtr}}/\norm{\bmtr} & \leq & \{8\omega/(\delta\wh\rho_1)\}\abs{\inner{\wh\bmr_\ell}{\bmr}}/\norm{\bmr} + 8\omega\varepsilon
\eeqrs
The second claim follows by combining \eqref{lemma:outer:lower_bound} with 
\beqrs
\sum_{\ell\colon \wh\lambda_\ell\leq (1 - \delta)\wh\lambda_1} \abs{\inner{\wh\bmr_\ell}{\bmtr}}^2 & \leq &  2(\delta\wh\rho_1)^{-2}\sum_{\ell\colon \wh\lambda_\ell\leq (1 - \delta)\wh\lambda_1} \abs{\inner{\wh\bmr_\ell}{\bmr}}^2 + 2\sum_{\ell\colon \wh\lambda_\ell\leq (1 - \delta)\wh\lambda_1}\abs{\inner{\wh\bmr_\ell}{\beps}}^2\\
& \leq & 2(\delta\wh\rho_1)^{-2}\sum_{\ell\colon \wh\lambda_\ell\leq (1 - \delta)\wh\lambda_1} \abs{\inner{\wh\bmr_\ell}{\bmr}}^2 + 2\varepsilon^2,
\eeqrs
where the first inequality is due to $(a + b)^2\leq 2a^2 + 2b^2$ and \eqref{lemma:outer:eq2}, and the second inequality is because $\norm{\beps}\leq \varepsilon$.
\end{proof}
\subsection{Proof of Lemma \ref{lemma:inner}}
\begin{proof}
It follows that
\beqrs
\bmtr^{(t +1)} = \wh\bmr^{(t +1)}_{j} - \M_{\rhob_1}(\M_{\rhob_1}^{-1}\wh\bmr^{(t + 1)}_j - \wh\bmr^{(t)}),\\
\wh\bmr^{(t +1)}_{j + 1} = \wh\bmr^{(t +1)}_{j} - \M_{\rhob_1, 1}(\M_{\rhob_1}^{-1}\wh\bmr^{(t + 1)}_j - \wh\bmr^{(t)}).
\eeqrs
Taking the difference of the equations above, we have
\beqrs
  \norm{\wh\bmr^{(t +1)}_{j + 1} - \bmtr^{(t +1)}} & = & \norm{(\M_{\rhob_1} - \M_{\rhob_1, 1})(\M_{\rhob_1}^{-1}\wh\bmr^{(t + 1)}_j - \wh\bmr^{(t)})}\\
  & = & \norm{(\I - \M_{\rhob_1, 1}\M_{\rhob_1}^{-1})(\wh\bmr^{(t + 1)}_j - \M_{\rhob_1}\wh\bmr^{(t)})}\\
  & \leq & \norm{\I - \M_{\rhob_1, 1}\M_{\rhob_1}^{-1}}\norm{\bmr^{(t + 1)}_j - \bmtr^{(t + 1)}}.
\eeqrs
The first term on the right-hand side of the above display is bounded as 
\beqrs
  & & \norm{\I - \M_{\rhob_1, 1}\M_{\rhob_1}^{-1}}\\
  & = & \Norm{\I - \begin{pmatrix}
  	\bhSigx^{1/2} & \\
  	& \bhSigy^{1/2}
  \end{pmatrix}
    \begin{pmatrix}
  	\rhob_1\bhSig_{\x,1} & -\bhSig_{\x, \y,1} \\
  	-\bhSig_{\x, \y,1}\trans & \rhob_1\bhSig_{\y,1}
  \end{pmatrix}^{-1}
    \begin{pmatrix}
  	\rhob_1\bhSig_{\x} & -\bhSig_{\x, \y} \\
  	-\bhSig_{\x, \y}\trans & \rhob_1\bhSig_{\y}
  \end{pmatrix}
      \begin{pmatrix}
  	\bhSigx^{1/2} & \\
  	& \bhSigy^{1/2}
  \end{pmatrix}^{-1}}\\
  & = & \Norm{\I - 
      \begin{pmatrix}
  	\rhob_1\bhSig_{\x,1} & -\bhSig_{\x, \y,1} \\
  	-\bhSig_{\x, \y,1}\trans & \rhob_1\bhSig_{\y,1}
  \end{pmatrix}^{-1}
    \begin{pmatrix}
  	\rhob_1\bhSig_{\x} & -\bhSig_{\x, \y} \\
  	-\bhSig_{\x, \y}\trans & \rhob_1\bhSig_{\y}
  \end{pmatrix}
  }\\
  & = & \norm{\I - 
      \bhH_1^{-1}\bhH}\leq  \norm{\bhH_1^{-1}}\norm{\bhH_1 - \bhH}\leq  \{2/(\gamma\omega)\}\norm{\bhH_1 - \bhH},
\eeqrs
with probability approaching one, where the last inequality is by the fact that 
\beqrs
  \norm{\bhH_1^{-1}}^{-1} & = & \lambda_{\min}(\bhH_1) = \lambda_{\min}\left\{\begin{pmatrix}
  	\rhob_1\bhSig_{\x,1} & -\bhSig_{\x, \y,1} \\
  	-\bhSig_{\x, \y,1}\trans & \rhob_1\bhSig_{\y,1}
  \end{pmatrix}\right\}\\
  & = & \lambda_{\min}\left\{
  \begin{pmatrix}
  	\bhSig_{\x, 1}^{1/2} & \\
  	& \bhSig_{\y,1}^{1/2}
  \end{pmatrix}\begin{pmatrix}
  	\rhob_1\I & -\wh\bT_1 \\
  	-\wh\bT_1\trans & \rhob_1\I
  \end{pmatrix}
  \begin{pmatrix}
  	\bhSig_{\x, 1}^{1/2} & \\
  	& \bhSig_{\y,1}^{1/2}
  \end{pmatrix}
  \right\}\\
  & \geq & \gamma/2
  \lambda_{\min}\left\{
  \begin{pmatrix}
  	\rhob_1\I & -\wh\bT_1 \\
  	-\wh\bT_1\trans & \rhob_1\I
  \end{pmatrix}
  \right\}\\
  & \geq & \gamma/2\{\lambda_{\min}(\rhob_1\I - \wh\bC) - \norm{\wh\bT_1 - \wh\bT}\}\\
  &\geq & \gamma/2(\omega - \norm{\wh\bT_1 - \wh\bT})\geq \gamma\omega/4,
\eeqrs
and 
\beqrs
& & \norm{\bhH_1 - \bhH} \defn \kappa \\
& = &  \Norm{\begin{pmatrix}
  	\rhob_1\bhSig_{\x,1} & -\bhSig_{\x, \y,1} \\
  	-\bhSig_{\x, \y,1}\trans & \rhob_1\bhSig_{\y,1}
  \end{pmatrix} - \begin{pmatrix}
  	\rhob_1\bhSig_{\x} & -\bhSig_{\x, \y} \\
  	-\bhSig_{\x, \y}\trans & \rhob_1\bhSig_{\y}
  \end{pmatrix}}\\
 & \leq & \Norm{\begin{pmatrix}
  	\rhob_1\bhSig_{\x,1} &  \\
  	 & \rhob_1\bhSig_{\y,1}
  \end{pmatrix} - \begin{pmatrix}
  	\rhob_1\bhSig_{\x} &  \\
  	 & \rhob_1\bhSig_{\y}
  \end{pmatrix}}\\
  & &  + 
  \Norm{\begin{pmatrix}
  	  & -\bhSig_{\x, \y,1} \\
  	-\bhSig_{\x, \y,1}\trans &  
  \end{pmatrix} - \begin{pmatrix}
  	  & -\bhSig_{\x, \y} \\
  	-\bhSig_{\x, \y}\trans &  
  \end{pmatrix}}\\
  & \leq & 2\max(\norm{\bhSigx - \bhSig_{\x,1}}, \norm{\bhSigy - \bhSig_{\y,1}}, \norm{\bhSigxy - \bhSig_{\x, \y,1}}).
\eeqrs

%
\end{proof}

\vskip 0.2in
\bibliography{distCCA_refs_zotero}

\end{document}